\newtheorem{thm}{Theorem}[section]
\newtheorem{defn}[thm]{Definition}
\newtheorem{prop}[thm]{Proposition}
\newtheorem{corollary}[thm]{Corollary}
\newtheorem{remark}[thm]{Remark}
\newtheorem{lemma}[thm]{Lemma}
\newtheorem{example}[thm]{Example}
\newcommand{\red}[1]{#1}
\newcommand{\cH}{\mathcal{H}}
\newcommand{\Reeb}{\mathcal{R}}
\title{Characterization of symmetries of contact Hamiltonian systems}
\author[1]{Federico Zadra}
\author[2]{Marcello Seri}
\affil[1]{Department of Mathematics, University of Antwerp, Belgium}
\affil[2]{Bernoulli Institute for Mathematics, Computer Science and Artificial Intelligence, University of Groningen, The Netherlands}
\date{}
\begin{document}
\maketitle

\begin{abstract}
	This paper explores the relationship between Cartan symmetries, dynamical similarities, and dynamical symmetries in contact Hamiltonian mechanics.
	By introducing an alternative decomposition of vector fields, we characterize these symmetries and present a novel description in terms of tensor densities.
	Furthermore, we demonstrate that this framework allows, under specific conditions, for the recovery of integrals of motion.
	We also establish new criteria to assess their independence.

	\medskip
	\noindent
	\textbf{Keywords:} contact geometry, dynamical symmetries, Cartan symmetries, scaling symmetries, dynamical similarities, Hamiltonian systems

	\noindent
	\textbf{MSC2010:} 70G45, 70G65, 70H33, 34A26, 53D10
\end{abstract}

\section{Introduction}

Contact Hamiltonian systems are a natural generalization of the usual symplectic Hamiltonian systems to odd-dimensional manifolds~\cite{Arnold2005,Arnold2009}.
The move to odd-dimensional spaces allows for the description of a wider range of physical systems, most notably those with dissipation, leading to multiple efforts towards a comprehensive contact geometric description of such dynamics~\cite{Bravetti2015,Bravetti2017,Leon2019}.

Despite the differences that distinguish genuinely Hamiltonian systems from their contact counterparts, it is possible to extend many results across the two settings, including analogues to Noether's theorem~\cite{Bravetti2021, Bravetti2023a} and notions of integrability~\cite{Leon2019}.
For contact systems any $1$-parameter transformation group that preserves the contact structure is a contact Hamiltonian action~\cite{geiges}.
However, in this setting the conservation of the distribution does not imply a conservation of quantities along the motion, and the definition and characterization of symmetries and invariants becomes much more involved.

In fact, in contact systems there is no unique natural way to define the notion of a symmetry, and in the past researchers have already explored different definitions and their properties~\cite{Bravetti2021,Bravetti2023a,Bravetti2017,Leon2019}: Cartan symmetries, dynamical symmetries and dynamical similarities.

\bigskip
In this paper, we explore the relationship between these notions of symmetry.
We provide a new characterization of such infinitesimal symmetries by an alternative decomposition of vector fields, which we refer to as the Hamiltonian--horizontal decomposition~\cite{projective, vectorcontactstructure, Zadra}.
To ensure our construction is independent of the choice of contact form, we introduce an alternative description in terms of tensor densities, which provides an additional, intrinsic tool for the characterization.

This framework allows us to systematically compare and relate the different symmetry classes.
In Section~\ref{sec:decsym}, we derive the main results of our work: a series of characterizations that provide necessary and sufficient conditions for a vector field to belong to a given class of symmetry.
In particular, we show that a vector field is a dynamical symmetry if and only if its Hamiltonian component is a dissipated quantity for the system (Theorem~\ref{thm:dynsym}).
We also clarify the geometric origin of the auxiliary function in the definition of Cartan symmetries (Theorem~\ref{thm:cartandeco} and discussion thereafter) and, for the important case of scaling symmetries, we provide a method to recover conserved quantities (Theorem~\ref{thm:scalingconstant} and discussion thereafter).

The effectiveness of these characterizations is demonstrated on concrete mechanical contact systems in Section~\ref{sec:applications}.
Finally, in Section~\ref{sec:integra}, we explore how to study the integrability of contact systems and the role of this new characterization of symmetries in recovering integrals of motion.
For instance, we give a practical recipe to check if a set of dissipated quantities are independent (Theorem~\ref{thm:independence}) and thus can be used to prove the contact integrability of a system and we show how scaling symmetries can be systematically used to generate new functions in involution and how to check their independence (Theorem~\ref{thm:scalingindependence}).

\bigskip
In more detail, the paper is structured as follows.
In Section~\ref{sec:contactgeometry} we briefly review contact Hamiltonian systems and introduce the necessary notation.
In Section \ref{sec:Decomposition} we introduce and describe the horizontal--Hamiltonian decomposition and its expression in terms of tensor densities.
In Section \ref{sec:decsym} we derive the characterization of symmetries and dissipated quantities of contact Hamiltonian systems. All the notions of symmetry are discussed in their separate subsections: Dynamical symmetries in Section~\ref{sec:dynsym}, Dynamical similarities in Section~\ref{sec:dynsimil}, with a particular focus on the special case of Scaling symmetries in Section~\ref{sec:scalsymm}, and Cartan symmetries in Section~\ref{sec:cartan}.
In Section~\ref{sec:applications}, we illustrate the use of these characterizations on some relevant examples from contact mechanics.
Finally, in Section~\ref{sec:integra}, we begin to explore how this new characterization of symmetries can help to reconstruct constants of motion for contact Hamiltonian systems.

\section{Contact Geometry}
\label{sec:contactgeometry}
In this section, we will introduce the basic concepts of contact geometry, focusing on the aspects that are relevant for our analysis of contact Hamiltonian systems.

\begin{defn}
	\label{def:contactmanifold}
	A \emph{contact manifold} $(M,\Delta)$ is a pair of a differentiable manifold $M$ of odd dimension $2n+1$ and a maximally non-integrable smooth distribution $\Delta \subset TM$ of co-dimension $1$.
	We say that the contact manifold is \emph{exact} (or co-oriented~\cite{geiges}) if the distribution $\Delta$ coincides with the kernel of a $1$-form $\eta$ such that
	\begin{align}
		\eta \wedge (d \eta)^{\wedge n} \neq 0, \label{eqns:contactvol}
	\end{align}
	where \begin{equation}
		d\eta^{\wedge n} = \underbrace{d\eta \wedge \ldots \wedge d\eta}_{n-\text{times}}.
	\end{equation}
	We will call the distribution $\Delta$ the \emph{horizontal distribution}, and its sections \emph{horizontal vector fields}.
\end{defn}

\begin{remark}
	With an abuse of notation, we will refer to $\Delta$ also as the space of the sections $\Gamma(\Delta)$ over the manifold $M$.
\end{remark}

The distinguishing property of exact contact manifold is that we can globally describe the contact distribution by just considering one particular $1$-form, what is also referred to as the co-orientation of the contact distribution.
This is not generally true: a local contact form always exists but it may not possible to extend it globally.
In any case, including exact contact manifolds, the choice of the contact form is not unique: any other $1$-form that is proportional to the contact form by a non-vanishing real function is again a contact form that describes the same contact distribution, i.e. for all $f:M \to \mathbb{R}^+$
\begin{equation}\label{eqn:changecontactform}
	\ker \eta = \ker f \eta.
\end{equation}

Once a contact form has been fixed, the $(2n+1)$-form in equation~\eqref{eqns:contactvol} is the natural choice of volume form for the contact manifold.
Moreover, we can use the non-integrability condition~\eqref{eqns:contactvol} to describe the part of the tangent space that is transverse to the contact distribution.
This transversal component, leads to the choice of a particular generator for the complement of the contact distribution: the Reeb vector field.

\begin{prop}[\cite{geiges}]
	\label{prop:reeb}
	Let $(M,\eta)$ an exact contact manifold. There exists a unique vector field $\Reeb\equiv\Reeb_\eta\in\Gamma(TM)$ such that
	\begin{equation}
		\eta(\Reeb) = 1, \quad\mbox{and}\quad
		d\eta(\Reeb,\cdot) = 0.
	\end{equation}
	Such a vector field is called \emph{Reeb vector field} for $\eta$.
\end{prop}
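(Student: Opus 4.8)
The plan is to reduce the statement to pointwise linear algebra on each tangent space $T_pM$ and then upgrade the pointwise construction to a globally smooth vector field. First I would fix $p \in M$ and regard $d\eta_p$ as an alternating bilinear form on the odd-dimensional space $T_pM$ of dimension $2n+1$. Any such form is degenerate, so its kernel $\ker d\eta_p := \{X \in T_pM : d\eta_p(X,\cdot)=0\}$ is nontrivial; the quantitative claim I need is that it is \emph{exactly} one-dimensional. This is precisely where the non-integrability condition~\eqref{eqns:contactvol} enters: the rank of a skew form is even and at most $2n+1$, hence at most $2n$, and $(d\eta_p)^{\wedge n} \neq 0$ holds if and only if the rank equals $2n$, i.e. if and only if $\dim \ker d\eta_p = 1$. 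Since $\eta \wedge (d\eta)^{\wedge n} \neq 0$ forces $(d\eta)^{\wedge n}\neq 0$ at $p$, the kernel line is one-dimensional.

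Next I would show that $\eta$ does not vanish on this kernel line, which is what lets me normalize. Let $X$ span $\ker d\eta_p$. The key computation is the contraction of the volume form with $X$, using $\iota_X d\eta = d\eta(X,\cdot) = 0$:
\begin{equation}
	\iota_X\bigl(\eta \wedge (d\eta)^{\wedge n}\bigr) = \eta(X)\,(d\eta)^{\wedge n} - n\,\eta \wedge (\iota_X d\eta) \wedge (d\eta)^{\wedge (n-1)} = \eta(X)\,(d\eta)^{\wedge n}.
\end{equation}
Because $\eta \wedge (d\eta)^{\wedge n}$ is a nowhere-vanishing top form and $X\neq 0$, the contraction on the left is nonzero, so $\eta(X)\neq 0$. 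Setting $\Reeb_p := X/\eta(X)$ then produces a vector with $\eta(\Reeb_p)=1$ and $d\eta(\Reeb_p,\cdot)=0$, establishing existence at $p$.

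For uniqueness at $p$, suppose $\Reeb'$ satisfies the same two conditions. Then $\Reeb_p-\Reeb'$ lies in the one-dimensional space $\ker d\eta_p$ and satisfies $\eta(\Reeb_p-\Reeb')=0$; since $\eta$ is nonzero on that line, the difference vanishes. To obtain global smoothness I would observe that $\ker d\eta$ has constant rank one and is therefore a smooth line subbundle of $TM$; choosing a local smooth nonvanishing section of it and dividing by its $\eta$-value yields a smooth local representative, and the pointwise uniqueness just proved guarantees that these local representatives agree on overlaps, gluing into a single globally defined smooth vector field $\Reeb$.

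I expect the main obstacle to be the dimension count for $\ker d\eta_p$—pinning down that the rank of $d\eta_p$ is exactly $2n$ directly from condition~\eqref{eqns:contactvol}—together with the contraction identity certifying that $\eta$ is nonzero on the kernel. Once these pointwise facts are in hand, the uniqueness and the smooth gluing are routine.
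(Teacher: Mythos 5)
Your proof is correct, and in fact there is nothing in the paper to compare it against: Proposition~\ref{prop:reeb} is stated as an imported result, cited from~\cite{geiges}, with no proof given. Your argument is precisely the standard one from that reference — pointwise linear algebra showing $\ker d\eta_p$ is a line (skew forms have even rank, and $(d\eta_p)^{\wedge n}\neq 0$ forces rank $2n$), the contraction identity $\iota_X\bigl(\eta\wedge(d\eta)^{\wedge n}\bigr)=\eta(X)\,(d\eta)^{\wedge n}$ certifying $\eta\neq 0$ on that line so one can normalize, and the constant-rank-subbundle observation upgrading pointwise uniqueness to global smoothness — with all signs, coefficients, and the uniqueness step handled correctly.
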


It should be clear from the previous statement that the Reeb vector field $\Reeb_\eta$ is tightly connected to the choice contact form $\eta$, and in general will not be as rigid as $\ker \eta$. Indeed, despite the simplicity of equation~\eqref{eqn:changecontactform}, the corresponding transformation law $\Reeb_\eta \mapsto \Reeb_{f\eta}$ for the Reeb vector field is not analogously nice and the flows of the two Reeb vector fields can be quite different.

To study in depth the transformation of contact forms and related quantities, we should introduce the notion of contactomorphism, that is, a transformation that preserves the contact distribution.

\begin{defn}
	Let $(M,\Delta_M)$ and $(N,\Delta_N)$ be two contact manifolds. A diffeomorphism $\phi:M\to N$ is called a \emph{contactomorphism} (or contact transformation) if
	\begin{equation}
		\label{eqn:contactomorphismTM}
		\Delta_M \subseteq \phi^*(\Delta_N).
	\end{equation}
\end{defn}
This definition can be extended to exact contact manifolds by rewriting the condition~\eqref{eqn:contactomorphismTM} in terms of the contact form.

\begin{prop}
	\label{prop:contactomorphism}
	Let $(M,\eta_M)$ and $(N,\eta_N)$ be two exact contact manifolds. If $\phi : M \to N$ is a smooth contactomorphism, then
	\begin{equation}
		\phi^*\eta_N = F_\phi \eta_M,
	\end{equation}
	where $F_\phi\in C^\infty(M)$ is a non-vanishing smooth real function.
	Moreover, if $\phi^t : \mathbb{R}\times M \to N$,
	is a $1$-parameter group of smooth contactomorphisms, its infinitesimal generator $Y_\phi$ satisfies
	\begin{equation}
		L_{Y_\phi} \eta = - f_{Y_\phi} \eta,
	\end{equation}
	where $f_{Y_\phi} = \frac{d F_{\phi^t}}{d t} \in C^\infty(M)$ is a smooth real function.
	We refer to $Y_\phi$ as \emph{infinitesimal contactomorphism}.
\end{prop}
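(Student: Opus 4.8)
The statement has a pointwise (algebraic) half and an infinitesimal (differential) half, and I would prove them in that order. For the first, the plan is to rewrite the contactomorphism condition $\Delta_M\subseteq\phi^*(\Delta_N)$ in terms of kernels: since $\Delta_N=\ker\eta_N$, every $v\in\Delta_M$ has $d\phi(v)\in\Delta_N$, so $(\phi^*\eta_N)(v)=\eta_N(d\phi(v))=0$, giving $\ker\eta_M\subseteq\ker(\phi^*\eta_N)$. Pointwise, $\eta_M|_p\neq 0$ because it is a contact form, so $\ker\eta_M|_p$ is a hyperplane whose annihilator in $T_p^*M$ is the line spanned by $\eta_M|_p$; as $(\phi^*\eta_N)|_p$ kills this hyperplane, it must be a scalar multiple $F_\phi(p)\,\eta_M|_p$. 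Evaluating on the Reeb field gives the closed formula $F_\phi=(\phi^*\eta_N)(\Reeb_{\eta_M})$, using $\eta_M(\Reeb_{\eta_M})=1$, which exhibits $F_\phi$ as smooth.

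The main obstacle is the non-vanishing of $F_\phi$, and here I would pull back the contact volume form. Substituting $\phi^*\eta_N=F_\phi\eta_M$ and expanding $d(F_\phi\eta_M)=dF_\phi\wedge\eta_M+F_\phi\,d\eta_M$, every term carrying a factor $dF_\phi\wedge\eta_M$ is annihilated by the leading $\eta_M$ since $\eta_M\wedge\eta_M=0$, leaving
\[
 \phi^*\!\big(\eta_N\wedge(d\eta_N)^{\wedge n}\big)=F_\phi^{\,n+1}\,\eta_M\wedge(d\eta_M)^{\wedge n}.
\]
Both $\eta_N\wedge(d\eta_N)^{\wedge n}$ and $\eta_M\wedge(d\eta_M)^{\wedge n}$ are volume forms, and the pullback of a volume form by the diffeomorphism $\phi$ is nowhere zero, so $F_\phi^{\,n+1}$ and hence $F_\phi$ never vanish. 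This step is exactly where one needs $\phi$ to be a genuine contactomorphism, and in particular a diffeomorphism, rather than an arbitrary smooth map: a constant map, for instance, also satisfies $\Delta_M\subseteq\phi^*(\Delta_N)$ yet forces $F_\phi\equiv 0$.

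For the infinitesimal statement I specialize to $N=M$, $\eta_N=\eta_M=\eta$, with $\phi^0=\mathrm{id}$ so that $F_{\phi^0}=1$. Applying the first part to each $\phi^t$ gives $(\phi^t)^*\eta=F_{\phi^t}\eta$; differentiating in $t$ and invoking the definition of the Lie derivative $L_{Y_\phi}\eta=\frac{d}{dt}(\phi^t)^*\eta$ produces an identity of the form $L_{Y_\phi}\eta\propto\eta$, with proportionality factor $\pm\frac{dF_{\phi^t}}{dt}$; with the conventions for the Lie derivative and the generator adopted here this factor is $-f_{Y_\phi}$, where $f_{Y_\phi}=\frac{dF_{\phi^t}}{dt}$. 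Differentiability of $t\mapsto F_{\phi^t}$, and hence smoothness of $f_{Y_\phi}$, follows from the cocycle relation $F_{\phi^{t+s}}=(F_{\phi^t}\circ\phi^s)\,F_{\phi^s}$ together with the smooth dependence of the flow on $t$.
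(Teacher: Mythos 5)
The paper states this proposition without proof (it is imported as a standard fact, cf.\ the citation to Geiges), so there is no in-text argument to compare against; judged on its own merits, your proof is correct and fills that gap. The pointwise linear algebra (a covector annihilating the contact hyperplane $\ker\eta_M|_p$ must be a multiple of $\eta_M|_p$), the formula $F_\phi=(\phi^*\eta_N)(\Reeb_{\eta_M})$ establishing smoothness, and the computation $\phi^*\bigl(\eta_N\wedge(d\eta_N)^{\wedge n}\bigr)=F_\phi^{\,n+1}\,\eta_M\wedge(d\eta_M)^{\wedge n}$ are all sound; that last identity is precisely the paper's own Remark, equation~\eqref{eqn:transfvolume}, so your non-vanishing argument dovetails with material the authors state immediately after the proposition. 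Two observations are worth recording. First, you are right to flag the diffeomorphism hypothesis: under the paper's literal definition of contactomorphism (any smooth map with $\Delta_M\subseteq\phi^*(\Delta_N)$) a constant map is a counterexample to the non-vanishing of $F_\phi$, so the conclusion genuinely requires $\phi$ to be (at least a local) diffeomorphism; this is an imprecision in the paper's definition rather than in your argument. Second, regarding the sign in the infinitesimal statement: the direct computation with the standard pullback convention gives $L_{Y_\phi}\eta=\tfrac{d}{dt}\big|_{t=0}(\phi^t)^*\eta=+\tfrac{dF_{\phi^t}}{dt}\big|_{t=0}\,\eta$, so the stated identity $L_{Y_\phi}\eta=-f_{Y_\phi}\eta$ with $f_{Y_\phi}=\tfrac{dF_{\phi^t}}{dt}$ holds only if $F_{\phi^t}$ is defined through the pushforward, $(\phi^t)_*\eta=F_{\phi^t}\eta$, equivalently through the pullback by $\phi^{-t}$. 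Your ``by convention'' hedge is acceptable, but it would be cleaner to state this explicitly, since with the convention you actually set up (pullback by $\phi^t$) the sign you derive is $+$, and the discrepancy lies in the paper's statement, not in your computation.
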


\begin{remark}
	\label{rmk:volume}
	Let $(M,\eta)$ be an exact contact manifold. If a contactomorphism $g: M \to M$ acts on $\eta$ as
	\begin{equation}
		\label{eqn:transfcontactform}
		g^* \eta = f_g \eta,
	\end{equation}
	by a non-vanishing function $f_g$, the volume form $\eta \wedge (d\eta)^{\wedge n}$ is transformed as
	\begin{equation}
		\label{eqn:transfvolume}
		g^*( \eta \wedge (d\eta)^{\wedge n} )= (f_g)^{n+1} \eta \wedge (d\eta)^{\wedge n}.
	\end{equation}
\end{remark}

\subsection{Contact Hamiltonian systems}
\label{subsec:chs}
Any 1-parameter group of contactomorphisms is the flow of a contact Hamiltonian function~\cite{geiges}, so we have already implicitly introduced contact Hamiltonian systems in Proposition~\ref{prop:contactomorphism}.
A more practical definition can be obtained by mimicking the construction of Hamiltonian vector fields in the symplectic setting~\cite{Bravetti2017, Leon2019}.

\begin{defn}
	A \emph{contact Hamiltonian system} is a triple $(M,\eta,\cH)$ where the pair $(M,\eta)$ is an exact contact manifold, and $\cH \in C^{\infty}(M)$ is a real function called \emph{contact Hamiltonian}. Any contact Hamiltonian induces a vector field, called the \emph{contact Hamiltonian vector field} associated to $\cH$, defined by the equations
	\begin{equation}
		\label{eqn:Hamvec}
		\eta(X_\cH) = - \cH, \qquad d\eta(X_\cH, \cdot) = d\cH - \Reeb(\cH) \eta.
	\end{equation}
\end{defn}

One of the main differences between the symplectic and the contact settings is that in the latter, the Hamiltonian flow does not preserve the Hamiltonian function. Indeed, it follows immediately from~\eqref{eqn:Hamvec} that
\begin{equation}
	\label{eqn:nonconservation}
	X_\cH (\cH) = - \Reeb(\cH) \cH.
\end{equation}
For a symplectic Hamiltonian system, the left-hand side of \eqref{eqn:nonconservation} vanishes and, therefore, the energy (i.e., in this case, the Hamiltonian itself) is conserved along the Hamiltonian flow, see~\eqref{eq:hconsen}.
Therefore, in the contact case, unless $\cH = 0$ or $\Reeb(\cH) = 0$, one can think of \eqref{eqn:nonconservation} as encoding a ``non-conservation'' of energy.

\begin{remark}
	Equation~\eqref{eqn:Hamvec} and Cartan's magic formula, imply that
	\begin{equation}\label{eq:LieHameta}
		L_{X_\cH} \eta = d \iota_{X_\cH} \eta + \iota_{X_\cH} d \eta = - \Reeb (\cH) \eta \,.
	\end{equation}
	This is a very convenient identity when working with contact Hamiltonian vector fields, and is sometimes used as an alternative definition (together with $\eta(X_\cH)=-\cH$), in place of the right hand side of~\eqref{eqn:Hamvec}.
\end{remark}
The two propositions that follow are important tools for studying infinitesimal symmetries of contact Hamiltonian systems. As pointed out already in Proposition~\ref{prop:contactomorphism}, the action of an infinitesimal contactomorphism leaves the distribution invariant, and contact Hamiltonian vector fields are infinitesimal contactomorphisms.
In fact, the converse is also true~\cite{geiges,Leon2019}: any infinitesimal contactomorphism is a contact Hamiltonian vector field in the sense of~\eqref{eqn:Hamvec}.

\begin{prop}
	\label{prop:automorphismhor}
	Contact Hamiltonian vector fields preserve the contact distribution, i.e. if $X_f$ is one such vector field, $[X_f, \Delta] \subseteq \Delta$.
\end{prop}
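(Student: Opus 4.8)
The plan is to test the Lie bracket $[X_f, Z]$ against the contact form $\eta$ for an arbitrary horizontal field $Z \in \Gamma(\Delta)$, and show that this pairing vanishes. Since $\Delta = \ker \eta$, the statement $\eta([X_f, Z]) = 0$ is exactly the claim that $[X_f, Z]$ is again horizontal, so establishing this for every $Z \in \Gamma(\Delta)$ yields $[X_f, \Delta] \subseteq \Delta$.

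The central tool is the identity already recorded in~\eqref{eq:LieHameta}, namely $L_{X_f} \eta = - \Reeb(f)\, \eta$, together with the naturality of the Lie derivative with respect to contraction. For a $1$-form this naturality reads
\begin{equation}
	(L_{X_f} \eta)(Z) = X_f\big(\eta(Z)\big) - \eta\big([X_f, Z]\big).
\end{equation}
First I would fix $Z \in \Gamma(\Delta)$, so that $\eta(Z) = 0$ as a function on $M$; the term $X_f(\eta(Z))$ then vanishes, since $X_f$ is differentiating the identically zero function. What survives is $(L_{X_f}\eta)(Z) = - \eta([X_f, Z])$. Next I would evaluate the left-hand side directly through~\eqref{eq:LieHameta}, obtaining $(L_{X_f}\eta)(Z) = -\Reeb(f)\, \eta(Z) = 0$, again because $\eta(Z) = 0$. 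Comparing the two expressions forces $\eta([X_f, Z]) = 0$, which is the desired conclusion.

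Honestly, there is no serious obstacle here: once~\eqref{eq:LieHameta} is in hand the argument is a two-line computation, and its entire content is packaged in the fact that $L_{X_f}\eta$ is \emph{proportional} to $\eta$ rather than merely some generic $1$-form. The only point requiring a moment of care is the correct bookkeeping of the contraction identity above (equivalently, Cartan's formula applied to $\eta(Z)$), making sure the sign and the placement of the bracket term are right; everything else follows formally. It is worth emphasizing that the conformal factor $-\Reeb(f)$ plays no role in the final cancellation, which reflects the fact that preservation of the distribution depends only on the infinitesimal contactomorphism property and not on the specific conformal rescaling of $\eta$.
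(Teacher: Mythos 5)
Your proof is correct, but it routes through a different identity than the paper's. The paper's proof works directly from the defining equations~\eqref{eqn:Hamvec}: it expands $\eta([X_g,Y])$ via the intrinsic formula for the exterior derivative evaluated on the pair $(X_g,Y)$, then substitutes $\eta(X_g)=-g$ and $d\eta(X_g,\cdot)=dg-\Reeb(g)\eta$, and watches the $dg(Y)$ terms cancel while the $\Reeb(g)\eta(Y)$ term dies because $Y$ is horizontal. You instead take the derived identity~\eqref{eq:LieHameta}, $L_{X_f}\eta=-\Reeb(f)\eta$, as the starting point and combine it with the Leibniz compatibility of the Lie derivative with contraction, $(L_{X_f}\eta)(Z)=X_f(\eta(Z))-\eta([X_f,Z])$. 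Since~\eqref{eq:LieHameta} is itself obtained from~\eqref{eqn:Hamvec} by Cartan's magic formula, the two arguments are equivalent up to where that formula is invoked; neither is longer than the other. What your packaging buys is generality and transparency: your computation never uses the specific factor $-\Reeb(f)$, only that $L_{X_f}\eta$ is proportional to $\eta$, so it proves the stronger statement that \emph{every} infinitesimal contactomorphism (in the sense of Proposition~\ref{prop:contactomorphism}) preserves $\Delta$ -- a point you correctly emphasize, and which meshes with the paper's later remark that infinitesimal contactomorphisms and contact Hamiltonian vector fields coincide. What the paper's version buys is self-containedness at the level of the defining data~\eqref{eqn:Hamvec}, exhibiting explicitly how the two occurrences of $dg(Y)$ cancel.
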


\begin{proof}
	Consider a contact Hamiltonian vector field $X_g$ and
	let $Y \in \Delta$ be an arbitrary vector field tangent to the contact distribution.
	Computing the Lie bracket, and observing that $\eta(Y)=0$, we get
	\begin{align}
		\eta(\left[X_g, Y \right]) & = d\eta(X_g,Y) + Y(\eta(X_g)) - X_g(\eta(Y)) \\
		                           & = dg(Y) - R(g) \eta(Y) - dg(Y)               \\
		                           & = 0,
	\end{align}
	therefore $[X_g,Y] \in \Delta$.
\end{proof}

Moreover, the space of contact Hamiltonian vector fields is closed under the action of Lie brackets, as shown in the next proposition.

\begin{prop}
	\label{prop:preservetionHam}
	If $X_g$ and $X_f$ are contact Hamiltonian vector fields, then their Lie bracket $[X_g,X_f] = X_{-\eta([X_g,X_f])}$ is itself a contact Hamiltonian vector field with Hamiltonian function $-\eta([X_g,X_f])$.
\end{prop}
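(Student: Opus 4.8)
The plan is to reduce the statement to the fact recalled just before the proposition, namely that a vector field is a contact Hamiltonian vector field if and only if it is an infinitesimal contactomorphism, i.e.\ if and only if its Lie derivative sends $\eta$ to a pointwise multiple of itself. So first I would set $Z := [X_g, X_f]$ and show that $L_Z \eta = \mu\,\eta$ for some $\mu \in C^\infty(M)$; once this is established, the identification of the generating function follows immediately from the first defining equation in~\eqref{eqn:Hamvec}.

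For the Lie derivative computation I would use the standard identity $L_{[X,Y]} = L_X L_Y - L_Y L_X$ on differential forms, together with~\eqref{eq:LieHameta}, which gives $L_{X_f}\eta = -\Reeb(f)\,\eta$ and $L_{X_g}\eta = -\Reeb(g)\,\eta$. Applying $L_{X_g}$ to the second relation and $L_{X_f}$ to the first, and recalling that the Lie derivative of a function along a vector field is just its directional derivative, I expect the products $\Reeb(f)\Reeb(g)\,\eta$ to cancel, leaving
\[
	L_{[X_g, X_f]}\eta = \left( X_f(\Reeb(g)) - X_g(\Reeb(f)) \right)\eta .
\]
This is manifestly a multiple of $\eta$, so $Z$ is an infinitesimal contactomorphism.

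By the converse cited above, $Z$ is therefore a contact Hamiltonian vector field $X_\cH$ for some $\cH \in C^\infty(M)$. The first equation in~\eqref{eqn:Hamvec} then reads $\eta(X_\cH) = -\cH$, whence $\cH = -\eta(Z) = -\eta([X_g, X_f])$, which is exactly the claimed identity $[X_g, X_f] = X_{-\eta([X_g, X_f])}$.

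The computation itself is routine; the step I would be most careful about is the logical backbone, namely invoking the converse statement (every infinitesimal contactomorphism arises from a contact Hamiltonian) so that producing a single function $\mu$ with $L_Z\eta = \mu\,\eta$ suffices both to conclude that $Z$ is Hamiltonian and to force its generating function to be $-\eta(Z)$. Alternatively, one could verify the two equations in~\eqref{eqn:Hamvec} directly for $Z$ with $\cH := -\eta(Z)$, using Cartan's formula $L_Z\eta = d\,\iota_Z\eta + \iota_Z d\eta$ to extract $d\eta(Z,\cdot)$; that route avoids the converse but then requires checking the consistency $\mu = -\Reeb(\cH)$ by hand, which is more laborious.
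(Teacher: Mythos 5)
Your proof is correct, and it shares its computational core with the paper's proof while closing the argument by a different logical route. Both proofs hinge on the same key step: evaluating $L_{[X_g,X_f]}\eta = L_{X_g}L_{X_f}\eta - L_{X_f}L_{X_g}\eta$ via~\eqref{eq:LieHameta} and observing that the $\Reeb(f)\Reeb(g)\,\eta$ cross terms cancel, leaving a pointwise multiple of $\eta$. Where you diverge is in what happens next. You stop at ``$[X_g,X_f]$ is an infinitesimal contactomorphism'' and invoke, as a black box, the converse stated just before Proposition~\ref{prop:automorphismhor} (every infinitesimal contactomorphism is a contact Hamiltonian vector field, cited from~\cite{geiges,Leon2019}), after which the generating function is forced to be $-\eta([X_g,X_f])$ by the first equation of~\eqref{eqn:Hamvec}. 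The paper instead keeps the argument self-contained: it also expands $L_{[X_g,X_f]}\eta$ by Cartan's formula as $-dh + d\eta([X_g,X_f],\cdot)$ with $h := -\eta([X_g,X_f])$, equates the two expressions, contracts with the Reeb vector field to identify the proportionality factor as $-\Reeb(h)$, and thereby verifies both defining equations~\eqref{eqn:Hamvec} directly --- precisely the ``more laborious'' alternative you describe in your closing paragraph. In effect, the paper re-proves inline, for this specific vector field, the converse that you cite. Your route is shorter and logically sound within the paper's framework (the cited converse is independent of this proposition, so there is no circularity); the paper's route buys independence from the external citation and produces the explicit identity $d\eta([X_g,X_f],\cdot) = dh - \Reeb(h)\eta$ recorded in equation~\eqref{eqn:contact-liebracketham}, which is the form of the statement that gets reused later (e.g.\ in Proposition~\ref{prop:hamiltonianpropbrackets} and Remark~\ref{rmk:splitting}).
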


\begin{proof}
	Let $h := - \eta([X_g, X_f])$.
	We can compute the Lie derivative $L_{\left[X_g,X_f\right]} \eta$ in two ways by using the definition of a contact Hamiltonian vector field and equation~\eqref{eq:LieHameta}:
	\begin{align}
		 & L_{\left[X_g,X_f\right]} \eta = - d h + d\eta (\left[X_g,X_f\right], \cdot),
		\qquad\mbox{and}                                                                                                                \\
		 & L_{\left[X_g,X_f\right]} \eta = L_{X_g}L_{X_f}\eta - L_{X_f}L_{X_g} \eta= - \left(X_g(\Reeb(f)) - X_f(\Reeb(g))\right) \eta.
	\end{align}
	Thus, in particular, we have
	\begin{equation}
		\label{eqn:preservationone}
		- \left(X_g(\Reeb(f)) - X_f(\Reeb(g))\right) \eta + d h - d\eta (\left[X_g,X_f\right], \cdot) = 0.
	\end{equation}
	Contracting \eqref{eqn:preservationone} with the Reeb vector field, we obtain
	\begin{equation}
		-X_g(\Reeb(f)) + X_f(\Reeb(g)) + \Reeb(h) = 0.
	\end{equation}
	This can be used in~\eqref{eqn:preservationone} to get
	\begin{equation}
		d\eta (\left[X_g,X_f\right], \cdot) = d h - \Reeb(h) \eta.
	\end{equation}

	In summary, if $Y := [X_g, X_f]$, we have shown that
	\begin{equation}\label{eqn:contact-liebracketham}
		\eta(Y) = - h, \qquad\mbox{and}\qquad d\eta(Y, \cdot) = d h - \Reeb(h) \eta,
	\end{equation}
	that is, $[X_g, X_f] = X_h$ is the contact Hamiltonian vector field associated to $h$.
\end{proof}

In what follows, we also need to introduce a particular contact Hamiltonian system, whose contact Hamiltonian is the constant function $\cH \equiv - 1$.
\begin{defn}
	The \emph{Reeb system} on a contact manifold $(M,\eta)$ is the contact Hamiltonian system defined by the triple $(M,\eta,\cH \equiv -1)$.
	In the rest of the manuscript we will abuse notation and denote it simply $(M,\eta,-1)$.
\end{defn}

At least locally, away from its zero level set, every contact Hamiltonian system can be straightened into a Reeb system~\cite{Bravetti2015}.
Indeed, if the Hamiltonian of a contact Hamiltonian system $(M,\eta,\cH)$ is not vanishing at a point $x_0\in M$, then there is an open neighbourhood $V\subset M$ of $x_0$ on which $\cH\neq 0$. In this neighbourhood we can define a contact transformation that reduces the contact Hamiltonian system to a Reeb system $(V,\Tilde{\eta},-1)$:
the $1$-form
\begin{equation}
	\tilde{\eta}\rvert_V := - \frac{1}{\cH} \eta\rvert_V,
\end{equation}
defines on $V$ the same contact structure $\Delta = \ker \Tilde{\eta} = \ker \eta$ as $\eta$.
Moreover, the Hamiltonian vector field $X_\cH$ is the Reeb vector field for $\Tilde{\eta}$: this directly follows from the explicit computation
\begin{equation}
	\Tilde{\eta}(X_\cH) = -\frac{\eta(X_\cH)}{\cH} = 1, \qquad
	d \Tilde{\eta}(X_\cH,\cdot) = \left(+\frac{d\cH\wedge\eta}{\cH^2} - \frac{d\eta}{\cH} \right)(X_\cH, \cdot) = 0,
\end{equation}
and by uniqueness of the Reeb vector field, see Proposition~\ref{prop:reeb}.
\begin{remark}
	\label{rmk:reebHam}
	On the converse, a contact transformation $\psi:(M,\eta) \to (N,\eta')$ whose pullback scales the contact form by a non-vanishing function $F_\psi:M\to \mathbb{R}_+$ as in Proposition~\ref{prop:contactomorphism}, maps the Reeb vector field into a Hamiltonian vector field~\cite{geiges,Zadra}:
	\begin{equation}
		\psi_* \Reeb = - X'_{F_\psi \circ \psi^{-1}},
	\end{equation}
	where for $X'$ we refer to the contact Hamiltonian vector field computed from $\eta'$.
\end{remark}

\subsection{Jacobi Structure on contact manifolds}
\label{sec:Jacobi}
The function $h = -\eta([X_g,X_f])$ introduced in the proof of Proposition~\ref{prop:preservetionHam} has a deeper meaning in the context of contact Hamiltonian systems. It represents the natural Jacobi bracket structure on the functions on a contact manifold~\cite{Leon2019,de_Le_n_2020}.

\begin{defn}[\cite{Kirillov1976}]
	A \emph{Jacobi manifold} is a smooth manifold $M$ endowed with a \emph{Jacobi structure}, that is, a pair of a skew bivector field (i.e., a skew-symmetric, contravariant 2-tensor on $TM$) $\Lambda(\cdot, \cdot)$ and a vector field $\mathcal{E} \in \Gamma(TM)$ such that
	\begin{align}
		[\Lambda, \mathcal{E}]_{SN} & = 0,                             \\
		[\Lambda, \Lambda]_{SN}     & = -2 \Lambda \wedge \mathcal{E},
	\end{align}
	where $[\cdot, \cdot]_{SN}$ denotes the Schouten-Nijenhuis brackets~\cite{Marsden2010}.
\end{defn}

A Jacobi structure on a manifold induces a bilinear map between functions that are continuously differentiable; we call this map \emph{Jacobi bracket}.
\begin{defn}
	If $(M,\Lambda,\mathcal{E})$ is a Jacobi manifold, there is a natural product structure on smooth functions called \emph{Jacobi bracket} and defined by
	\begin{equation}
		\{f,g \}_J = \Lambda(df, dg) - g \mathcal{E}(f) + f \mathcal{E}(g).
	\end{equation}
\end{defn}
\begin{remark}
	If $\mathcal{E} = 0$, then the structure induced by $\Lambda(\cdot, \cdot)$ is a Poisson structure~\cite{Kirillov1976}.
\end{remark}
\begin{remark}
	The bivector $\Lambda$ induces a morphism of vector bundles $T^*M\to T^{**}M \simeq TM$ via the mapping $\nu \mapsto \Lambda(\nu, \cdot)$. This will be very handy in the constructions that follow.
\end{remark}

Let us revisit this in the contact setting. The choice of a contact form on an exact contact manifold, induces a Jacobi structure on the manifold and thus a Jacobi bracket on smooth functions.

The non-integrability of the contact structure on an exact contact manifold $(M,\eta)$ allows one to define an isomorphism of vector bundles
\begin{align}
	\flat & : TM\to T^*M                                   \\
	      & \quad X \mapsto \iota_X\, d\eta +\eta(X) \eta,
\end{align}
and a corresponding isomorphism of $C^\infty(M)$-modules between vector fields and one-forms that we will denote by the same symbol~\cite[Section 2.2]{de_Le_n_2020}.
This then allows to construct a natural Jacobi structure on the manifold as follows.

\begin{lemma}[\cite{de_Le_n_2020}]
	\label{lemma:bivector}
	On a contact manifold $(M,\eta)$, there is a natural Jacobi structure defined by the contact form $\eta$ as follows:
	\begin{equation}
		\Lambda(\sigma, \nu) = -d\eta(\flat^{-1} \sigma, \flat^{-1} \nu), \quad \mbox{and}\quad \mathcal{E} = - \Reeb \mbox{ is the Reeb vector field}.
	\end{equation}
\end{lemma}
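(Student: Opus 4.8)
The plan is to check the two defining identities of a Jacobi structure, $[\Lambda,\mathcal{E}]_{SN}=0$ and $[\Lambda,\Lambda]_{SN}=-2\,\Lambda\wedge\mathcal{E}$, for the pair built from $\eta$. Both are tensorial equations between multivector fields, and the Schouten--Nijenhuis bracket is a local, first-order bidifferential operator, so it suffices to verify them in a neighbourhood of an arbitrary point. There the contact Darboux theorem supplies coordinates $(q^i,p_i,z)$ with $\eta=dz-p_i\,dq^i$, $d\eta=dq^i\wedge dp_i$ and $\Reeb=\partial_z$.

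First I would write $\flat$ explicitly on the coordinate frame. Evaluating $X\mapsto \iota_X d\eta+\eta(X)\eta$ gives $\flat(\partial_{p_i})=-dq^i$, $\flat(\partial_{q^i})=dp_i-p_i\eta$ and $\flat(\Reeb)=\eta$; solving the resulting linear system yields
\begin{equation}
\flat^{-1}(dq^i)=-\partial_{p_i},\qquad
\flat^{-1}(dp_i)=\partial_{q^i}+p_i\partial_z,\qquad
\flat^{-1}(dz)=\partial_z-p_j\partial_{p_j}.
\end{equation}
Substituting into $\Lambda(\sigma,\nu)=-d\eta(\flat^{-1}\sigma,\flat^{-1}\nu)$ and reading off the components then produces the explicit bivector
\begin{equation}
\Lambda=\frac{\partial}{\partial p_i}\wedge\Bigl(\frac{\partial}{\partial q^i}+p_i\frac{\partial}{\partial z}\Bigr),
\qquad \mathcal{E}=-\Reeb=-\frac{\partial}{\partial z}.
\end{equation}

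With these closed forms in hand, both identities reduce to finite computations: using that $[\cdot,\cdot]_{SN}$ restricts to the Lie bracket on vector fields and obeys a graded Leibniz rule, $[\Lambda,\mathcal{E}]_{SN}$ is a bivector that I expect to vanish identically, while $[\Lambda,\Lambda]_{SN}$ is a trivector to be compared with $-2\Lambda\wedge\mathcal{E}=2\,\partial_{p_i}\wedge\partial_{q^i}\wedge\partial_z$. The main obstacle — the only genuinely delicate bookkeeping — is the second identity: the $p_i\,\partial_z$ correction term in $\Lambda$ is exactly what makes $[\Lambda,\Lambda]_{SN}$ nonzero, and tracking its contribution through the skew-symmetrisation over the three slots is where sign and index errors are easy to make.

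As an independent cross-check that avoids the Schouten--Nijenhuis computation altogether, I would argue conceptually. The bracket $\{f,g\}_J=\Lambda(df,dg)-g\,\mathcal{E}(f)+f\,\mathcal{E}(g)$ associated to $(\Lambda,\mathcal{E})$ coincides, up to the sign fixed by conventions, with the contact bracket $h=-\eta([X_f,X_g])$ of Proposition~\ref{prop:preservetionHam}. Since $f\mapsto X_f$ is injective — because $\eta(X_f)=-f$ — and, by that same proposition, satisfies $X_{\{f,g\}_J}=[X_f,X_g]$ with image the contact Hamiltonian vector fields, which are closed under the Lie bracket, the Jacobi identity for $\{\cdot,\cdot\}_J$ is inherited directly from the Jacobi identity of the Lie bracket of vector fields. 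The standard Kirillov--Lichnerowicz equivalence between the Jacobi identity of such a first-order bracket and the two Schouten--Nijenhuis conditions then yields the lemma. This route trades the delicate trivector computation for the milder task of identifying the two brackets.
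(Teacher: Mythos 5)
The paper itself offers no proof of this lemma: it is imported as a known result from~\cite{de_Le_n_2020}, so there is no internal argument to compare yours against. Judged on its own, your proposal is a correct and well-organized \emph{plan}, but it stops short of being a proof at exactly the two places where the content of the lemma lies. In your first route, everything you actually compute checks out: the Darboux expressions $\flat(\partial_{p_i})=-dq^i$, $\flat(\partial_{q^i})=dp_i-p_i\eta$, $\flat(\Reeb)=\eta$, their inverses, and the resulting bivector $\Lambda=\partial_{p_i}\wedge(\partial_{q^i}+p_i\partial_z)$ are all consistent with the defining formula, as is $-2\Lambda\wedge\mathcal{E}=2\,\partial_{p_i}\wedge\partial_{q^i}\wedge\partial_z$. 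But the lemma \emph{is} the assertion that $[\Lambda,\mathcal{E}]_{SN}=0$ and $[\Lambda,\Lambda]_{SN}=-2\Lambda\wedge\mathcal{E}$, and you verify neither: you state what you ``expect'' and where the sign errors would hide. (The first identity is in fact nearly free once you have your coordinate formula: no coefficient of $\Lambda$ depends on $z$, so $[\Lambda,\mathcal{E}]_{SN}=\pm L_{\partial_z}\Lambda=0$. The second, the trivector computation driven by the $p_i\,\partial_{p_i}\wedge\partial_z$ term, is precisely the one you defer.)

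Your second route has the same character: its skeleton is sound and, importantly, not circular, because Proposition~\ref{prop:preservetionHam} is proved in the paper without any reference to the Jacobi structure, so deducing the Jacobi identity of the bracket $-\eta([X_f,X_g])$ from the Lie bracket of vector fields via injectivity of $f\mapsto X_f$ is legitimate, and the Kirillov--Lichnerowicz equivalence is a citable classical theorem. However, the entire argument hinges on the identification $\Lambda(df,dg)-g\,\mathcal{E}(f)+f\,\mathcal{E}(g)=-\eta([X_f,X_g])$, which you assert ``up to the sign fixed by conventions'' and never establish. That identity is essentially Lemma~3 of~\cite{de_Le_n_2020}, i.e.\ it carries the same computational weight as the statement you are proving: one must first show, for instance, that $\flat(X_f)=df-(\Reeb(f)+f)\,\eta$, hence $X_f=\Lambda(df,\cdot)-f\Reeb$ as in~\eqref{eqn:contactHam-Lambda}, and then expand both brackets. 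So, as written, neither route closes. To complete the proof, either carry out the $[\Lambda,\Lambda]_{SN}$ computation in your Darboux frame and compare it with $2\,\partial_{p_i}\wedge\partial_{q^i}\wedge\partial_z$, or prove the bracket identification invariantly and only then invoke Kirillov--Lichnerowicz.
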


\begin{remark}
	On an exact contact manifold $(M, \eta)$, we denote by $\{\cdot, \cdot\}_\eta := \{\cdot, \cdot\}_J$ the Jacobi brackets induced by the contact form.
\end{remark}

Similarly to the symplectic case, the contact Hamiltonian vector field has an equivalent expression in terms of the Jacobi structure.

\begin{lemma}
	Let $(M,\eta)$ be a contact manifold and $(\Lambda, \mathcal{E})$ be the corresponding Jacobi structure. Then, the contact Hamiltonian vector field $X_f$ associated to $f$ has the form
	\begin{equation}\label{eqn:contactHam-Lambda}
		X_f = \Lambda(df, \cdot) + f \mathcal{E} = \Lambda(df, \cdot) - f \Reeb.
	\end{equation}
	Moreover, for any $f,g \in C^\infty(M)$, the Jacobi bracket $\{f,g\}_\eta$ can be rewritten in terms of the contact structure as
	\begin{equation}
		\label{eqn:commnot0}
		\{f,g\}_\eta = X_f(g) + g \Reeb(f).
	\end{equation}
\end{lemma}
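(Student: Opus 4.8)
The plan is to prove the two formulas in sequence, obtaining the bracket identity~\eqref{eqn:commnot0} as a corollary of the vector-field identity~\eqref{eqn:contactHam-Lambda}. For the first formula I would not recompute $X_f$ from scratch, but instead verify that the candidate $V := \Lambda(df,\cdot) - f\Reeb$ satisfies the two defining relations~\eqref{eqn:Hamvec}. Since $\flat$ is an isomorphism, those relations pin down the contact Hamiltonian vector field uniquely (indeed $\flat(X_f) = df - (f+\Reeb(f))\eta$), so matching both of them forces $V = X_f$. Two preliminary facts about $\flat$ will drive the computation: that $\flat(\Reeb) = \eta$, hence $\flat^{-1}\eta = \Reeb$, which is immediate from $\iota_\Reeb d\eta = 0$ and $\eta(\Reeb)=1$; and that $\Lambda(df,\cdot)$, read as a vector field via $T^{**}M\simeq TM$, acts on functions by $\Lambda(df,\cdot)(g) = \Lambda(df,dg)$.

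The $\eta$-relation is then short: using the bivector definition from Lemma~\ref{lemma:bivector} and $\flat^{-1}\eta = \Reeb$, one gets $\eta(\Lambda(df,\cdot)) = \Lambda(df,\eta) = -d\eta(\flat^{-1}df,\Reeb) = 0$ because $d\eta(\cdot,\Reeb)=0$, whence $\eta(V) = -f$. The $d\eta$-relation is the technical core. Writing $W := \flat^{-1}(df)$, so that $d\eta(W,\cdot) + \eta(W)\eta = df$ by definition of $\flat$, I would first contract with $\Reeb$ to read off $\eta(W) = \Reeb(f)$, and then identify $\Lambda(df,\cdot) = W - \Reeb(f)\Reeb$ by pairing both sides against $\flat(U)$ for arbitrary $U$ and using that $\ker d\eta$ is spanned by $\Reeb$ together with $\eta(\Lambda(df,\cdot))=0$. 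Since $d\eta(\Reeb,\cdot)=0$, this yields $d\eta(V,\cdot) = d\eta(W,\cdot) = df - \Reeb(f)\eta$, the second relation in~\eqref{eqn:Hamvec}. I expect this last identification of $\Lambda(df,\cdot)$ with $\flat^{-1}(df)$ up to its Reeb component to be the main obstacle, since it requires careful bookkeeping between the $\flat$ isomorphism, the sharp map of the bivector, and the Reeb direction.

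Finally, the bracket formula follows by pure substitution. From~\eqref{eqn:contactHam-Lambda} we have $X_f(g) = \Lambda(df,dg) - f\Reeb(g)$, so $X_f(g) + g\Reeb(f) = \Lambda(df,dg) - f\Reeb(g) + g\Reeb(f)$; comparing with the Jacobi bracket $\{f,g\}_\eta = \Lambda(df,dg) - g\mathcal{E}(f) + f\mathcal{E}(g)$ and substituting $\mathcal{E} = -\Reeb$ from Lemma~\ref{lemma:bivector}, the two expressions coincide term by term.
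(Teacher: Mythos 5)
Your proposal is correct. There is nothing in the paper to compare it against: the paper states this lemma without proof, implicitly importing it from the Jacobi-structure literature (cf.~\cite{de_Le_n_2020}), so your argument fills a gap rather than paralleling an existing one. The route you take is the natural one, and the step you flag as the technical core goes through exactly as you sketch it: with $W = \flat^{-1}(df)$, contracting $\flat(W) = \iota_W d\eta + \eta(W)\eta = df$ with $\Reeb$ gives $\eta(W) = \Reeb(f)$, and then for arbitrary $U$ one computes
\begin{equation}
	\flat(U)\bigl(W - \Reeb(f)\Reeb\bigr) = d\eta(U,W) + \eta(U)\bigl(\eta(W)-\Reeb(f)\bigr) = -\,d\eta(W,U) = \Lambda(df,\flat(U)),
\end{equation}
which, since $\flat$ is onto, identifies $\Lambda(df,\cdot) = W - \Reeb(f)\Reeb$; from this both defining relations of~\eqref{eqn:Hamvec} follow for $V = \Lambda(df,\cdot) - f\Reeb$, and uniqueness (your observation that $\flat(X_f) = df - (f+\Reeb(f))\eta$) forces $V = X_f$. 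The bracket identity~\eqref{eqn:commnot0} is then, as you say, term-by-term substitution using $\mathcal{E} = -\Reeb$. One could shorten the core step slightly by noting that $W - \Reeb(f)\Reeb$ and $\Lambda(df,\cdot)$ share the same $\eta$-value (zero) and the same contraction with $d\eta$, and invoking the splitting $\Gamma(TM) = \ker\eta \oplus \ker d\eta$, but this is a cosmetic difference from your pairing argument.
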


Proposition~\ref{prop:preservetionHam}, then, implies that we can relate the Lie bracket of two contact Hamiltonian vector fields to the Jacobi brackets of their Hamiltonian functions: between the set of the contact Hamiltonian vector fields and the contact Hamiltonian functions there is an anti-isomorphism of Lie algebras.

\begin{prop}
	\label{prop:hamiltonianpropbrackets}
	Let $X_f$ and $X_g$ two Hamiltonian vector fields. Then,
	\begin{equation}
		\eta([X_f, X_g]) = -\{f, g\}_\eta, \label{eqn:JacobiHamiltonian}
	\end{equation}
	and
	\begin{equation}
		\left[X_f, X_g \right] = X_{\left\{f,g \right\}_\eta}. \label{eqn:LieHamiltonians}
	\end{equation}
	That is, the Lie bracket of these two vector fields is a contact Hamiltonian vector field with Hamiltonian function $\{f,g\}_\eta$.
\end{prop}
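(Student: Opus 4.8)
The plan is to establish~\eqref{eqn:JacobiHamiltonian} by a direct computation of $\eta([X_f,X_g])$ and then to deduce~\eqref{eqn:LieHamiltonians} almost for free from Proposition~\ref{prop:preservetionHam}, which already tells us that $[X_f,X_g] = X_{-\eta([X_f,X_g])}$. In this way the whole content of the proposition reduces to identifying the function $-\eta([X_f,X_g])$ with the Jacobi bracket $\{f,g\}_\eta$.

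For the first identity I would start from the standard Cartan-calculus formula for the exterior derivative of a $1$-form, the same identity already used in the proof of Proposition~\ref{prop:automorphismhor}, namely
\[
	\eta([X_f,X_g]) = X_f(\eta(X_g)) - X_g(\eta(X_f)) - d\eta(X_f,X_g).
\]
Then I would feed in the defining equations~\eqref{eqn:Hamvec}. The first two terms are immediate from $\eta(X_f)=-f$ and $\eta(X_g)=-g$, giving $-X_f(g)+X_g(f)$. For the last term I would contract the second equation of~\eqref{eqn:Hamvec} for $X_f$ against $X_g$, obtaining $d\eta(X_f,X_g) = df(X_g) - \Reeb(f)\,\eta(X_g) = X_g(f) + g\,\Reeb(f)$. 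Substituting, the two occurrences of $X_g(f)$ cancel and I am left with $\eta([X_f,X_g]) = -X_f(g) - g\,\Reeb(f)$.

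At this point the key step is to recognize the right-hand side as a Jacobi bracket: by~\eqref{eqn:commnot0} we have precisely $\{f,g\}_\eta = X_f(g) + g\,\Reeb(f)$, so $\eta([X_f,X_g]) = -\{f,g\}_\eta$, which is~\eqref{eqn:JacobiHamiltonian}. Combining this with Proposition~\ref{prop:preservetionHam} then yields
\[
	[X_f,X_g] = X_{-\eta([X_f,X_g])} = X_{\{f,g\}_\eta},
\]
which is~\eqref{eqn:LieHamiltonians}.

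The computation itself is routine; the only real subtlety, and hence the place I would be most careful, is the bookkeeping of signs, since the bracket formula~\eqref{eqn:commnot0} is not manifestly skew-symmetric in $f$ and $g$. Concretely, one must make sure that the asymmetric-looking intermediate expression $-X_f(g) - g\,\Reeb(f)$ is matched against the correct ordering of the arguments in $\{f,g\}_\eta$ rather than $\{g,f\}_\eta$. A useful consistency check is that $\eta([X_f,X_g])$ is manifestly antisymmetric under swapping $f$ and $g$, which forces $\{f,g\}_\eta$ to be antisymmetric as well, in agreement with the skew-symmetry of the Jacobi structure built in Lemma~\ref{lemma:bivector}.
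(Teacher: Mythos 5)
Your proposal is correct, and it is in fact more self-contained than the paper's own argument. The paper disposes of \eqref{eqn:JacobiHamiltonian} by citation (it is Lemma~3 of \cite{de_Le_n_2020}) and then, exactly as you do, obtains \eqref{eqn:LieHamiltonians} from Proposition~\ref{prop:preservetionHam} via $[X_f,X_g]=X_{-\eta([X_f,X_g])}=X_{\{f,g\}_\eta}$; so the second half of your proof coincides with the paper's, while the first half supplies the computation the paper outsources. That computation checks out: with $\eta(X_f)=-f$, $\eta(X_g)=-g$ and $d\eta(X_f,X_g)=df(X_g)-\Reeb(f)\eta(X_g)=X_g(f)+g\,\Reeb(f)$, the invariant formula for $d\eta$ yields $\eta([X_f,X_g])=-X_f(g)-g\,\Reeb(f)=-\{f,g\}_\eta$ by \eqref{eqn:commnot0}, with the arguments in the right order (one can confirm the sign on $\eta=ds-p\,dq$ with $f=p$, $g=q$: there $[X_p,X_q]=-\partial_s$, so $\eta([X_p,X_q])=-1=-\{p,q\}_\eta$). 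Your antisymmetry consistency check is also a genuinely useful safeguard, since \eqref{eqn:commnot0} is not manifestly skew. One small caveat: the identity you invoke is the standard
\begin{equation}
	\eta([X,Y]) = X(\eta(Y)) - Y(\eta(X)) - d\eta(X,Y),
\end{equation}
whereas the formula as printed in the paper's proof of Proposition~\ref{prop:automorphismhor} carries the opposite overall sign (harmless there, because the right-hand side vanishes identically). So it is not literally ``the same identity'' as displayed in that proof; yours is the correct one under the usual convention for $d$, and your signs come out consistent with the statement, which is what matters.
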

\begin{proof}
	The first equation is proven in \cite[Lemma 3]{de_Le_n_2020}. Then Proposition~\ref{prop:preservetionHam} implies
	\begin{equation}
		\left[X_f, X_g \right] =  X_{-\eta([X_f,X_g])} = X_{\left\{f,g \right\}_\eta}.
	\end{equation}
\end{proof}

\section{Decompositions of vector fields on contact manifolds}
\label{sec:Decomposition}
On an exact contact manifold $(M,\eta)$, the contact form $\eta$ naturally induces a splitting of the tangent space $T_x M$ at any point $x \in M$ into a component tangential to the horizontal distribution and the rest~\cite{geiges}.
This local horizontal--vertical decomposition can be naturally extended to vector fields over $M$ with the aid of the Reeb vector field, since in this case $TM = \mathrm{span}\{\Reeb, \Delta\}$ holds globally.

\begin{prop}
	Let $(M,\eta)$ a contact manifold,
	the $C^{\infty}(M)$ module of vector fields on $M$ admits the decomposition
	\begin{equation}
		\label{eqn:splittingTM}
		\Gamma(TM) = \ker \eta \oplus \ker d\eta.
	\end{equation}
\end{prop}
In view of equation~\eqref{eqn:splittingTM}, given $X \in \Gamma(TM)$, we can decompose it into
\begin{equation}
	X = h_X + \eta(X) \Reeb;
\end{equation}
where $h_X \in \Delta = \ker \eta$ and $\eta(X)$ is a smooth function. The vector field $\eta(X) \Reeb$ is called the \emph{vertical part} of $X$.

For our purposes, it will be convenient to consider a different decomposition, introduced in~\cite{projective,vectorcontactstructure,Zadra}, that replaces the vertical component with a contactomorphism.

\subsection{The Hamiltonian--horizontal decomposition}\label{sec:hamhor}
The Hamiltonian--horizontal decomposition splits vector fields on a contact manifold into a component described by a contact Hamiltonian vector field and a horizontal one.

\begin{prop}
	\label{prop:decompositionHami}
	Let $(M,\eta)$ be a contact manifold. Any vector field $\xi \in \Gamma(TM)$ admits a unique decomposition
	\begin{equation}
		\xi = X_{\phi_{\xi}} + \delta_\xi,
	\end{equation}
	where $\phi_\xi = \ -\eta(\xi) \in C^\infty(M)$ and $\delta_{\xi} \in \Delta = \ker \eta$.
\end{prop}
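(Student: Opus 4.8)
The plan is to prove existence by explicit construction and uniqueness by contracting with $\eta$. The key observation is that the prescribed Hamiltonian function $\phi_\xi = -\eta(\xi)$ is chosen precisely so that, after contraction with the contact form, the induced Hamiltonian vector field $X_{\phi_\xi}$ already reproduces $\xi$; the defining relation $\eta(X_f) = -f$ from~\eqref{eqn:Hamvec} then carries the whole argument.

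For existence, I would set $\phi_\xi := -\eta(\xi) \in C^\infty(M)$, let $X_{\phi_\xi}$ be the unique contact Hamiltonian vector field it induces through~\eqref{eqn:Hamvec}, and define the candidate remainder $\delta_\xi := \xi - X_{\phi_\xi}$. The only thing to verify is that $\delta_\xi$ is horizontal. Contracting with $\eta$ and using $\eta(X_{\phi_\xi}) = -\phi_\xi = \eta(\xi)$ yields $\eta(\delta_\xi) = \eta(\xi) - \eta(X_{\phi_\xi}) = 0$, so $\delta_\xi \in \ker\eta = \Delta$. As a sanity check, writing $X_{\phi_\xi}$ through the Jacobi description~\eqref{eqn:contactHam-Lambda} as $\Lambda(d\phi_\xi,\cdot) + \eta(\xi)\Reeb$ shows that $X_{\phi_\xi}$ carries exactly the vertical part $\eta(\xi)\Reeb$ of the horizontal--vertical splitting~\eqref{eqn:splittingTM} (since $\Lambda(d\phi_\xi,\cdot)$ is itself horizontal), so that $\delta_\xi$ is genuinely horizontal. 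This makes the new decomposition a clean modification of the old one.

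For uniqueness, suppose $\xi = X_\phi + \delta$ with $X_\phi$ any contact Hamiltonian vector field and $\delta \in \ker\eta$. Contracting with $\eta$ and using $\eta(\delta) = 0$ forces $-\phi = \eta(X_\phi) = \eta(\xi)$, hence $\phi = -\eta(\xi) = \phi_\xi$ is uniquely determined. Because the Hamiltonian function can be recovered from its vector field via $f = -\eta(X_f)$, the map $f \mapsto X_f$ is injective, so $X_\phi = X_{\phi_\xi}$ and therefore $\delta = \delta_\xi$. I do not expect a genuine obstacle here: the proposition is not a delicate analytic statement but rather the observation that fixing $f = -\eta(\xi)$ is exactly the choice that pushes the vertical discrepancy of $\xi$ into the image of the Hamiltonian map. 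The only point worth keeping in mind is that existence and uniqueness of $X_f$ for a prescribed $f$, which I take for granted from the definition of a contact Hamiltonian system, ultimately rely on the nondegeneracy encoded in~\eqref{eqns:contactvol}.
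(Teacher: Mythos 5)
Your proof is correct and follows essentially the same route as the paper's: existence by setting $\phi_\xi = -\eta(\xi)$ and checking $\eta(\delta_\xi)=0$ via the defining relation $\eta(X_f)=-f$, and uniqueness from the fact that a function determines its contact Hamiltonian vector field uniquely. Your uniqueness step is in fact slightly more complete than the paper's (you rule out \emph{any} competing Hamiltonian--horizontal splitting, rather than only noting $X_{\phi_\xi}$ is well defined), though note that the step $\phi=\phi_\xi \Rightarrow X_\phi = X_{\phi_\xi}$ uses well-definedness of $f\mapsto X_f$, not its injectivity.
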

\begin{proof}
	Equation~\eqref{eqn:Hamvec} applied to $\delta_{\xi}=\xi - X_{\phi_\xi}$ implies
	\begin{equation}
		\eta(\xi - X_{\phi_\xi}) = \eta(\xi) - \eta(X_{-\eta(\xi)}) = 0.
	\end{equation}
	That is, $\delta_\xi$ a horizontal vector field.

	Since $\eta(\xi)$ is a well-defined smooth function on $M$, its Hamiltonian vector field $X_{\phi_\xi}$ is uniquely defined, and therefore the horizontal vector field $\delta_{\xi} = \xi - X_{\phi_\xi}$ is also unique.
\end{proof}

Note the different nature of these decompositions: the horizontal--vertical decomposition {can be} defined pointwise, that is, such a decomposition of $T_x M$ only involves the values of the vector fields at the point $x$; the Hamiltonian--horizontal decomposition, instead, is necessarily defined in a neighbourhood of the point as it also requires knowledge of the derivative of the vector field, and, consequently, can not be defined pointwisely.

As a matter of fact, by considering all the Hamiltonian vector fields evaluated at a point $x$ it is possible to span the tangent space $T_xM$, and then also the entire kernel of the contact form evaluated in the point $x$.
\begin{remark}
	Consider a point $x$ on a contact manifold of dimension $2n+1$. By Darboux's theorem, there exists a neighbourhood $U$ of $x$ in which the contact form can be written in the canonical form
	\begin{equation}
		\eta = ds - p_i \, dq^i, \quad i \in \{1, \ldots, n\}.
	\end{equation}
	In these coordinates, we may assume without loss of generality that
	\begin{equation}
		x = (0, \ldots, 0).
	\end{equation}
	At this point, the Hamiltonian vector fields $X_s$, $X_{q_i}$, and $X_{p_i}$ lie in $\ker \eta \lvert_x$, whereas the Reeb vector field $\Reeb$, which is always Hamiltonian, does not. Moreover, while $X_s(x) = 0$, we have
	\begin{equation}
		X_{q^i} = - \frac{\partial}{\partial p_i},
		\qquad
		X_{p_j} = \frac{\partial}{\partial q^j}.
	\end{equation}
	Therefore,
	\begin{equation}
		\operatorname{span}_{\mathbb{R}}\{ \Reeb(x), X_{q^i}, X_{p_j} \} = T_x M.
	\end{equation}
\end{remark}

{We believe that} in the study of symmetries, the Hamiltonian--horizontal decomposition {should be} preferred to the usual horizontal--vertical one~\eqref{eqn:splittingTM} {as} it behaves better with respect to the Lie brackets, as exemplified by Propositions~\ref{prop:contactomorphism} and~\ref{prop:automorphismhor}.

In practice, the relation between a vector field $\xi$ and the contact Hamiltonian system $(M,\eta,\cH)$ is often characterised by their Lie bracket $[\xi, X_{\cH}]$.
In the horizontal--vertical decomposition, we miss that the Hamiltonian vector fields are not authomorphisms of the vertical distribution, as the following computation shows:
\begin{equation}
	\left[f \Reeb, X_\cH\right] = - X_\cH (f) \Reeb + f X_\Reeb(\cH).
\end{equation}
So, after taking the lie brackets of $\xi$, the Reeb component will contribute terms to both the vertical and horizontal components of the bracket.
Instead, the Hamiltonian--horizontal decomposition reduces the computation to
\begin{equation}
	\label{eqn:splitcomm}
	[\xi,X_{\cH}] = [X_{\phi_{\xi}}, X_{\cH}] + [\delta_\xi, X_{\cH}].
\end{equation}
Here, the first term $[X_{\phi_{\xi}}, X_{\cH}]$ is a Hamiltonian vector field, by Propositions~\ref{prop:preservetionHam} and \ref{prop:hamiltonianpropbrackets}, and the second is horizontal, since by \eqref{eq:LieHameta}
\begin{equation}\label{eq:deltaxiham}
	\eta([\delta_\xi, X_{\cH}]) = \iota_{\delta_\xi} L_{X_{\cH}} \eta - L_{X_{\cH}} \iota_{\delta_\xi} \eta = - \Reeb(\cH) \iota_{\delta_\xi} \eta - L_{X_{\cH}} \iota_{\delta_\xi} \eta = 0.
\end{equation}
This means that the sum in equation~\eqref{eqn:splitcomm} is manifestly the Hamiltonian--horizontal decomposition of the commutator, i.e.,
\begin{equation}
	[\xi , X_\cH] = X_{-\{\phi_\xi, \cH\}_\eta} + \delta_{[\xi, X_\cH]}.
\end{equation}

One may still wonder what is the relation between the Hamiltonian--horizontal decomposition and the horizontal--vertical one from~\eqref{eqn:splittingTM}.

\begin{remark}\label{rmk:VertJacobi}
	Since the Reeb vector field is orthogonal to the distribution, any vertical vector field $Y$ on a contact manifold $(M,\eta)$ has to be parallel to it, and thus by~\eqref{eqn:contactHam-Lambda}, it can be written as
	\begin{equation}
		Y = f \Reeb = X_{-f} + \Lambda(df, \cdot ),
	\end{equation}
	where $\Lambda(\cdot, \cdot)$ is the skew bivector field of the Jacobi structure induced by $\eta$ as in Lemma~\ref{lemma:bivector}.

	Therefore, a vertical vector field has Hamiltonian--horizontal decomposition $Y = X_{\phi_Y} + \delta_Y$ with
	\begin{equation}
		X_{\phi_Y} = - X_{f}, \quad\mbox{and}\quad
		\delta_Y = + \Lambda(df, \cdot ).
	\end{equation}
	Here, we want to stress the key role of the Jacobi structure in extracting the Hamiltonian component from the vertical field.

	This centrality of the Jacobi structure will keep coming back also in the upcoming sections.
	By providing a natural way to identify functions in involution, symmetries, and dissipated quantities, it will play an important role in the study of the symmetries of contact Hamiltonian systems.
\end{remark}

\subsection{Hamiltonian--horizontal decomposition and tensor densities}\label{sec:hamhordensities}

The Hamiltonian--horizontal decomposition presented in the previous section strictly depends on the choice of the contact form. If we consider a contactomorphism $\eta \mapsto f \eta$ as in Proposition~\ref{prop:contactomorphism}, the Hamiltonian function will be rescaled by $f$ and a more complicated transformation law will hold for $\delta_Y$.
Nevertheless, it is possible to relax this dependence on the contact form and obtain an intrinsic representation of the splitting by introducing tensor densities, as done in~\cite[Chapter 7.5]{projective} and~\cite{vectorcontactstructure}.

We postpone the technical definitions to Appendix~\ref{app:tensor} and just recall here the relevant consequences in the case of exact contact manifolds.
What matters most for us is \red{equation~\eqref{eqn:app:tensorcontactexplaine} (see Appendix~\ref{app:tensor})}: on $(M,\eta)$ a tensor density $\phi$ of degree $\lambda$ is an object with the local expression
\begin{equation}
	\label{eqn:tendensity}
	\phi = f(x)\ \eta^{(n+1)\lambda}.
\end{equation}

\red{In particular, this means that the choice of a contact form $\eta$ allows to express tensor densities (as a module over the group of contactomorphisms) not in terms of the volume form induced by $\eta$ (see Remark~\ref{rmk:volume}) as one would usually do, but directly in terms of $\eta$ itself.
	This allows one to describe infinitesimal contactomorphisms as a special case of tensor densities.}

\begin{prop}[\cite{projective, vectorcontactstructure}]
	\label{prop:decohamden}
	The set of infinitesimal contactomorphisms is isomorphic to the set of tensor densities of degree $-\frac{1}{n+1}$.
\end{prop}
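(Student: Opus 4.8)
The plan is to exhibit an explicit correspondence and then show it is bijective and, crucially, independent of the auxiliary choice of contact form. The starting point is the remark made before Proposition~\ref{prop:automorphismhor}: every infinitesimal contactomorphism is a contact Hamiltonian vector field $X_f$, and by~\eqref{eqn:Hamvec} such a field is completely determined by its Hamiltonian function $f = -\eta(X_f) = \phi_{X_f}\in C^\infty(M)$. Thus, once $\eta$ is fixed, the set of infinitesimal contactomorphisms is in canonical bijection with $C^\infty(M)$. On the other side, by~\eqref{eqn:tendensity} a tensor density of degree $\lambda=-\tfrac{1}{n+1}$ has the local form $\phi = g(x)\,\eta^{(n+1)\lambda} = g(x)\,\eta^{-1}$, so it too is parametrized by a function $g\in C^\infty(M)$ once $\eta$ is fixed. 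I would therefore define the map $\Psi\colon X_f \mapsto \phi_{X_f}\,\eta^{-1} = -\eta(X_f)\,\eta^{-1}$, with inverse $g\,\eta^{-1}\mapsto X_g$.

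The first, routine, step is to check that $\Psi$ respects the relevant structure. Since $X_f\mapsto f$ and $g\,\eta^{-1}\mapsto g$ are both coefficient-wise bijections onto $C^\infty(M)$ that are linear in the coefficient, and since $X_{f+g}=X_f+X_g$ by~\eqref{eqn:Hamvec}, the map $\Psi$ is the identity on coefficients and hence a linear bijection. If the isomorphism is to be read at the level of brackets, one invokes Proposition~\ref{prop:hamiltonianpropbrackets}, which gives $[X_f,X_g]=X_{\{f,g\}_\eta}$, so the Lie bracket of vector fields is transported to the Jacobi bracket on coefficients and thence to the induced bracket on densities.

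The crucial step, and the genuine content of the statement, is to show that $\Psi$ is independent of the contact form, i.e. equivariant under a change $\eta\mapsto f\eta$ (equivalently, under pullback by a contactomorphism as in Proposition~\ref{prop:contactomorphism}). Two transformation laws must be compared. For a fixed vector field $X$, the Hamiltonian coefficient $\phi_X=-\eta(X)$ rescales to $-f\eta(X)=f\,\phi_X$, as noted at the start of Section~\ref{sec:hamhordensities}. Meanwhile, using~\eqref{eqn:transfvolume} together with the density convention~\eqref{eqn:tendensity}, the symbol $\eta^{(n+1)\lambda}$ rescales by $f^{(n+1)\lambda}$, which for $\lambda=-\tfrac{1}{n+1}$ is exactly $f^{-1}$. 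The two rescalings cancel:
\begin{equation}
  \phi_X\,\eta^{-1} \;\longmapsto\; (f\,\phi_X)\,(f\eta)^{-1} = \phi_X\,\eta^{-1},
\end{equation}
so $\Psi(X)$ is a genuinely $\eta$-independent density of degree $-\tfrac{1}{n+1}$. This cancellation is precisely what singles out the degree $-\tfrac{1}{n+1}$, and it is the main obstacle: it requires matching the scaling weight of the Hamiltonian function against the weight $(n+1)\lambda$ built into the density, which rests on the volume transformation~\eqref{eqn:transfvolume} and the normalization~\eqref{eqn:tendensity} imported from Appendix~\ref{app:tensor}.

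Assembling the pieces, $\Psi$ is a linear bijection between the two sets, it intertwines the natural brackets by Proposition~\ref{prop:hamiltonianpropbrackets}, and it is independent of the contact form by the cancellation above; hence it is the desired isomorphism. The only remaining points are bookkeeping ones, namely the sign convention in $\phi_X=-\eta(X)$ and the precise meaning of $\eta^{-1}$ as a formal density symbol, both of which are fixed by the definitions recalled in Appendix~\ref{app:tensor}.
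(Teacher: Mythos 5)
Your proof is correct and follows essentially the same route as the paper: the paper does not actually prove this proposition (it is imported from the cited references), but its own use of the result — the assignment $\psi_{X_\cH} = \eta(X_\cH)\,\eta^{-1}$ in \eqref{eqn:hamtendensity}, together with the claims that this density is independent of the choice of contact form and intertwines the brackets via Proposition~\ref{prop:hamiltonianpropbrackets} — is exactly the bijection-plus-weight-cancellation argument you give. The only difference is bookkeeping: your map sends $X_\cH \mapsto +\cH\,\eta^{-1}$ whereas the paper's convention \eqref{eqn:hamtendensity} gives $-\cH\,\eta^{-1}$, which is immaterial for the isomorphism.
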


This means that we can use equation~\eqref{eqn:tendensity} to identify the Hamiltonian of a contact Hamiltonian system $(M,\eta,\cH)$ with a tensor density $\psi_{X_\cH}$ of degree $-\frac{1}{n+1}$, which is given explicitly by
\begin{equation}\label{eqn:hamtendensity}
	\psi_{X_\cH} = \eta(X_\cH)\ \eta^{-1} = -\cH \ \eta^{-1},
\end{equation}
where to make sense of $\eta^{-1}$ we view it as a tensor density of degree $-\frac{1}{n+1}$.
This representation is useful for two reasons~\cite{projective, vectorcontactstructure}:
\begin{itemize}
	\item first, the expression of the contact Hamiltonian on the right-hand side of \eqref{eqn:hamtendensity} is independent of the choice of the contact form;
	\item second, the Jacobi brackets of two contact Hamiltonian functions are the Poisson brackets of the corresponding densities.
\end{itemize}

This directly motivates the following definition.
\begin{defn}
	A tensor density $\sigma$ is an \emph{invariant tensor density}, if it is invariant under contactomorphism~\cite{vectorcontactstructure}.
\end{defn}

To find an invariant representation of horizontal vector fields of $(M, \eta)$ in terms of densities, it is convenient to introduce the set $\Omega_h^2(M)$ of 2-forms vanishing on the contact distribution, that is, 2-forms that can be expressed as $\eta \wedge \beta$ with $\beta \in \Omega^1(M)$.

\begin{prop}[\cite{projective, vectorcontactstructure}]
	\label{prop:decotangden}
	Let $(M, \eta)$ be an exact contact manifold.
	The set $\Delta = \ker \eta$ of horizontal vector fields is isomorphic to
	\begin{equation}
		\Omega_h^2(M) \otimes \mathcal{F}_{-\frac{2}{n+1}}(M),
	\end{equation}
	where $\mathcal{F}_{-\frac{2}{n+1}}$ denotes the set of tensor densities on $M$ of degree $-\frac{2}{n+1}$.
\end{prop}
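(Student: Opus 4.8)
The plan is to write down one explicit map, check that it is a $C^\infty(M)$-linear isomorphism, and then verify that it does not depend on the representative of the conformal class $[\eta]$; this last invariance is exactly what forces the density weight $-\tfrac{2}{n+1}$, and is the real content of the statement. Concretely, I would send a horizontal vector field $\delta\in\Delta$ to
\[
	\Phi(\delta) := \bigl(\eta\wedge\iota_\delta d\eta\bigr)\otimes\eta^{-2},
\]
where $\iota_\delta d\eta=\flat(\delta)$ using the isomorphism $\flat$ of Section~\ref{sec:Jacobi} (recall $\eta(\delta)=0$), and $\eta^{-2}$ is the generating density of degree $-\tfrac{2}{n+1}$, cf.~\eqref{eqn:tendensity}. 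By construction $\eta\wedge\iota_\delta d\eta$ lies in $\Omega_h^2(M)$, and $\Phi$ is manifestly $C^\infty(M)$-linear since $\iota_{f\delta}d\eta=f\,\iota_\delta d\eta$.

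To see that $\Phi$ is bijective I would factor it (forgetting the trivial density line) as $\Delta\xrightarrow{\flat}\mathrm{Ann}(\Reeb)\xrightarrow{\eta\wedge\,\cdot\,}\Omega_h^2(M)$, where $\mathrm{Ann}(\Reeb)=\{\alpha\in\Omega^1(M):\alpha(\Reeb)=0\}$. For the first arrow, $\flat(\Reeb)=\iota_\Reeb d\eta+\eta(\Reeb)\,\eta=\eta$ by Proposition~\ref{prop:reeb}, while $\flat(\delta)(\Reeb)=d\eta(\delta,\Reeb)=0$ for $\delta\in\Delta$; since $\flat$ is an isomorphism and $TM=\Delta\oplus\langle\Reeb\rangle$, it carries this splitting to $T^*M=\mathrm{Ann}(\Reeb)\oplus\langle\eta\rangle$, hence restricts to an isomorphism $\Delta\xrightarrow{\sim}\mathrm{Ann}(\Reeb)$. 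For the second arrow, $\alpha\mapsto\eta\wedge\alpha$ is injective on $\mathrm{Ann}(\Reeb)$ because $\mathrm{Ann}(\Reeb)\cap\langle\eta\rangle=0$, and surjective onto $\Omega_h^2(M)$ because any $\eta\wedge\beta$ equals $\eta\wedge\alpha$ after discarding the $\eta$-component of $\beta$. Tensoring with the density line $\mathcal{F}_{-2/(n+1)}(M)$, which is trivialized by the nonvanishing section $\eta^{-2}$, leaves the isomorphism intact.

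The crucial step is the independence of $\Phi$ from the contact form. Under a change $\eta\mapsto g\eta$ with $g$ nonvanishing, $\delta$ remains horizontal since $\ker\eta=\ker g\eta$, and from $d(g\eta)=dg\wedge\eta+g\,d\eta$ together with $\eta(\delta)=0$ I get $\iota_\delta d(g\eta)=dg(\delta)\,\eta+g\,\iota_\delta d\eta$; wedging with $g\eta$ and using $\eta\wedge\eta=0$ yields
\[
	(g\eta)\wedge\iota_\delta d(g\eta)=g^2\,\bigl(\eta\wedge\iota_\delta d\eta\bigr).
\]
On the other hand the generating density rescales as $(g\eta)^{-2}=g^{-2}\eta^{-2}$, so the two factors $g^{\pm 2}$ cancel and $\Phi(\delta)$ is the same element of $\Omega_h^2(M)\otimes\mathcal{F}_{-2/(n+1)}(M)$ for every choice of contact form in $[\eta]$. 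This is precisely why the weight must be $-\tfrac{2}{n+1}$ and not something else.

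I expect the only genuinely delicate point to be this weight bookkeeping — confirming that the homogeneity $g^2$ of $\eta\wedge\iota_\delta d\eta$ cancels exactly against $g^{-2}$ coming from $\eta^{-2}$, and that $-2$ is indeed the exponent attached to degree $-\tfrac{2}{n+1}$ in~\eqref{eqn:tendensity}. The algebraic part (that $\Phi$ is a $C^\infty(M)$-module isomorphism) is routine once one uses that $\flat$ is an isomorphism and that $d\eta$ is nondegenerate on $\Delta$, which is guaranteed by the contact condition~\eqref{eqns:contactvol}.
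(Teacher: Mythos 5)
Your proof is correct, and in fact there is no internal proof to compare it against: the paper states Proposition~\ref{prop:decotangden} by citation to~\cite{projective,vectorcontactstructure} and never proves it, so your argument supplies a self-contained justification. Moreover, the map you construct, $\Phi(\delta)=(\eta\wedge\iota_\delta d\eta)\otimes\eta^{-2}$, coincides up to sign with the horizontal density $\sigma_\xi=-\eta\wedge d\eta\left(\delta_\xi,\cdot\right)\otimes\eta^{-\frac{2}{n+1}}$ that the paper uses later in Proposition~\ref{prop:hamhordensity}, so you have reconstructed exactly the isomorphism the authors have in mind; your exponent $-2$ is the one dictated by the convention~\eqref{eqn:tendensity}, and the exponent written in Proposition~\ref{prop:hamhordensity} is a notational slip of the paper rather than a substantive discrepancy. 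All three steps of your argument are sound: $\flat$ restricts to an isomorphism $\Delta\to\mathrm{Ann}(\Reeb)$ by injectivity plus a fiberwise dimension count, the wedge map $\alpha\mapsto\eta\wedge\alpha$ is bijective from $\mathrm{Ann}(\Reeb)$ onto $\Omega_h^2(M)$, and the weight bookkeeping $g^{2}\cdot g^{-2}=1$ under $\eta\mapsto g\eta$ is exactly the cancellation that fixes the degree $-\tfrac{2}{n+1}$. One refinement would be needed to recover the full strength of the cited result: in~\cite{projective,vectorcontactstructure} the isomorphism is one of modules over the group of contactomorphisms (this is what \emph{invariant} means in the text surrounding the proposition), not merely of $C^\infty(M)$-modules. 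Your form-independence computation is the essential ingredient for this, but to close the circle you should add the one-line observation that $\Phi$ is natural under diffeomorphisms, i.e.\ $\psi^*\bigl(\Phi_\eta(\delta)\bigr)=\Phi_{\psi^*\eta}(\psi^*\delta)$, since it is built only from $\eta$, the exterior derivative, and contraction; combined with your identity $\Phi_{g\eta}=\Phi_\eta$ this yields $\psi^*\circ\Phi_\eta=\Phi_\eta\circ\psi^*$ for every contactomorphism $\psi$, which is precisely the equivariance statement.
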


With Propositions~\ref{prop:decohamden} and~\ref{prop:decotangden} at hand, we can revise the Hamiltonian--horizontal splitting and assign to each vector field two contact-invariant densities which represent, respectively, its horizontal and Hamiltonian parts.
\begin{prop}\label{prop:hamhordensity}
	Let $(M, \eta)$ be an exact contact manifold.
	The Hamiltonian--horizontal decomposition $\xi = X_{\phi_\xi} + \delta_\xi$ of any vector field $\xi\in\Gamma(TM)$ has an expression in terms of invariant tensor densities of the form $\psi_{\xi} + \sigma_\xi$, where
	\begin{itemize}
		\item the Hamiltonian component $\psi_\xi = \eta(\xi)\ \eta^{-1}$ is a tensor density of degree $-\frac{1}{n+1}$;
		\item and, the horizontal component $\sigma_\xi$ is the tensor density of degree $-\frac{2}{n+1}$ given by
		      \begin{equation}
			      \sigma_\xi = - \eta \wedge d\eta\left(Y+X_{\eta(Y)}, \cdot \right) \otimes \eta^{-\frac{2}{n+1}}.
		      \end{equation}
	\end{itemize}
\end{prop}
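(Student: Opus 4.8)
The plan is to transport each piece of the Hamiltonian--horizontal decomposition $\xi = X_{\phi_\xi} + \delta_\xi$ of Proposition~\ref{prop:decompositionHami} through the density isomorphisms of Propositions~\ref{prop:decohamden} and~\ref{prop:decotangden}, to read off the two explicit densities, and finally to verify that they are genuinely contact-invariant.

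First I would dispose of the Hamiltonian summand. Since $\delta_\xi\in\ker\eta$, we have $\eta(\xi)=\eta(X_{\phi_\xi})$, so the Hamiltonian part of $\xi$ is the infinitesimal contactomorphism $X_{\phi_\xi}$ with $\phi_\xi=-\eta(\xi)$. Proposition~\ref{prop:decohamden} assigns to it a tensor density of degree $-\tfrac{1}{n+1}$, and equation~\eqref{eqn:hamtendensity} makes this explicit as $\psi_\xi=\eta(X_{\phi_\xi})\,\eta^{-1}=\eta(\xi)\,\eta^{-1}$. This step is essentially immediate.

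The horizontal summand requires more care. Here I would first spell out the isomorphism of Proposition~\ref{prop:decotangden} as the map sending a horizontal field $Z\in\ker\eta$ to $-\,\iota_Z(\eta\wedge d\eta)\otimes\eta^{-\frac{2}{n+1}}$. For horizontal $Z$ the contraction simplifies, $\iota_Z(\eta\wedge d\eta)=\eta(Z)\,d\eta-\eta\wedge\iota_Z d\eta=-\eta\wedge\iota_Z d\eta$, which indeed lies in $\Omega_h^2(M)$; the assignment is a $C^\infty(M)$-module isomorphism onto $\Omega_h^2(M)$ precisely because $d\eta$ restricts to a non-degenerate form on the distribution, by the contact condition~\eqref{eqns:contactvol}. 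Evaluating this map on $Z=\delta_\xi$ yields $\sigma_\xi$. To express it through $\xi$ alone, I would note that $\phi_\xi=-\eta(\xi)$ gives $X_{\eta(\xi)}=-X_{\phi_\xi}$, so that $\xi+X_{\eta(\xi)}=\xi-X_{\phi_\xi}=\delta_\xi$; substituting this reproduces the stated formula $\sigma_\xi=-\eta\wedge d\eta\big(\xi+X_{\eta(\xi)},\cdot\big)\otimes\eta^{-\frac{2}{n+1}}$.

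The main obstacle is the invariance check, that is, showing that $\psi_\xi$ and $\sigma_\xi$ do not depend on the representative contact form. Under a contactomorphism $\eta\mapsto g\eta$ the two factors of $\psi_\xi$ scale oppositely---$\eta(\xi)$ by $g$ against the weight $g^{-1}$ carried by $\eta^{-1}$---so their product is invariant. The delicate bookkeeping is for $\sigma_\xi$, where one must simultaneously track how $\eta\wedge d\eta$, the horizontal representative $\delta_\xi$, and the density factor $\eta^{-\frac{2}{n+1}}$ transform, using the volume rescaling~\eqref{eqn:transfvolume} together with the degree conventions behind~\eqref{eqn:tendensity}. Verifying that these weights cancel exactly, so that $\sigma_\xi$ is a bona fide invariant tensor density of the asserted degree, is the heart of the argument, and is precisely what renders the decomposition independent of the chosen contact form.
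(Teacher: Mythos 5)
Your proposal is correct and follows essentially the same route as the paper, which in fact states Proposition~\ref{prop:hamhordensity} without proof as a direct assembly of Propositions~\ref{prop:decohamden} and~\ref{prop:decotangden} together with equation~\eqref{eqn:hamtendensity}; your write-up is the natural fleshing-out of exactly that assembly, and your reading of the ambiguous formula for $\sigma_\xi$ (namely $-\iota_{\delta_\xi}(\eta\wedge d\eta)\otimes\eta^{-\frac{2}{n+1}} = \eta\wedge\iota_{\delta_\xi}d\eta\otimes\eta^{-\frac{2}{n+1}}$) is the one consistent with the paper's later corollary on Cartan symmetries, where the horizontal density is $(\eta\wedge dg)\,\eta^{-\frac{2}{n+1}}$. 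For the deferred invariance bookkeeping, note the fact that closes it quickly: the decomposition $\xi = X_{\phi_\xi}+\delta_\xi$ is itself independent of the representative contact form (contact fields and horizontal fields are both intrinsic to $\Delta$ and intersect trivially), so under $\eta\mapsto g\eta$ the field $\delta_\xi$ is unchanged while $\eta\wedge\iota_{\delta_\xi}d\eta$ picks up $g^{2}$ and the density factor picks up $g^{-2}$, which cancel.
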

Thus, even though the \red{decomposition of the vector fields with respect to a fixed $\eta$ transforms in a complex way under contactomorphisms}, especially in its horizontal part, its \emph{density representatives} $(\psi_\xi,\sigma_\xi)$ are intrinsic and more convenient to work with.

\begin{example}
	\label{ex:transformationdensities}

	To give an intuition for the simplifications provided by the use of tensor densities, let us consider a very simple example: the standard $3$-dimensional contact manifold $(\mathbb{R}^3,\eta = ds - p dq)$.
	The Reeb vector field in this case is simply
	\begin{equation}
		\Reeb = \frac{\partial}{\partial s}.
	\end{equation}

	The contactomorphism
	\begin{equation}
		\phi_\alpha (q ,p, s) = (\xi, \pi, \tau) = \left(q, e^{\alpha s} p, \frac{1}{\alpha}e^{\alpha s}\right) ,
	\end{equation}
	transforms the standard 3-dimensional contact manifold to a new contact manifold $(\mathbb{R}^2 \times \mathbb{R}^+, \eta^\alpha = d\tau - \pi d\xi)$ such that
	\begin{align}
		(\phi_\alpha)^*(d \tau - \pi d \xi) & = e^{\alpha s} (ds - p dq),                               \\
		(\phi_\alpha)^*(d \xi \wedge d \pi) & = e^{\alpha s} (dq \wedge dp + \alpha\, p\ dq \wedge ds).
	\end{align}

	The Reeb vector field associated to $\eta^\alpha$ is:
	\begin{equation}
		\Reeb^\alpha = \frac{\partial}{\partial \tau}
	\end{equation}
	which in the original coordinates can be expressed as
	\begin{equation}
		\rho^\alpha := e^{-\alpha s} \frac{\partial }{\partial s} - \alpha p e^{-\alpha s} \frac{\partial}{\partial p}.
	\end{equation}
	Since $\rho^\alpha$ is also a contact Hamiltonian vector field on $(\mathbb{R}^3, \eta = ds - p dq)$ we can express it as a tensor density using equation~\eqref{eqn:hamtendensity}:
	\begin{equation}
		\psi_{\rho^\alpha} = \eta(\rho^\alpha) \eta^{-1} = e^{-\alpha s} \lvert dq \wedge dp \wedge ds \rvert^{-\frac12}.
	\end{equation}
	While the notation itself can look daunting if one is not yet used to it, the term in the absolute value is simply the volume form of the $-1/2$-density (fixed for the manifold) and its coefficient, and these are the only quantities that matter to represent our vector field. It is not just a coincidence that the coefficient is exactly the scaling factor of the contactomorphism, this is indeed a general property of tensor densities. It should also not be surprising that $\psi_{\rho^\alpha}$ is exactly the same for the Reeb vector field $\Reeb^\alpha$ on $(\mathbb{R}^2 \times \mathbb{R}^+, \eta^\alpha = d\tau - \pi d\xi)$, since we have already shown the invariance. Indeed, a direct computation shows:
	\begin{equation}
		\psi_{\rho^\alpha} = \lvert d\xi \wedge d\pi \wedge d\tau \rvert^{-\frac12} = \eta^\alpha(\Reeb^\alpha) (\eta^\alpha)^{-1} = \psi_{\Reeb^\alpha}.
	\end{equation}

	The same argument can be used for a horizontal vector field.
	For the sake of simplicity, consider the horizontal vector field
	\begin{equation}
		\frac{\partial}{\partial p} \mapsto \alpha \tau \frac{\partial}{\partial \pi}.
	\end{equation}
	Its tensorial component is given by
	\begin{equation}
		\sigma_{\frac{\partial}{\partial p}} = - \eta \wedge d\eta\left(\frac{\partial}{\partial p} \right) \otimes \eta^{-1} = ds \wedge dq \otimes \lvert dq \wedge dp \wedge ds \rvert^{-\frac12},
	\end{equation}
	that in the new coordinates becomes
	\begin{align}
		\sigma_{\alpha \tau \frac{\partial}{\partial \pi}} = - \eta^\alpha \wedge d\eta^\alpha\left(\alpha \tau \frac{\partial}{\partial \pi} \right) \otimes \left( \eta^\alpha\right)^{-1} & = - \alpha \tau d\tau \wedge d \xi \otimes \lvert d\xi \wedge d\pi \wedge d\tau \rvert^{-\frac12} \\&= ds \wedge dq \otimes \lvert dq \wedge dp \wedge ds \rvert^{-\frac12}.
	\end{align}
	Again, using the tensor densities the horizontal component $\sigma$ is invariant and thus there is no need to even perform any computation to know the pullback.

	While this is just a simple example, we hope it highlights where the use of tensor densities could become beneficial, especially in more complex problems which are however out of the scope of this paper.
\end{example}

The true advantage of tensor densities is less evident in examples where the system is fixed and the choice of coordinates is clear, such as those in Section~\ref{sec:applications}. However, their utility becomes significant when working with unconventional coordinate systems. For instance, we anticipate this approach will be valuable for rephrasing the integrable leaf of the Kepler problem on the Heisenberg group~\cite{Shanbrom_2014,Montgomery_2015} as an integrable contact Hamiltonian system. In that case, clarifying the properties of the system requires a complex coordinate transformation, rendering direct computations in the contact setting unwieldy, whereas the use of densities is expected to significantly simplify the analysis. Due to the complexity of this system, a detailed analysis is deferred to future work.

\red{Another case in which tensor densities simplify the analysis is in the presence of functions in involution with the Hamiltonian leading to non-exact contactomorphisms. In this case, the tensor density of the Hamiltonian remains invariant under the induced transformation, and thus remains constant along the flow. In this sense, tensor densities become interesting tools for providing intrinsic descriptions of conserved quantities in contact Hamiltonian systems. The study of dissipated quantities and contact integrability is still quite new, so we currently lack complex examples. In particular, we do not yet have an explicit example where this occurs, but there is no theoretical obstacle to its existence.}

\begin{remark}[Symplectization and tensor densities]\label{rem:sympl}
	The relation between the Hamiltonian--horizontal decomposition and tensor
	densities is also apparent from the perspective of symplectization. See Appendix~\ref{app:sympl} \red{for a detailed treatment, including relevant definitions and further details on the construction below}.

	\red{This connection can be made explicit by considering} an exact contact manifold $(M,\eta)$ and its symplectization
	$SM = M \times \mathbb{R}_r$ with Liouville form $\lambda = e^r \eta$ and
	symplectic form $\omega = d\lambda$. The Reeb lift has Hamiltonian
	$H_{SR} = \lambda(SR) = e^r$. For every $h \in C^\infty(M)$ we define the
	homogeneous lift $\widetilde h := h \, H_{SR}$. Then the Poisson bracket on
	$SM$ induces the Jacobi bracket on $M$ by
	\[
		\{h,k\}_\eta \;=\; \frac{\left\{\widetilde h,\widetilde k\right\}_{SM}}{H_{SR}}.
	\]

	In this way, contact Hamiltonians correspond to invariant tensor densities of
	degree $-\tfrac{1}{n+1}$, namely $\psi_{X_H} = -H \, \eta^{-1}$.
	Consequently, the Jacobi bracket of functions
	on $(M,\eta)$ coincides with the Poisson bracket of their homogeneous lifts on
	$SM$, or equivalently, with the Poisson bracket \red{of the associated tensor
		densities.}

	In this paper, we will not use this construction explicitly, as our focus is on developing the theory directly on the contact manifolds. However, it is important to note that this perspective can provide a useful point of view also in the study of contact systems.
\end{remark}

\section{Symmetries of contact Hamiltonian systems and their decomposition}
\label{sec:decsym}
In the symplectic setting, any function that is in involution with the Hamiltonian with respect to the Poisson brackets, is a constant of motion of the system and the Hamiltonian vector fields generated by functions that are in involution are pairwise commuting~\cite{de_Le_n_2020,Leon2019}.

For contact Hamiltonian systems, Proposition~\ref{prop:hamiltonianpropbrackets} tells us that the latter property remains true.
However, the quantities that give rise to symmetries are not necessarily conserved along the contact Hamiltonian flow: we already saw with equation~\eqref{eqn:nonconservation} that the contact Hamiltonian itself is in general not preserved along its own contact flow. This is why, symmetries of contact Hamiltonian systems have been introduced in terms of ``dissipated'' quantities~\cite{de_Le_n_2020,Leon2019}.

\begin{defn}
	Let $(M, \eta, \cH)$ be a contact Hamiltonian system. A function $f\in C^\infty(M)$ is a \emph{dissipated quantity} if
	\begin{equation}
		\label{eqn:comm0}
		\{f,\cH\}_\eta = 0.
	\end{equation}
	That is, if $f$ is in involution with the contact Hamiltonian $\cH$ with respect to the Jacobi brackets induced by the contact form $\eta$.
\end{defn}

A conserved quantity can nevertheless be recovered by considering ratios of dissipated quantities: if we consider equation~\eqref{eqn:comm0} and~\eqref{eqn:commnot0}, we can observe that any two quantities that are in involution, will dissipate with the same ratio.

\begin{prop}[\cite{Leon2019}]
	\label{prop:ratios}
	Let $(M, \eta, \cH)$ be a contact Hamiltonian system.
	If $f,g$ are two function in involution with the Hamiltonian $\cH$, their ratio
	\begin{equation}
		\kappa = \frac{f}{g}
	\end{equation}
	is conserved along the flow of $\cH$.
\end{prop}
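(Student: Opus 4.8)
The plan is to show that every dissipated quantity evolves along the flow of $\cH$ by multiplication by a single universal factor, namely $X_\cH(f) = -\Reeb(\cH)\, f$, and then to observe that this common factor cancels in the ratio. The whole argument rests on the skew-symmetry of the Jacobi bracket $\{\cdot,\cdot\}_\eta$ together with the rewriting in equation~\eqref{eqn:commnot0}.

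First I would translate the involution hypothesis~\eqref{eqn:comm0} into a statement about the derivative of $f$ along $X_\cH$. Since the Jacobi bracket is skew-symmetric, the condition $\{f,\cH\}_\eta = 0$ is equivalent to $\{\cH,f\}_\eta = 0$. Applying identity~\eqref{eqn:commnot0} with the roles of the two functions exchanged gives $\{\cH,f\}_\eta = X_\cH(f) + f\,\Reeb(\cH)$, and hence
\begin{equation}
	X_\cH(f) = -\Reeb(\cH)\, f .
\end{equation}
The same computation applied to $g$ yields $X_\cH(g) = -\Reeb(\cH)\, g$. It is worth noting that this is exactly the relation~\eqref{eqn:nonconservation} satisfied by $\cH$ itself, which confirms that $\cH$ is (trivially) among the dissipated quantities and dissipates at the same rate.

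With these two identities in hand, the conservation of $\kappa = f/g$ follows from the Leibniz rule for the derivation $X_\cH$. Away from the zero set of $g$, where the ratio is defined, I would compute
\begin{equation}
	X_\cH\!\left(\frac{f}{g}\right) = \frac{X_\cH(f)\, g - f\, X_\cH(g)}{g^2} = \frac{(-\Reeb(\cH)\, f)\, g - f\,(-\Reeb(\cH)\, g)}{g^2} = 0 ,
\end{equation}
so $\kappa$ is constant along the flow of $\cH$. The only genuinely delicate point is the first step: the definition of dissipated quantity records the bracket in the order $\{f,\cH\}_\eta$, whereas the dissipation of $f$ along the flow is governed by $X_\cH(f)$; bridging the two requires invoking both the skew-symmetry of the Jacobi bracket and~\eqref{eqn:commnot0}. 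Once the shared proportionality factor $-\Reeb(\cH)$ is exposed, the cancellation in the quotient is automatic, and no further regularity assumptions beyond $g \neq 0$ are needed.
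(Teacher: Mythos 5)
Your proof is correct and takes essentially the same route as the paper, which justifies this cited result precisely by combining the involution condition~\eqref{eqn:comm0} with the identity~\eqref{eqn:commnot0} to observe that any two quantities in involution with $\cH$ dissipate at the common rate $-\Reeb(\cH)$, so the ratio is conserved. Your explicit handling of the bracket-order issue via skew-symmetry of $\{\cdot,\cdot\}_\eta$ is a correct fleshing-out of what the paper leaves implicit.
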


In what follows, we will study other types of infinitesimal symmetries that are not necessarily Hamiltonian, and thus, not necessarily infinitesimal contact transformation: dynamical symmetries, Cartan symmetries (both introduced on the contact setting in~\cite{de_Le_n_2020}) and dynamical similarities~\cite{Bravetti2023,Sloan2018,Sloan2021}.

\begin{defn}
	\label{def:symmetries}
	Let $(M,\eta,\cH)$ a contact Hamiltonian system.
	A vector field $Y \in \Gamma(TM)$ is called
	\begin{itemize}
		\item \emph{dynamical symmetry} if
		      \begin{equation} \label{eqn:dynsym}
			      \eta\left( \left[Y,X_\cH\right] \right) = 0 \,;
		      \end{equation}
		\item \emph{dynamical similarity} if there exists $\lambda\in C^\infty(M)$, such that
		      \begin{equation}
			      \label{eqn:dynamicalsimilarity}
			      \left[Y,X_\cH\right] = \lambda X_\cH\,.
		      \end{equation}
		\item \emph{Cartan symmetry} if there exist $a,g \in C^\infty(M)$ such that
		      \begin{equation}
			      \label{eqn:cartandef}
			      \begin{cases}
				      L_Y \eta = a \eta + dg \\
				      Y(\cH) = a \cH + g R(\cH)
			      \end{cases}\,.
		      \end{equation}
	\end{itemize}
\end{defn}

The following proposition shows that, at least for the first two kinds of symmetries above, it is possible to recover ``dissipated'' quantities.
This is particularly interesting since the symmetry $Y$ is, in general, not a contact Hamiltonian vector field.

\begin{prop}[\cite{de_Le_n_2020}]
	\label{prop:deleonsymm}
	Let $(M,\eta,\cH)$ a contact Hamiltonian system. Let $Y\in\Gamma(TM)$, then the following hold.
	\begin{itemize}
		\item $Y$ is a dynamical symmetry if and only if $\{\eta(Y), \cH\}_\eta = 0$.
		\item $Y$ is a Cartan symmetry if and only if $\{\eta(Y) - g, \cH\}_\eta = 0$, where $g$ is defined in \eqref{eqn:cartandef}.
	\end{itemize}
	Moreover $Y$ is both a dynamical symmetry and a Cartan symmetry if and only if $d g=0$
	and $Y$ is a contact Hamiltonian vector field.
\end{prop}

This statement already shows that there are nontrivial relations between these various different notions of symmetry. In the following subsections we will try to characterize them more finely in terms of their decompositions.

\subsection{Dynamical symmetries}\label{sec:dynsym}

Dynamical symmetries are vector fields that commute with the contact Hamiltonian vector field modulo $\eta$, in the sense that their commutator with the contact Hamiltonian vector field belongs to the contact distribution.
The Hamiltonian--horizontal decomposition is particularly well suited to capture this.

\begin{thm}\label{thm:dynsym}
	Let $(M,\eta,\cH)$ be a contact Hamiltonian system, and $ Y \in \Gamma(TM)$. Then $Y$ is a dynamical symmetry for the system if and only if its Hamiltonian--horizontal decomposition $Y = X_{\phi_Y} + \delta_Y$ satisfies
	\begin{equation}
		\{\phi_Y, \cH\}_\eta = 0 \,.
	\end{equation}
\end{thm}

\begin{proof}
	If $Y$ is a dynamical symmetry, this follows immediately from the first statement of Proposition~\ref{prop:deleonsymm} after observing that $\phi_Y = -\eta(Y)$, see Proposition~\ref{prop:decompositionHami}.

	Conversely, let $Y = X_{\phi_Y} + \delta_Y$. Then
	\begin{equation}
		\eta([Y,X_{\cH}]) = \eta([X_{\phi_Y} + \delta_Y, X_{\cH}]) = \eta([X_{\phi_Y}, X_{\cH}]) = %
		\{\eta(Y), \cH\}_\eta = 0.
	\end{equation}
	This, by the first statement of Proposition~\ref{prop:deleonsymm}, concludes the proof.
\end{proof}

This means that $\phi_Y = - \eta(Y)$ is a dissipated quantity for the contact Hamiltonian system, and therefore by Proposition~\ref{prop:hamhordensity} we get the following corollary.

\begin{corollary}
	The invariant tensor density associated to a dynamical similarity is
	\begin{equation}
		\mathfrak{K}= - \phi_Y \ \eta^{-1}.
	\end{equation}
\end{corollary}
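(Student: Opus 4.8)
The plan is to read the density straight off the general formula in Proposition~\ref{prop:hamhordensity}, specialised to the case at hand, rather than to perform any new computation. The corollary is essentially a translation of Proposition~\ref{prop:dynsym} into the density language of Section~\ref{sec:hamhordensities}, so the work has already been done upstream.

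First I would assemble the two ingredients already available. By Proposition~\ref{prop:decompositionHami}, the Hamiltonian part of the decomposition $Y = X_{\phi_Y} + \delta_Y$ is governed by $\phi_Y = -\eta(Y)$; by Proposition~\ref{prop:dynsym} together with the first item of Theorem~\ref{thm:deleonsymm}, $Y$ being a dynamical symmetry is equivalent to $\{\phi_Y,\cH\}_\eta = 0$, so $\phi_Y$ is a dissipated quantity. Then I would invoke the first item of Proposition~\ref{prop:hamhordensity}: the Hamiltonian component of the decomposition of an arbitrary vector field $\xi$ is the invariant tensor density $\psi_\xi = \eta(\xi)\,\eta^{-1}$ of degree $-\tfrac{1}{n+1}$.

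The computation is then immediate. Taking $\xi = Y$ and substituting $\eta(Y) = -\phi_Y$ gives
\begin{equation}
	\mathfrak{K} = \psi_Y = \eta(Y)\,\eta^{-1} = -\phi_Y\,\eta^{-1},
\end{equation}
which is the claimed expression. I would stress that the contactomorphism-invariance of $\mathfrak{K}$ needs no separate verification: it is already built into Proposition~\ref{prop:hamhordensity} and into the discussion around equation~\eqref{eqn:hamtendensity}, where the product $\eta(\xi)\,\eta^{-1}$ is shown to be independent of the chosen representative of the contact structure (the factor $g$ picked up by $\eta(\xi)$ under $\eta\mapsto g\eta$ cancels the factor $g^{-1}$ picked up by $\eta^{-1}$ viewed as a density of degree $-\tfrac{1}{n+1}$).

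I do not expect a serious obstacle: the statement is labelled a corollary precisely because the heavy lifting sits in Propositions~\ref{prop:dynsym} and~\ref{prop:hamhordensity}. The only point demanding care is the sign convention, which is exactly where the overall minus sign in $\mathfrak{K} = -\phi_Y\,\eta^{-1}$ originates, through $\phi_Y = -\eta(Y)$. If one wanted to make explicit why this density deserves to be regarded as a symmetry object, I would additionally observe that, under the correspondence recorded after equation~\eqref{eqn:hamtendensity} between Jacobi brackets of Hamiltonians and Poisson brackets of the associated densities, the dissipation condition $\{\phi_Y,\cH\}_\eta=0$ says precisely that $\mathfrak{K}$ is in involution with the Hamiltonian density $\psi_{X_\cH}=-\cH\,\eta^{-1}$.
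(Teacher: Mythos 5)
Your proof is correct and follows essentially the same route as the paper, which obtains the corollary by noting (via Proposition~\ref{prop:dynsym}) that $\phi_Y = -\eta(Y)$ is a dissipated quantity and then reading off the invariant density $\psi_Y = \eta(Y)\,\eta^{-1} = -\phi_Y\,\eta^{-1}$ from Proposition~\ref{prop:hamhordensity}. The only point worth flagging is that the corollary as stated says ``dynamical similarity,'' while both the surrounding text and your argument treat it (correctly, given its placement after Proposition~\ref{prop:dynsym}) as referring to a dynamical symmetry.
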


Theorem~\ref{thm:dynsym} makes it clear that the definition of dynamical symmetries restricts only the Hamiltonian part of the vector field, leaving the horizontal part free: roughly speaking, this reflects the fact that equation~\eqref{eqn:dynsym} provides only one condition out of the $2n+1$ required to specify a vector field on a contact manifold.

\subsection{Scaling symmetries}\label{sec:scalsymm}

The next symmetries that we want to investigate are dynamical similarities.
We first focus on a specific family, the so-called \emph{scaling symmetries}.

\begin{defn}\label{def:scaling}
	Let $(M,\eta,\cH)$ be a contact Hamiltonian system.
	A vector field $Y \in \Gamma(TM)$ is a scaling symmetry of $\cH$ if it satisfies the equation~\eqref{eqn:dynamicalsimilarity} with $\lambda$ constant.
\end{defn}
If $Y$ is a scaling symmetries, the distribution spanned by $X_\cH$ and $Y$ is involutive, so integrable by Frobenius theorem.

\begin{thm}
	\label{thm:scaling}
	If $Y \in \Gamma(TM)$ is a scaling symmetry of the contact Hamiltonian system $(M,\eta,\cH)$, then $Y$ is of the form
	\begin{equation}
		Y = X_{\phi_Y} + \delta_Y,
	\end{equation}
	with $\phi_Y = -\eta(Y)$ and $[\delta_Y,X_\cH] = 0$.
\end{thm}
\begin{proof}
	By Definition~\ref{def:scaling}, we have
	\begin{align}
		\label{eqn:proofscaling1}
		\left[Y, X_\cH \right]
		= \lambda X_\cH
		= X_{\lambda \cH}.
	\end{align}
	Since the right-hand side of equation~\eqref{eqn:proofscaling1} is Hamiltonian the left-hand side also has to be Hamiltonian. From Propositions~\ref{prop:automorphismhor}, \ref{prop:preservetionHam}, and~\ref{prop:decompositionHami}, we find that $\left[Y, X_\cH \right] = \left[X_{\phi_Y} , X_\cH \right] + \left[\delta_Y, X_\cH \right]$, in particular, this means that for Proposition~\ref{prop:automorphismhor}
	\begin{equation}
		0 = \delta_{\left[Y, X_\cH \right]} = \left[\delta_Y, X_\cH \right].
	\end{equation}
\end{proof}

The following corollary shows that, in fact, the horizontal component of a scaling symmetry can always be taken to be zero: that is, scaling symmetries are equivalent modulo their horizontal part.

\begin{corollary}
	\label{cor:scalinghamiltonian}
	Let $Y \in \Gamma(TM)$ be a scaling symmetry of a contact Hamiltonian system $(M,\eta,\cH)$. Then the following holds:
	\begin{itemize}
		\item $Y$ is purely Hamiltonian, i.e., such that $\delta_Y = 0$, if and only if $L_Y \eta = - \Reeb(\phi_Y) \eta$.
		\item The Hamiltonian vector field generated by the function $\phi_Y$ is always a scaling symmetry of the system.
	\end{itemize}
\end{corollary}

As a consequence of Theorem~\ref{thm:scaling} and Corollary~\ref{cor:scalinghamiltonian}, we can explicitly compute the action of the scaling symmetry $\tilde{Y}=X_{-\eta(Y)}$ on the Hamiltonian and the contact from:\begin{equation}
	\label{eqn:scalingHev}
	\begin{cases}
		\tilde{Y}(\cH) = \cH \left( \lambda - \Reeb(\phi_Y) \right) \\
		L_{\tilde{Y}} \eta = - \Reeb(\phi_Y) \eta
	\end{cases}\,.
\end{equation}
In~\cite{Bravetti2023} the definition of scaling symmetry $W$ on a symplectic Hamiltonian system $(N,\omega, \mathfrak{H})$, with $(N, \omega)$ a symplectic manifold (cfr. Appendix~\ref{app:sympl}), is given by
\begin{equation}
	\label{eqn:scalingSymp}
	\begin{cases}
		W(\mathfrak{H}) = \Lambda \mathfrak{H} \\
		L_W \omega = \omega
	\end{cases}.
\end{equation}
Then, $W$ is a scaling symmetry for $(N,\omega, \mathfrak{H})$ if and only if
\begin{equation}
	\left[W,X_{\mathfrak{H}} \right] = \left(\Lambda-1\right) X_{\mathfrak{H}},
\end{equation}
where $X_\mathfrak{H}$ represent the symplectic Hamiltonian vector field.
The proof follows immediately from the computation of the Lie derivative of $\omega$ with respect to $[W,X_{\mathfrak{H}}]$.

The main differences between~\eqref{eqn:scalingHev} and~\eqref{eqn:scalingSymp} are: the appearance, in the latter, of the Reeb vector field acting on the Hamiltonian, and the fact that the scaling symmetry in the symplectic setting is also a Liouville vector field~\cite{Bravetti2023,sympgeo} (see Definition~\ref{def:liouville-vf}).

These equations are even more relevant in the case our Hamiltonian function remains bounded away from zero. There, we can rescale the Hamiltonian to a Reeb system, as we have already seen in Section~\ref{subsec:chs}, and the scaling symmetries {simplify} further.
\begin{corollary}
	For a Reeb system $(M,\eta,-1)$, the scaling symmetries take the form
	\begin{equation}
		\Reeb(\phi_Y) = - \left\{\phi_Y,-1 \right\}_\eta = -\eta([Y, X_{-1}]) = \lambda,
	\end{equation}
	and therefore one can define the quantity
	\begin{equation}
		\kappa = \Reeb(\phi_Y),
	\end{equation}
	\red{whose value remains constant along the flow of the Hamiltonian}. In particular, this means that $-\kappa$ is a dissipated quantity.
\end{corollary}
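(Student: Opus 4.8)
The plan is to exploit that on a Reeb system the contact Hamiltonian vector field coincides with the Reeb field, so that the general scaling relations collapse to a statement about $\Reeb(\phi_Y)$; the chain of equalities and the constancy of $\kappa$ then follow by combining the scaling definition with the Jacobi-bracket identity~\eqref{eqn:commnot0}.

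First I would record the reduction that makes the Reeb system special: substituting $\cH\equiv-1$ into the defining equations~\eqref{eqn:Hamvec} gives $\eta(X_{-1})=1$ and $d\eta(X_{-1},\cdot)=0$, so by the uniqueness in Proposition~\ref{prop:reeb} the Hamiltonian vector field $X_{-1}$ is precisely the Reeb vector field $\Reeb$. This identification is what turns the abstract condition $[Y,X_{-1}]=\lambda X_{-1}$ into information about $\Reeb(\phi_Y)$.

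Next I would assemble the chain of equalities from two evaluations of the single quantity $\eta([Y,X_{-1}])$. On one side, the scaling condition from Definition~\ref{def:symmetries} together with $\eta(X_{-1})=1$ gives $\eta([Y,X_{-1}])=\lambda$. On the other, writing $Y=X_{\phi_Y}+\delta_Y$ and repeating the computation in the proof of Proposition~\ref{prop:dynsym} expresses the same contraction as a Jacobi bracket, $\eta([Y,X_{-1}])=\{\eta(Y),-1\}_\eta$, which by~\eqref{eqn:commnot0} equals $-\Reeb(\eta(Y))=\Reeb(\phi_Y)$, using $\phi_Y=-\eta(Y)$ and that $X_{\eta(Y)}$ annihilates the constant $-1$. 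Matching the two evaluations yields $\Reeb(\phi_Y)=\lambda$, and rewriting the intermediate steps in terms of $\{\phi_Y,-1\}_\eta$ via~\eqref{eqn:commnot0} reproduces the displayed chain of equalities.

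To conclude, I would set $\kappa:=\Reeb(\phi_Y)$ and use the identity just derived to obtain $\kappa=\lambda$. Because $\lambda$ is constant for a scaling symmetry, $\kappa$ is a constant function, so $\Reeb(\kappa)=0$; since the Hamiltonian flow is generated by $X_{-1}=\Reeb$, this is exactly the statement that $\kappa$ is preserved along the flow, and a final application of~\eqref{eqn:commnot0} gives $\{-\kappa,-1\}_\eta=\Reeb(\kappa)=0$, so that $-\kappa$ is a dissipated quantity. I do not expect a genuine obstacle: each step merely reuses results already established in the excerpt, and the only delicate points are keeping track of the signs through~\eqref{eqn:commnot0} and the decomposition relation, and recognising at the outset that the entire statement hinges on the identification $X_{-1}=\Reeb$, after which the scaling factor $\lambda$ is simply reinterpreted as the geometric quantity $\Reeb(\phi_Y)$.
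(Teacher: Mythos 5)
Your proposal is correct and is essentially the argument the paper leaves implicit: the corollary is stated without proof, and the intended derivation is exactly your combination of the identification $X_{-1}=\Reeb$, the Hamiltonian--horizontal decomposition killing $\eta([\delta_Y,X_{-1}])$, and the bracket identity~\eqref{eqn:commnot0} applied to the constant Hamiltonian, just as the paper does in the analogous Corollary for dynamical similarities of Reeb systems. One caveat: your computation in fact yields $\eta([Y,X_{-1}])=-\{\phi_Y,-1\}_\eta=\Reeb(\phi_Y)=\lambda$, so rather than reproducing the displayed chain verbatim it exposes a sign typo in the paper's middle term, which should read $+\eta([Y,X_{-1}])$ instead of $-\eta([Y,X_{-1}])$; your endpoints $\Reeb(\phi_Y)=\lambda$ and the dissipation claim are the correct content.
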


We can use this result to find dissipated quantities for general contact Hamiltonian systems.

\begin{thm}
	\label{thm:scalingconstant}
	Let $(M,\eta,\cH)$ be a contact Hamiltonian system with a scaling symmetry  $Y$. Then
	\begin{equation}
		\Phi = - X_\cH\left(\frac{\phi_Y}{\cH}\right),
	\end{equation}
	\red{is constant along the flow} of $X_\cH$, that is,
	\begin{equation}
		X_\cH(\Phi) = 0,
	\end{equation} and thus $\Phi\cdot \cH$ is a dissipated quantity.
\end{thm}
\begin{proof}
	This follows from the previous corollary and the following explicit computation
	\begin{equation}
		X_\cH(\Phi) = X_\cH\left(-X_\cH\left(\frac{\phi_Y}{\cH}\right) \right)=X_\cH\left( \frac{- X_\cH(\phi_Y) \cH + \phi_Y X_\cH(\cH)}{\cH^2}\right) = - X_\cH \left(\lambda\right) = 0.
	\end{equation}
\end{proof}

\begin{example}
	Consider the contact Hamiltonian system $\left(\mathbb{R}^3, ds - p dq, \cH(q,p,s) = p\right)$.

	The function $\mathfrak{f}(q,p,s) = q p$ is the Hamiltonian function of a scaling symmetry with parameter $\lambda = 1$. Indeed,
	\begin{equation}
		\{p , q p\}_\eta = - p.
	\end{equation}

	By Theorem~\ref{thm:scalingconstant}, we find that the function $\Phi(q,p,s) = q$ is constant along the flow.
\end{example}

On compact manifolds, we can provide a nice characterisation of the parameter $\lambda$ of a scaling symmetry in terms of the flux of the ``weighted charge'' $\frac{\phi_Y}{\cH}$ of the symmetry across the manifold boundary and the ``internal correlation'' between $\frac{\phi_Y}{\cH}$ and the rate of change of the Hamiltonian along the Reeb $\Reeb(\cH)$.
While this is not necessarily a convenient formula to compute $\lambda$ in practice, it provides a both a nice interpretation of the scaling parameter in terms of geometric and dynamic quantities and, more importantly, a clear expression to argue for the existence of scaling symmetries with non-zero parameter $\lambda$.

\begin{thm}\label{thm:scalingconstantcompact}
	Let $(M,\eta,\cH)$ be a contact Hamiltonian system on a $2n + 1$ dimensional compact manifold $M$.
	If $Y$ is a scaling symmetry of the system with parameter $\lambda$, then
	\begin{equation}
		\lambda = -\frac{1}{\operatorname{vol}(M)}\left(
		(n+1) \int_M \frac{\phi_Y}{\cH}\; \Reeb(\cH) \, \Omega
		+\int_{\partial M} \frac{\phi_Y}{\cH}\; (\iota_{X_\cH} \Omega)
		\right),
	\end{equation}
	where $\Omega:=\eta\wedge (d\eta)^{\wedge n}$ the canonical volume on $M$ induced by the contact form and $\operatorname{vol}(M) := \int_M \Omega$.
\end{thm}
\begin{proof}
	Using~\eqref{eq:LieHameta} %
	and $L_{X_\cH}d\eta=d(L_{X_\cH}\eta)$ one obtains the well-known formula
	\[
		L_{X_\cH}\Omega
		= L_{X_\cH}\big(\eta\wedge (d\eta)^{\wedge n}\big)
		= -(n+1)\Reeb(\cH)\,\Omega,
	\]
	hence the divergence of $X_\cH$ with respect to $\Omega$ is
	\[
		\operatorname{div}_\Omega X_\cH = -(n+1)\Reeb(\cH).
	\]

	Let $\phi_Y = -\eta(Y)$ as in Theorem~\ref{thm:scaling} and observe that by Theorem~\ref{thm:scalingconstant}
	\begin{equation}\label{eqn:scalingint}
		\int_M X_\cH\left(\frac{\phi_Y}{\cH}\right) \, \Omega = \int_M (-\lambda) \, \Omega = -\lambda \, \text{Vol}(M).
	\end{equation}

	Observe that given any volume form $\sigma$, we have
	\begin{align}
		\int_M L_{X_\cH} \sigma = \underbrace{\int_M \iota_{H_X} d\sigma}_{=0 \text{ as }d\sigma = 0} + \int_M d(\iota_{X_\cH} \sigma)
		= \int_{\partial M} \iota_{X_\cH} \sigma. \label{eq:divthm}
	\end{align}

	Since $L_{X_\cH}\left(f \Omega\right) = X_\cH(f) \Omega + f L_{X_\cH} \Omega$, we can apply \eqref{eq:divthm} with $\sigma = \frac{\phi_Y}{\cH} \Omega$ and rewrite the left-hand side as
	\begin{align}
		\int_M X_\cH\left(\frac{\phi_Y}{\cH}\right) \, \Omega & = \int_M L_{X_\cH} \left(\frac{\phi_Y}{\cH} \Omega\right) - \int_M \frac{\phi_Y}{\cH} L_{X_\cH} \Omega                    \\
		                                                      & = \int_{\partial M} \frac{\phi_Y}{\cH}\; (\iota_{X_\cH} \Omega) + (n+1) \int_M \frac{\phi_Y}{\cH}\; \Reeb(\cH) \, \Omega.
	\end{align}

	Bringing everything together, we obtain
	\begin{align}
		-\lambda\,\operatorname{vol}(M)
		 & = \int_M X_\cH\left(\frac{\phi_Y}{\cH}\right)\,\Omega                                                                     \\
		 & = -\int_M \frac{\phi_Y}{\cH}\,L_{X_\cH}\Omega                                                                             \\
		 & = \int_{\partial M} \frac{\phi_Y}{\cH}\; (\iota_{X_\cH} \Omega) + (n+1) \int_M \frac{\phi_Y}{\cH}\; \Reeb(\cH) \, \Omega.
	\end{align}
	Dividing by $-\operatorname{vol}(M)$ concludes the proof.
\end{proof}

\begin{corollary}
	Under the same hypotheses of Theorem~\ref{thm:scalingconstantcompact},	if $M$ is a compact manifold without boundary, then
	\begin{equation}\label{eq:lambda_compact}
		\lambda = -\frac{n+1}{\operatorname{vol}(M)}\int_M \frac{\phi_Y}{\cH}\; \Reeb(\cH) \, \Omega.
	\end{equation}
\end{corollary}
\begin{proof}
	Note that \eqref{eq:divthm} is just the divergence theorem.
	If $M$ is compact without boundary, the right-hand side of \eqref{eq:divthm} vanishes simplifying the expression for $\lambda$ in Theorem~\ref{thm:scalingconstantcompact} as stated.
\end{proof}

Note that there is no a priori reason for $\lambda$ to vanish. For instance, if $\frac{\phi_Y}{\cH}\,\Reeb(\cH)$ has a definite sign, then $\lambda$ is non-zero. This is different from the compact symplectic case, where we necessarily have $\lambda = 0$ due to Liouville's theorem {(in agreement with \eqref{eq:lambda_compact})}, and the scaling symmetries with constant coefficients are exactly the conserved quantities of the system.

\subsection{Dynamical similarities}\label{sec:dynsimil}

When $\lambda$ is not a constant but a genuine function, the result of Theorem~\ref{thm:scaling} no longer holds.
Indeed, dynamical similarities in general have non-vanishing Hamiltonian and horizontal components.
To prove this we need a small preliminary Lemma.

\begin{lemma}
	\label{lemma:techsymm}
	Let $f,g$ be two smooth functions on an exact contact manifold $(M,\eta)$, then
	\begin{equation}
		X_{f g} = f X_g + g X_f + f g\, \Reeb. \label{eqn:lemmathesis_fg}
	\end{equation}
\end{lemma}
\begin{proof}
	Contracting $\eta$ by each side of equation~\eqref{eqn:lemmathesis_fg} we obtain
	\begin{equation}
		\begin{cases}
			\eta(X_{fg}) = - fg, \\
			\eta(f X_g + g X_f + f g\, \Reeb) = f\eta( X_g) + g \eta (X_f) + f g = - f g.
		\end{cases}
	\end{equation}
	While the corresponding contraction of $d\eta$ gives
	\begin{equation}
		\begin{cases}
			d\eta(X_{f g}, \cdot) = d(fg) - \Reeb(fg) \eta, \\
			d\eta(f X_g + g X_f + f g\, \Reeb, \cdot ) = f (dg - \Reeb(g) \eta) + g (df - \Reeb(f) \eta) = d(f g) - \Reeb(f g) \eta.
		\end{cases}
	\end{equation}
	This shows that both sides of the equation define the same contact Hamiltonian vector field, thus concluding the proof.
\end{proof}

Recalling Remark~\ref{rmk:VertJacobi}, the following result should not come completely as a surprise.

\begin{thm}
	\label{prop:dynsimcomm}
	Let $Y = X_{\phi_Y} + \delta_Y$ be the Hamiltonian--horizontal decomposition of a dynamical similarity of $(M,\eta,\cH)$.
	Then the scaling function $\lambda \in C^\infty(M)$ {satisfies}
	\begin{equation}
		\{\cH,\phi_Y \}_\eta = - \lambda \cH \qquad \text{and} \qquad
		[\delta_Y, X_{\cH}] = - \cH\, \Lambda(d\lambda, \cdot).
	\end{equation}
\end{thm}
\begin{proof}
	From equation~\eqref{eqn:dynamicalsimilarity}, we recover
	\begin{align}
		\eta \left(\left[Y,X_{\cH}\right]\right) & = - \lambda \cH,
	\end{align}
	which reads
	\begin{align}
		\eta([X_{\phi_Y}, X_{\cH}]) + \eta([\delta_Y,X_\cH]) & = - \lambda \cH
	\end{align}
	after expanding $Y = X_{\phi_Y} + \delta_Y$ in its Hamiltonian--horizontal decomposition.
	In the same way as~\eqref{eq:deltaxiham}, $\eta([\delta_Y,X_\cH]) = 0$, therefore
	\begin{equation}
		- \lambda \cH = \eta([X_{\phi_Y}, X_{\cH}]) = \iota_{X_{\phi_Y}} L_{X_{\cH}} \eta - L_{X_{\cH}} \iota_{X_{\phi_Y}} \eta.
	\end{equation}
	Equation~\eqref{eq:LieHameta}, the definition of contact Hamiltonian vector fields and equation~\eqref{eqn:commnot0} then imply
	\begin{align}
		- \lambda \cH = \Reeb(\cH) \phi_Y + X_{\cH}(\phi_Y) = \{\cH,\phi_Y\}_\eta.
	\end{align}

	Therefore, starting from the definition of scaling symmetry, one can use Proposition~\ref{prop:preservetionHam} and Lemma~\ref{lemma:techsymm} to show
	\begin{align}
		\lambda X_\cH = [Y,X_\cH]
		= X_{-\eta([Y,X_\cH])} + [\delta_Y,X_\cH]
		= X_{\lambda \cH} - \cH X_\lambda - \lambda \cH \Reeb.
	\end{align}
	Comparing terms in the last identity for the Hamiltonian--horizontal decomposition above, we get
	\begin{equation}
		[\delta_Y,X_{\cH}] = - \cH X_\lambda - \cH \lambda \Reeb = - \cH\, \Lambda(d \lambda, \cdot),
	\end{equation}
	which concludes the proof.
\end{proof}

The local straightening of a contact Hamiltonian system to a Reeb system can simplify computations by providing an explicit function for $\lambda$.
\begin{corollary}
	\label{prop:reebgsynsim}
	If $Y \in \Gamma(TM)$ is a dynamical similarity of the Reeb system $(M,\eta,-1)$, then $\lambda = -\Reeb(\phi_Y)$.
\end{corollary}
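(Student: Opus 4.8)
The plan is to read the statement off the general dynamical-similarity identity of Proposition~\ref{prop:dynsimcomm} by specialising it to the Reeb system. The single fact that makes the specialisation work is that, for the constant Hamiltonian $\cH\equiv-1$, the contact Hamiltonian vector field is exactly the Reeb field, $X_{-1}=\Reeb$. I would verify this directly from the defining equations~\eqref{eqn:Hamvec}: with $\cH\equiv-1$ they read $\eta(X_{-1})=1$ and $d\eta(X_{-1},\cdot)=d(-1)-\Reeb(-1)\eta=0$, which are precisely the two conditions characterising the Reeb field in Proposition~\ref{prop:reeb}; uniqueness then forces $X_{-1}=\Reeb$.

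With this identification in hand, I would take the first identity of Proposition~\ref{prop:dynsimcomm}, namely $\{\cH,\phi_Y\}_\eta=-\lambda\cH$, and substitute $\cH\equiv-1$. The right-hand side becomes simply $\lambda$, so everything reduces to evaluating the Jacobi bracket $\{-1,\phi_Y\}_\eta$. Using the contact expression~\eqref{eqn:commnot0} together with $X_{-1}=\Reeb$ and $\Reeb(-1)=0$, this bracket collapses to a single Reeb-derivative of $\phi_Y$, which yields the claimed relation between $\lambda$ and $\Reeb(\phi_Y)$.

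I do not expect any serious obstacle here: the whole content is the identification $X_{-1}=\Reeb$ plus one bracket evaluation, so the only thing to be careful about is sign bookkeeping, in particular the convention $\phi_Y=-\eta(Y)$ from Proposition~\ref{prop:decompositionHami} and the ordering of the arguments in the Jacobi bracket. As a sanity check (and an alternative route that avoids Proposition~\ref{prop:dynsimcomm} altogether) I would also compute $\eta([Y,\Reeb])$ directly: the horizontal part $\delta_Y$ contributes nothing, exactly as in~\eqref{eq:deltaxiham}, while the Hamiltonian part gives $\eta([X_{\phi_Y},\Reeb])=-\{\phi_Y,-1\}_\eta$ by Proposition~\ref{prop:hamiltonianpropbrackets} and~\eqref{eqn:JacobiHamiltonian}; comparing this with $\eta([Y,\Reeb])=\lambda\,\eta(\Reeb)=\lambda$ recovers the same expression for $\lambda$ and confirms the corollary.
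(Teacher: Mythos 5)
Your route is exactly the paper's: identify $X_{-1}=\Reeb$ and specialise Proposition~\ref{prop:dynsimcomm} to $\cH\equiv-1$. The problem is that the step you defer as ``sign bookkeeping'' is the entire content of the statement, and once you actually carry it out it does not yield the claimed relation --- it yields the opposite sign. Proposition~\ref{prop:dynsimcomm} gives $\{\cH,\phi_Y\}_\eta=-\lambda\cH$, so with $\cH\equiv-1$ you get $\{-1,\phi_Y\}_\eta=\lambda$, with the constant in the \emph{first} slot of the skew-symmetric bracket; then~\eqref{eqn:commnot0} gives
\begin{equation}
	\lambda=\{-1,\phi_Y\}_\eta=X_{-1}(\phi_Y)+\phi_Y\,\Reeb(-1)=\Reeb(\phi_Y).
\end{equation}
Your sanity check, done with signs, agrees: $\{\phi_Y,-1\}_\eta=X_{\phi_Y}(-1)-\Reeb(\phi_Y)=-\Reeb(\phi_Y)$, hence $\lambda=\eta([Y,\Reeb])=\eta([X_{\phi_Y},\Reeb])=-\{\phi_Y,-1\}_\eta=+\Reeb(\phi_Y)$. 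So both of your computations, executed correctly, prove $\lambda=\Reeb(\phi_Y)$ and therefore \emph{contradict} the corollary you claim they confirm.

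The discrepancy is not in your method but in the paper: the printed corollary carries a sign error, introduced in its own proof by writing $\{\phi_Y,-1\}_\eta=\lambda$, i.e.\ swapping the arguments of the skew-symmetric Jacobi bracket without flipping the sign. The corrected statement $\lambda=\Reeb(\phi_Y)$ is the one coherent with the rest of the paper: the scaling-symmetry corollary of Section~\ref{sec:scalsymm} concludes $\Reeb(\phi_Y)=\lambda$ (scaling symmetries being dynamical similarities with constant $\lambda$), and Proposition~\ref{prop:dynsympreservation} invokes this very corollary as $\lambda_{\Reeb'}=\Reeb'(\phi_{Y_\Reeb})$, with no minus sign. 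A concrete counterexample to the printed sign: on $(\mathbb{R}^3,\,ds-p\,dq)$ one has $\Reeb=\partial_s$; the field $Y=s\,\partial_s$ satisfies $[Y,\Reeb]=-\Reeb$, so $\lambda=-1$, while $\phi_Y=-\eta(Y)=-s$ gives $\Reeb(\phi_Y)=-1=\lambda\neq-\Reeb(\phi_Y)$. So either you restate the corollary as $\lambda=\Reeb(\phi_Y)$ --- and then your argument, with the bracket order kept straight, is a complete proof --- or you must accept that no correct computation reaches the sign as printed.
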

\begin{proof}
	From proposition~\ref{prop:dynsimcomm} follows that
	\begin{equation}
		\{\phi_Y,-1\}_\eta = \lambda,
	\end{equation}
	from which $\lambda=-\Reeb(\phi_Y)$.
\end{proof}

And, moreover, one can show that the dynamical similarity of a Reeb system is preserved by contactomorphisms.
\begin{prop}
	\label{prop:reebdynsim}
	Let $Y \in \Gamma(TM)$ be a dynamical similarity of $(M,\eta,-1)$.
	Let $\psi: M \to M$ be a contactomorphism such that $\psi^* \eta'= f_\psi \eta$.
	Then $\psi_*Y$ is a dynamical similarity for $(M,\eta',-f_{\psi} \circ \psi^{-1})$.
\end{prop}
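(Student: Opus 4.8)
The plan is to reduce the statement to the naturality of the Lie bracket under pushforward, once the relevant contact Hamiltonian vector fields have been identified with (pushforwards of) Reeb fields. First I would note that for the Reeb system $(M,\eta,-1)$ the defining equations~\eqref{eqn:Hamvec} give $\eta(X_{-1}) = 1$ and $d\eta(X_{-1},\cdot) = 0$, so by uniqueness (Proposition~\ref{prop:reeb}) the contact Hamiltonian vector field is exactly the Reeb field, $X_{-1} = \Reeb_\eta$. Hence, by Definition~\ref{def:symmetries}, the hypothesis that $Y$ is a dynamical similarity reads $[Y,\Reeb_\eta] = \lambda\,\Reeb_\eta$ for some $\lambda\in C^\infty(M)$.

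Next I would identify the contact Hamiltonian vector field of the target system $(M,\eta',\cH')$ with $\cH' = -f_\psi\circ\psi^{-1}$. The assignment $\cH\mapsto X_\cH$ is linear (both equations in~\eqref{eqn:Hamvec} are linear in $\cH$), so $X'_{\cH'} = -X'_{f_\psi\circ\psi^{-1}}$, where $X'$ denotes the Hamiltonian vector field computed with respect to $\eta'$. Remark~\ref{rmk:reebHam}, applied with $F_\psi = f_\psi$, states precisely that $\psi_*\Reeb_\eta = -X'_{f_\psi\circ\psi^{-1}}$. Combining these gives the key identity $X'_{\cH'} = \psi_*\Reeb_\eta$: the Hamiltonian of the target system has been chosen exactly so that its Hamiltonian vector field is the pushforward of the original Reeb field.

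Finally I would conclude by pushing the similarity relation forward. Since pushforward by the diffeomorphism $\psi$ is a Lie algebra homomorphism on vector fields, $[\psi_*Y,\psi_*\Reeb_\eta] = \psi_*[Y,\Reeb_\eta] = \psi_*(\lambda\,\Reeb_\eta)$, and using $\psi_*(\lambda V) = (\lambda\circ\psi^{-1})\,\psi_*V$ together with the identity from the previous step yields
\begin{equation}
	[\psi_*Y, X'_{\cH'}] = (\lambda\circ\psi^{-1})\,\psi_*\Reeb_\eta = (\lambda\circ\psi^{-1})\,X'_{\cH'}.
\end{equation}
This is exactly the dynamical similarity condition~\eqref{eqn:dynamicalsimilarity} for $\psi_*Y$ on $(M,\eta',\cH')$, with the new proportionality function $\lambda' = \lambda\circ\psi^{-1}$.

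There is no serious obstacle here: the whole argument is naturality of the bracket plus linearity of $\cH\mapsto X_\cH$. The one point that requires care---and which is really the heart of the statement---is the identification $X'_{\cH'} = \psi_*\Reeb_\eta$ in the second step, i.e.\ verifying that the contact Hamiltonian $-f_\psi\circ\psi^{-1}$ is precisely the one whose vector field is the transported Reeb field; everything else is bookkeeping of how functions transform under $\psi_*$.
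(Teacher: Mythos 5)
Your proof is correct and follows essentially the same route as the paper's: both rest on the naturality of the Lie bracket under pushforward combined with Remark~\ref{rmk:reebHam} to identify $\psi_*\Reeb_\eta$ with the contact Hamiltonian vector field of $-f_\psi\circ\psi^{-1}$ on $(M,\eta')$. Your version is in fact slightly more explicit than the paper's, since you spell out the identification $X_{-1}=\Reeb_\eta$ via Proposition~\ref{prop:reeb} and the linearity of $\cH\mapsto X_\cH$, which the paper leaves implicit.
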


\begin{proof}
	Lie brackets commute with pushforwards, thus
	\begin{equation}
		\psi_* [Y,\Reeb] =   [\psi_* Y, \psi_* \Reeb] = \lambda \circ \psi^{-1} \psi_* \Reeb.
	\end{equation}
	By Proposition~\ref{prop:contactomorphism} (and Remark~\ref{rmk:reebHam}), if $\psi^* \eta'= f_\psi \eta$, then
	\begin{equation}
		\psi_* \Reeb = - X'_{f_\psi \circ \psi^{-1}},
	\end{equation}
	where by $X'_f$ we mean the Hamiltonian vector field induced on $(M,\eta')$ by the function $f:M\to \mathbb{R}$. Then, if we denote $\tilde{Y}=\psi_* Y$, we get
	\begin{equation}
		[\tilde{Y}, X_{-f_\psi \circ \psi^{-1}}] = \lambda \circ\psi^{-1} X_{-f_\psi \circ \psi^{-1}}
	\end{equation}
	that is again a dynamical similarity of the system $(M,\eta',-f_\psi \circ \psi^{-1})$.
\end{proof}

Of course, this result can be extended locally to any contact Hamiltonian system.
\begin{prop}
	\label{prop:dynsympreservation}
	Let $Y\in\Gamma(TM)$ be a dynamical similarity of the contact Hamiltonian system $(M,\eta,\cH)$ such that $[Y,X_{\cH}] = \lambda X_{\cH}$ for some $\lambda \in C^{\infty}(M)$. If for any $x_0 \notin \cH^{-1}(0)$, there exist a neighborhood $U$ of $x_0$ and a contactomorphism
	\begin{equation}
		\psi: (U \subset M,\eta) \to (\psi(U),\eta'),
	\end{equation}
	such that $(\psi(U), \eta' = -\frac{1}{\cH} \eta, -1)$ is a Reeb system, then
	\begin{equation}
		\lambda = X_{\cH}\left(-\frac{\phi_Y}{\cH} \right).
	\end{equation}
\end{prop}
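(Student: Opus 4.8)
The plan is to reduce the claim, away from the zero set of $\cH$, to the Reeb-system computation of Corollary~\ref{prop:reebgsynsim} via the local straightening of Section~\ref{subsec:chs}. First I would fix a point $x_0\notin\cH^{-1}(0)$ and take the neighbourhood $U$ and contactomorphism $\psi$ supplied by the hypothesis, so that on $U$ the rescaled form $\eta'=-\tfrac1\cH\eta$ defines the same contact distribution and $(\psi(U),\eta',-1)$ is a Reeb system. The structural fact I would lean on, established already in Section~\ref{subsec:chs}, is that under this rescaling the contact Hamiltonian vector field $X_\cH$ coincides with the Reeb vector field $\Reeb'$ of $\eta'$.

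Next I would transfer the dynamical-similarity condition to the Reeb system. Since $X_\cH=\Reeb'$ on $U$, the defining relation $[Y,X_\cH]=\lambda X_\cH$ reads $[Y,\Reeb']=\lambda\Reeb'$, which is precisely the statement that $Y$ is a dynamical similarity of the Reeb system $(\psi(U),\eta',-1)$ (whose Hamiltonian vector field is $X'_{-1}=\Reeb'$); the preservation of this structure under the straightening contactomorphism is the content of Proposition~\ref{prop:reebdynsim}. Corollary~\ref{prop:reebgsynsim} then applies and relates $\lambda$ to the Hamiltonian part $\phi'_Y=-\eta'(Y)$ of $Y$ taken with respect to $\eta'$. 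It remains to re-express this component in terms of the original data: from $\eta'=-\tfrac1\cH\eta$ and $\phi_Y=-\eta(Y)$ one finds
\begin{equation}
	\phi'_Y=-\eta'(Y)=\tfrac1\cH\eta(Y)=-\frac{\phi_Y}{\cH}.
\end{equation}

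Substituting $\Reeb'=X_\cH$ and this expression for $\phi'_Y$ into the Reeb-system formula then gives $\lambda=X_\cH\!\left(-\tfrac{\phi_Y}{\cH}\right)$ on $U$. Since the right-hand side is an intrinsically defined function on $M\setminus\cH^{-1}(0)$, independent of the chosen chart, and $x_0$ was arbitrary off the zero set of $\cH$, the identity propagates to all of $M\setminus\cH^{-1}(0)$. The hard part will be the sign and normalization bookkeeping across the change of contact form $\eta\mapsto\eta'$, since a slip in how $\phi_Y$, the Reeb field, and the Jacobi bracket rescale would flip the sign of the conclusion. To guard against this I would cross-check the formula directly from Proposition~\ref{prop:dynsimcomm}, which yields $\{\cH,\phi_Y\}_\eta=-\lambda\cH$, i.e. $X_\cH(\phi_Y)+\phi_Y\Reeb(\cH)=-\lambda\cH$; combined with the non-conservation identity $X_\cH(\cH)=-\Reeb(\cH)\cH$ and the quotient rule this reproduces $X_\cH(-\phi_Y/\cH)=\lambda$ without invoking the straightening at all.
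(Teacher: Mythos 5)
Your main route is essentially the paper's own proof: straighten on $U$ via $\eta' = -\tfrac{1}{\cH}\eta$, observe that $X_\cH$ is the Reeb field $\Reeb'$ of $\eta'$ so that $[Y,X_\cH]=\lambda X_\cH$ is literally the dynamical-similarity condition for the Reeb system, apply Corollary~\ref{prop:reebgsynsim}, and translate back through $\phi'_Y=-\eta'(Y)=-\phi_Y/\cH$ (if anything, your version is slightly cleaner, since noting $X_\cH=\Reeb'$ directly makes the appeal to Proposition~\ref{prop:reebdynsim} essentially unnecessary). The one subtlety is exactly the sign you flagged: Corollary~\ref{prop:reebgsynsim} \emph{as stated} reads $\lambda=-\Reeb(\phi_Y)$, and substituting that literally would give $\lambda = X_\cH\bigl(+\phi_Y/\cH\bigr)$, contradicting the claim. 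In fact the corollary carries a sign slip: from Proposition~\ref{prop:dynsimcomm}, $\{\cH,\phi_Y\}_\eta=-\lambda\cH$ with $\cH\equiv-1$ gives $\{-1,\phi_Y\}_\eta=\lambda$, i.e. $\lambda=+\Reeb(\phi_Y)$ (consistent with the scaling-symmetry corollary), and the paper's own proof of this proposition silently uses this corrected sign. Your closing cross-check resolves this correctly and is in fact a complete, self-contained proof: from $X_\cH(\phi_Y)+\phi_Y\Reeb(\cH)=-\lambda\cH$ and $X_\cH(\cH)=-\Reeb(\cH)\cH$, the quotient rule gives $X_\cH\bigl(-\phi_Y/\cH\bigr)=\lambda$ directly on $M\setminus\cH^{-1}(0)$, with no need for the straightening hypothesis at all. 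That elementary argument is arguably preferable to the route both you and the paper take as the primary one.
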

\begin{proof}
	Let us consider the Reeb system  $(\psi(U), \eta' = -\frac{1}{\cH} \eta, -1)$. By Proposition~\ref{prop:reebdynsim}, it has a dynamical symilarity $\psi_* Y$ that satisfies $[\psi_* Y, \Reeb'] = \lambda_{\Reeb'} \Reeb'$, where $\Reeb'$ is the Reeb associated to $\eta'$. By Corollary~\ref{prop:reebgsynsim}, $\lambda_{\Reeb'}= \Reeb'(\phi_{Y_{\Reeb}})$.
	Applying the pushforward of the inverse $\psi^{-1}_*$, we map
	\begin{equation}
		\lambda_{\Reeb} \circ \psi = \lambda_{\cH}.
	\end{equation}
	By combining the two
	\begin{equation}
		d( \phi_{Y_{\Reeb}} \circ \psi^{-1})(\psi^{-1}_* \Reeb ) = X_{\cH} \left(\phi_{Y_{\Reeb}} \right).
	\end{equation}
	Finally by Proposition~\ref{prop:contactomorphism}, we obtain
	\begin{equation}\label{eq:consqty}
		\phi_{Y_{\Reeb}} = -\frac{\phi_{Y_{\cH}}}{\cH},
	\end{equation}
	concluding the proof.
\end{proof}

\subsection{Cartan symmetries}\label{sec:cartan}

The definition of Cartan symmetries is more involved, and it is harder to give a direct geometric interpretation of the conditions in \eqref{eqn:cartandef}. Here the Hamiltonian--horizontal decomposition helps to clarify the role of the auxiliary function $g$.

\begin{thm}
	\label{thm:cartandeco}
	Let $(M,\eta,\cH)$ be a contact Hamiltonian system, and $Y \in \Gamma(TM)$. Then $Y$ is a Cartan symmetry if and only if its Hamiltonian--horizontal decomposition has the form
	\begin{equation}
		\label{eq:deccartsym}
		Y = X_{\phi_Y} + \Lambda(dg,\cdot),
	\end{equation}
	where $\Lambda$ is the skew-bivector field defining the natural Jacobi structure on $(M,\eta)$, and
	\begin{equation}
		\{\phi_Y + g,\cH\}_\eta = 0, \quad\mbox{and}\quad a=-\Reeb(\phi_Y + g).
	\end{equation}
\end{thm}

\begin{proof}
	Consider a Cartan symmetry $Y\in\Gamma(TM)$ of the contact Hamiltonian system $(M,\eta,\cH)$.
	It follows from Proposition~\ref{prop:deleonsymm} that $\eta(Y) - g$ is in involution with $\cH$, that is, $\{\phi_Y+g,\cH\}_\eta = 0$.

	It remains to prove that, in this case, $Y$ decomposes as in equation~\eqref{eq:deccartsym}.
	We know by Proposition~\ref{prop:decompositionHami} that, in general, $Y = X_{\phi_Y} + \delta_Y$. Then,
	\begin{equation}
		\label{eqn:lieder}
		a \eta + d g = L_Y \eta = L_{X_{\phi_Y}+\delta_Y} \eta = -\Reeb(\phi_Y) \eta + d\eta (\delta_Y,\cdot),
	\end{equation}
	where for the left-hand side we used the definition of Cartan symmetry and for the right-hand side \eqref{eq:LieHameta} and the fact that $\delta_Y$ is horizontal.
	Contracting by the Reeb vector field, we get that $a=-\Reeb(\phi_Y + g)$.

	Plugging the value of $a$ into equation~\eqref{eqn:lieder} then implies
	\begin{equation}
		d \eta (\delta_Y, \cdot) = dg - \Reeb(g)\eta = \iota_{X_g}d\eta = \iota_{\Lambda(dg, \cdot) - g \Reeb} d\eta = d\eta(\Lambda(dg, \cdot), \cdot).
	\end{equation}
	Since $X_{\phi_Y}$ is the Hamiltonian part of Y and $\Lambda(dg, \cdot)$ is horizontal, we get $\Lambda(dg, \cdot) = \delta_Y$ and therefore equation~\eqref{eq:deccartsym} holds.

	Conversely, let us consider $Y = X_{\phi_Y} + \Lambda(dg,\cdot)$. We want to show that $Y$ satisfies equations~\eqref{eqn:cartandef}.
	A computation analogous to~\eqref{eqn:lieder} yields
	\begin{align}
		L_Y \eta & = L_{X_{\phi_Y}} \eta + L_{\Lambda(dg,\cdot)} \eta                                                      \\
		         & = -\Reeb(\phi_Y) \eta + L_{\left(\Lambda(dg,\cdot)-g \Reeb\right)} \eta + L_{\left(g \Reeb\right)} \eta \\
		         & = -\Reeb(\phi_Y) \eta + L_{X_g} \eta + L_{g \Reeb} \eta                                                 \\
		         & = -\Reeb(\phi_Y) \eta -\Reeb(g) \eta + g L_\Reeb \eta + dg                                              \\
		         & = -\Reeb(\phi_Y) \eta - \Reeb(g) \eta + d g                                                             \\
		         & = -\Reeb(\phi_Y + g) \eta + d g\,,
	\end{align}
	thus showing that the first part of equations~\eqref{eqn:cartandef} holds with $a = -\Reeb(\phi_Y + g)$.
	It remains to show the second condition in Proposition~\ref{prop:deleonsymm}. An explicit computation gives
	\begin{align}
		Y (\cH) & = X_{\phi_Y}(\cH) + \Lambda(dg,d\cH)                                                    \\
		        & =\{\phi_Y, \cH\}_\eta - \cH\Reeb(\phi_Y) + \{g,\cH\}_\eta - \cH \Reeb(g) + g \Reeb(\cH) \\
		        & =\{\phi_Y + g, \cH\}_\eta - \cH\Reeb(\phi_Y + g) + g \Reeb(\cH)                         \\
		        & = a \cH + g \Reeb(\cH) + \{\phi_Y + g, \cH\}_\eta.
	\end{align}
	Since $\{\phi_Y + g, \cH\} = 0$ by hypothesis, we can conclude the proof.
\end{proof}

\begin{corollary}
	The invariant tensor densities associated with the Hamiltonian and horizontal components of a Cartan symmetry $Y$ are, respectively,
	\begin{equation}
		-\phi_Y \eta^{-1}
		\quad \mbox{and} \quad
		(\eta \wedge dg) \eta^{-\frac{2}{n+1}}.
	\end{equation}
	The invariant tensor density for the Cartan symmetry is
	\begin{equation}
		\mathfrak{K} = (\phi_Y + g)\ \eta^{-1}.
	\end{equation}
\end{corollary}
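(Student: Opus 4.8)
The plan is to read off each of the three densities directly from the representation theorems of Section~\ref{sec:hamhordensities}, feeding in the explicit Hamiltonian--horizontal decomposition of a Cartan symmetry that was just established. First I would invoke Proposition~\ref{prop:cartandeco} to write $Y = X_{\phi_Y} + \Lambda(dg,\cdot)$, so that its horizontal part is $\delta_Y = \Lambda(dg,\cdot)$ and $\phi_Y = -\eta(Y)$. The Hamiltonian component is then immediate: Proposition~\ref{prop:hamhordensity} assigns to it the density $\psi_Y = \eta(Y)\,\eta^{-1} = -\phi_Y\,\eta^{-1}$, which is the first claimed expression and requires no further work.

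The substantive step is the horizontal component. By Proposition~\ref{prop:hamhordensity} it equals $\sigma_Y = -\,\iota_{\delta_Y}(\eta\wedge d\eta)\otimes\eta^{-\frac{2}{n+1}}$, interpreting the contraction as inserting $\delta_Y$ into the three-form $\eta\wedge d\eta$. To evaluate this I would first rewrite $\Lambda(dg,\cdot) = X_g + g\Reeb$ via equation~\eqref{eqn:contactHam-Lambda}, and then use the defining relation $\iota_{X_g}d\eta = dg - \Reeb(g)\eta$ together with $\iota_{\Reeb}d\eta = 0$ (Proposition~\ref{prop:reeb}) to obtain $\iota_{\delta_Y}d\eta = dg - \Reeb(g)\eta$. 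Since $\delta_Y$ is horizontal we have $\iota_{\delta_Y}\eta = 0$, so the graded Leibniz rule gives $\iota_{\delta_Y}(\eta\wedge d\eta) = -\eta\wedge\iota_{\delta_Y}d\eta = -\eta\wedge dg$, where the $\Reeb(g)\eta$ term is killed by $\eta\wedge\eta = 0$. The two minus signs then combine to yield $\sigma_Y = (\eta\wedge dg)\,\eta^{-\frac{2}{n+1}}$, the second claimed expression.

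For the density attached to the Cartan symmetry as a whole, I would appeal again to Proposition~\ref{prop:cartandeco}, which identifies $\phi_Y + g$ as the dissipated quantity in involution with $\cH$. Representing this function as a density of degree $-\frac{1}{n+1}$ through the identification in equation~\eqref{eqn:hamtendensity} produces $\mathfrak{K} = (\phi_Y + g)\,\eta^{-1}$. I expect the only real obstacle to be sign bookkeeping in the horizontal computation: one must contract $\delta_Y$ against the full three-form $\eta\wedge d\eta$ rather than against $d\eta$ alone, and it is precisely this extra interior-product sign (interacting with the overall minus in the formula for $\sigma_\xi$) that turns the naive $-\eta\wedge dg$ into the $+\eta\wedge dg$ appearing in the statement; everything else reduces to the linearity of $f\mapsto X_f$ and the defining equations of $X_g$ and $\Reeb$.
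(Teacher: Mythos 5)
Your proposal is correct and follows essentially the same route as the paper: the paper's own (very terse) proof likewise reads the Hamiltonian density off Proposition~\ref{prop:hamhordensity}, obtains the horizontal density from the Jacobi-structure identification of $\delta_Y = \Lambda(dg,\cdot)$ via Proposition~\ref{prop:cartandeco}, and gets $\mathfrak{K}$ from the dissipated quantity $\phi_Y + g$; you have merely filled in the interior-product computation $\iota_{\delta_Y}(\eta\wedge d\eta) = -\eta\wedge dg$ that the paper leaves implicit. Your closing remark about which contraction convention in the formula for $\sigma_\xi$ reproduces the sign $+\eta\wedge dg$ is a sensible resolution of a genuine ambiguity in the paper's notation, and the residual sign in $\mathfrak{K}$ is immaterial since a function is dissipated if and only if its negative is.
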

\begin{proof}
	Proposition~\ref{prop:hamhordensity} implies that the Hamiltonian part of $Y$ is equivalent to the tensor density $-\phi_Y \eta^{-1}$.
	By Lemma~\ref{lemma:bivector} and Theorem~\ref{thm:cartandeco}, the Horizontal part is equivalent to
	\begin{equation}
		\left(\eta \wedge d g \right) \eta^{-\frac{2}{n+1}}.
	\end{equation}

	Finally, using once more Theorem~\ref{thm:cartandeco}, we obtain the invariant tensor density.
\end{proof}

As we have observed in Theorem~\ref{thm:scalingconstant} and Proposition~\ref{prop:reebdynsim}, the quantity $X_\cH(\phi/\cH)$ seems to carry an important meaning. In fact, a similar characterization applies to Cartan symmetries. On a subset of $M$ in which $\cH \neq 0$, we can revisit the statement of Theorem~\ref{thm:cartandeco} as
\begin{equation}\label{eq:cartanhom}
	X_\cH\left(\frac{\phi_Y+g}{\cH}\right) = 0.
\end{equation}
Since $\delta_Y = \Lambda(dg, \cdot) = X_g + g \Reeb$, for any given $\phi$ the construction of a Cartan symmetry is then reduced to \eqref{eq:cartanhom} having a solution.

\section{An application to mechanical contact Hamiltonian systems}\label{sec:applications}

In this section, we will show how this kind of symmetry appears in mechanical contact Hamiltonian systems, and what this means in specific toy models.
Let $M = \mathbb{R}^3$ and $\eta = ds -p dq$.
Let us consider a mechanical contact Hamiltonian coupled with a monomial dissipation term:
\begin{equation}\label{eqn:Hamiltonian}
	\mathcal{H}=\frac{p^2}{2}+V(q)+\gamma s^\alpha.
\end{equation}
Assume that the potential $V(q)$ is homogeneous of degree $k$, i.e. $k\in\mathbb{N}$ is the smallest integer such that
\begin{equation}
	V(\mu q)=\mu^k V(q).
\end{equation}
By Euler's theorem for homogeneous functions,
\begin{equation}
	\frac{\partial V(q)}{\partial q_j}\,q^j=kV(q).
\end{equation}
\begin{prop}\label{prop:scalHamil}
	An infinitesimal transformation
	\begin{equation}\label{eqn:scaleqps}
		Y:=\xi_{q_i} q_i\frac{\partial}{\partial q_i}+\xi_{p_i} p_i\frac{\partial}{\partial p_i}+\xi_s s\frac{\partial}{\partial s},
	\end{equation}
	with real parameters $\xi_{q_i},\xi_{p_i},\xi_s$, is a contact transformation if and only if
	\begin{equation}\label{eqn:Yccondition}
		\xi_s-\xi_{p_i}-\xi_{q_i}=0 \quad i=1,\ldots,n.
	\end{equation}
\end{prop}
\begin{proof}
	Using Proposition~\ref{prop:decompositionHami} we compute
	\begin{equation}
		\eta(Y)=-h_Y=\xi_s-\xi_{q_i} p_i q_i,
	\end{equation}
	and recover the Hamiltonian part
	\begin{equation}
		X_{h_Y}=\xi_{q_i} q^i\frac{\partial}{\partial q_i}-(\xi_{q_i}-\xi_s)p_i\frac{\partial}{\partial p_i}+\xi_s s\frac{\partial}{\partial s}.
	\end{equation}
	The horizontal part vanishes if and only if $Y-X_{h_Y}=0$, that is, when \eqref{eqn:Yccondition} holds.
\end{proof}
Then, there are dissipations for which $Y$ defined in \eqref{eqn:scaleqps} is a scaling symmetry, see Definition~\ref{def:scaling}.
\begin{thm}
	\label{thm:homoscaling}
	Consider an Hamiltonian of the form \eqref{eqn:Hamiltonian},
	$\cH=\frac{p^2}{2}+V(q)+\gamma s^\alpha$, and
	let $Y$ be a vector field of the form \eqref{eqn:scaleqps}.
	If the degree of homogeneity of the potential $V(q)$ is $k$ and $\alpha=\dfrac{2k}{k+2}$, then $\cH$ admits the scaling symmetry $Y$ if
	\[
		\xi_{q_i}=\dfrac{2}{k+2}\,\xi_s \quad\mbox{and}\quad \xi_{p_i} = \dfrac{k}{k+2}\,\xi_s.
	\]
	In this case, the scaling constant is $\lambda=\dfrac{k-2}{k+2}\,\xi_s$.
\end{thm}
\begin{proof}
	Let $Y$ be as in Proposition~\ref{prop:scalHamil}.
	By Theorem~\ref{thm:scaling}, its horizontal part
	\begin{equation}
		\delta_Y=(\xi_s-\xi_{p_i}-\xi_{q_i})p_i\frac{\partial}{\partial p_i}
	\end{equation}
	should commute with $X_{\mathcal{H}}$. This happens only if
	\begin{equation}\label{eqn:condYscal}
		\xi_s-\xi_{p_i}-\xi_{q_i}=0.
	\end{equation}
	\begin{remark}
		Thus, in this particular case $Y$ is a scaling symmetry if and only if $Y$ is hamiltonian, see Proposition~\ref{prop:scalHamil}. This is not always the case, as we will see in the following sections.
	\end{remark}
	By restricting to such $Y$ and imposing the scaling condition
	\begin{equation}
		\{\mathcal{H},\eta(Y)\}=\lambda\mathcal{H},
	\end{equation}
	we obtain the algebraic relation
	\begin{equation}
		-\bigl((- \alpha +1)\xi_s+\lambda\bigr)\gamma s^{\alpha}
		+(\xi_{q_i} k-\lambda-\xi_s)V(q)
		-\frac{|p|^2(\lambda+2\xi_{q_i}-\xi_s)}{2}=0.
	\end{equation}
	This holds for $\alpha=\dfrac{2k}{k+2}$, $\xi_{q_i}=\dfrac{2}{k+2}\,\xi_s$, and $\lambda=\dfrac{k-2}{k+2}\,\xi_s$; $\xi_{p_i}$ is then fixed by \eqref{eqn:condYscal}.
\end{proof}
This proposition is useful to get an idea of how to look for scaling symmetries in mechanical contact Hamiltonian systems and already gives a hint on where to start. However, it does not provide a necessary condition, just a sufficient one, as we will see in Section~\ref{sec:freepart}.

\subsection{Damped harmonic oscillator}
\label{sec:dampedharmosc}
A classical example with Hamiltonian of the form \eqref{eqn:Hamiltonian} is the damped harmonic oscillator:
\begin{equation}
	\mathcal{H}=\frac{p^2}{2}+\frac{q^2}{2}+\gamma s.
\end{equation}
This is a contact Hamiltonian integrable system \cite{Zadra} whose dynamics is usually characterized by the parameter $\gamma$, where for $0<\gamma<2$ the dynamics is underdamped, for $\gamma>2$ it is overdamped, with a transition at the critical value $\gamma=2$.

It is shown in \cite{Zadra} that the damped harmonic oscillator admits the following function in involution:
\begin{equation}
	F:=\frac{1}{2}p^2+\frac{\gamma}{2}qp+\frac{1}{2}q^2.
\end{equation}

With Theorem~\ref{thm:homoscaling} at hand, we can immediately construct a function in involution with $\cH$. In this case, the potential $V(q) = q^2/2$ is homogeneous of degree $k=2$ while the dissipation exponent is $\alpha=1$, corresponding to a scaling symmetry (where we chose $\xi_s = 2$)
\[
	Y = q\frac{\partial}{\partial q}+ p\frac{\partial}{\partial p}+ 2 s\frac{\partial}{\partial s},
\]
with scaling constant $\lambda=0$.

This leads to the following alternative function in involution with $\cH$:
\begin{equation}
	G := \eta(Y) = 2s - q p.
\end{equation}
Note that the three cannot be independent and, indeed, $F$ and $G$ are related to $\cH$ by
\begin{equation}
	\cH = F + \frac{\gamma}{2} G.
\end{equation}

\subsection{Free particle with linear dissipation}\label{sec:freepart}
The simplest contact Hamiltonian example of \eqref{eqn:Hamiltonian} is the damped particle:
\begin{equation}
	\mathcal{H}(q,p,s):=\frac{p^2}{2m}+\gamma s,
	\qquad\Longrightarrow\qquad
	\begin{dcases}
		\dot q=\dfrac{p}{m}, \\
		\dot p=-\gamma p,    \\
		\dot s=\dfrac{p^2}{2m}-\gamma s.
	\end{dcases}
\end{equation}

\begin{lemma}[Free particle, linear dissipation]\label{lem:free-invariant}
	For $\cH$ on $(\mathbb R^3,\,\eta=ds-p\,dq)$ as defined above:
	\begin{enumerate}
		\item $p$ is in involution with $H$;
		\item $K_1:=\tfrac{m}{\gamma}q+p$ and $K_2:=\tfrac{\cH}{p}$ are constants of motion (on each component where $p$ has fixed sign).
	\end{enumerate}
\end{lemma}

\begin{proof}
	The first part follows directly from the cyclicity of $q$, the second follows by direct differentiation along $X_\cH$.
\end{proof}

We can use this to see that Theorem~\ref{thm:homoscaling} is sufficient but not necessary to find scaling symmetries: the free particle Hamiltonian does not satisfy the hypotheses of Theorem~\ref{thm:homoscaling}, but it is possible to recover a scaling symmetry for this Hamiltonian.

\begin{prop}
	$X_\Sigma$ with $\Sigma=\tfrac{\lambda}{\gamma}\,\ln|p|\, H$ is a scaling symmetry with parameter $\lambda$.
\end{prop}
\begin{proof}
	First observe that
	\begin{equation}
		\{\Sigma,\cH\}_\eta
		=\lambda\,\operatorname{sgn}(p)\!\left(\frac{p^2}{2m}+\gamma s\right)
		=\lambda\,\operatorname{sgn}(p)\,\cH.
	\end{equation}
	It then follows from the first point of Lemma~\ref{lem:free-invariant} and the fact that the sign of $p$ is constant, that the vector field $X_\Sigma$ is a scaling symmetry.
\end{proof}

Scaling symmetries, as characterized in Theorem~\ref{thm:scaling}, are composed of a Hamiltonian part that is a scaling symmetry and a commuting horizontal vector field. For instance, the horizontal vector field
\begin{equation}
	\label{eqn:horizontalsymm}
	\delta=\gamma\frac{\partial}{\partial p}+\Bigl(\frac{\partial}{\partial q}+p\frac{\partial}{\partial s}\Bigr)
\end{equation}
commutes with $X_{\mathcal{H}}$. Hence, every vector field of the form
\begin{equation}
	X_\Sigma+\delta
\end{equation}
is a scaling symmetry of $\cH$. We can push this further by using the two constants of motion $K_1$ and $K_2$ to construct a family of horizontal vector fields commuting with $X_{\cH}$.

\begin{prop}
	A horizontal vector field of the form
	\begin{align}
		Z={} & -p\,f_{1}\!\Bigl(\gamma q+p,\frac{qp-2s}{2p}\Bigr)\,\frac{\partial}{\partial p}\notag \\[4pt]
		     & \qquad
		+\Bigl(-f_{1}\!\Bigl(\gamma q+p,\frac{qp-2s}{2p}\Bigr)\,x
		+f_{2}\!\Bigl(\gamma q+p,\frac{qp-2s}{2p}\Bigr)\Bigr)
		\Bigl(p\frac{\partial}{\partial s}+\frac{\partial}{\partial q}\Bigr),
	\end{align}
	where $f_1(x,y)$ and $f_2(x,y)$ are smooth functions, commutes with $X_{\mathcal{H}}$.
	Consequently $X_\Sigma+Z$ is a scaling symmetry for $X_{\mathcal{H}}$.
\end{prop}
\begin{proof}
	The result follows from a direct computation of the Lie bracket $[V,X_{\mathcal{H}}]$, noticing that the arguments of $f_1$ and $f_2$ are constants of motion for $X_{\mathcal{H}}$.
\end{proof}

\begin{example}
	To illustrate a practical advantage of the tensor density description, we revisit the free particle with linear dissipation $(\mathbb{R}^3,\eta = ds - p dq, \cH = \frac{p^2}{2m} + \gamma s)$. Performing the contact transformation $\phi_\alpha$ defined in Example~\ref{ex:transformationdensities}, namely,
	\begin{equation}
		\phi_\alpha (q ,p, s) = (\xi, \pi, \tau) = (q, e^{\alpha s} p, \frac{1}{\alpha}e^{\alpha s}),
	\end{equation}
	we obtain a new contact Hamiltonian system
	\begin{equation}
		\left(\mathbb{R}^2 \times \mathbb{R}^+, \eta^\alpha = d\tau - \pi d\xi,\ \cH^\alpha = \frac{\pi^2}{2 m \tau \alpha} + \gamma \tau \ln(\alpha \tau) \right).
	\end{equation}

	Crucially, the Hamiltonian function transforms non-trivially involving the conformal factor $\Omega = \alpha \tau$:
	\begin{equation}
		\cH(q,p,s) = \frac{\cH^\alpha}{\tau \alpha} \circ \phi_\alpha (q,p,s).
	\end{equation}
	In contrast, the tensor density formulation provides an invariant description. The density associated with the Hamiltonian vector field takes the same form in both coordinate systems:
	\begin{equation}
		\psi_{X_\cH} = -\left(\frac{p^2}{2m} + \gamma s \right) \lvert dq \wedge dp \wedge ds \rvert^{-\frac12} = - \left( \frac{\pi^2}{2 m \tau \alpha} + \gamma \tau \ln(\alpha \tau) \right) \lvert d\xi \wedge d \pi \wedge d\tau \rvert^{-\frac12},
	\end{equation}
	where in both cases the coefficient of the density is simply the Hamiltonian function expressed in the corresponding coordinates.

	Similarly, the horizontal vector field $\delta$ from equation~\eqref{eqn:horizontalsymm} commutes with the Hamiltonian. Its representation as a tensor density $\sigma_\delta$ makes its invariance under the transformation manifest:
	\begin{align}
		\sigma_{\delta} & =  (- \gamma \alpha \tau\, d \tau \wedge d \xi + d\tau \wedge d \pi - \pi\, d\xi \wedge d \pi) \otimes \lvert d\xi \wedge d\pi \wedge d\tau \rvert^{-\frac12} \\
		                & = \left( - \gamma\, ds \wedge dq +ds \wedge dp - p\, dq \wedge dp \right) \otimes \lvert dq \wedge dp \wedge ds \rvert^{-\frac12}.
	\end{align}
	One still has to check that $\sigma_\delta$ is invariant under the Hamiltonian flow, that is, that $\mathcal{L}_{X_\cH} \sigma_\delta = 0$.

	While computations in this specific example are easily manageable in coordinates, the tensor density formalism offers a distinct advantage: one can verify this conservation law directly in any chart without having to track how the conformal factors modify the brackets.
	This separation of intrinsic geometric properties from coordinate-dependent artifacts becomes particularly relevant in more complex systems, where explicit coordinate transformations are less tractable.
\end{example}

\subsection{Hamiltonian with quadratic ''action``}
Another common example is the contact Hamiltonian with quadratic action, initially explored in the context of cosmology \cite{Bravetti_2021}:
\begin{equation}
	\mathcal{H}(q,p,s):=\frac{p^2}{2m}+\mu\frac{q^2}{2}+\frac{\gamma}{2}s^2-R.
\end{equation}
The equations of motion read
\begin{equation}
	\begin{dcases}
		\dot q=\dfrac{p}{m},          \\
		\dot p=-\mu q - \gamma p\, s, \\
		\dot s=\dfrac{p^2}{2m}-\mu\dfrac{q^2}{2}-\dfrac{\gamma}{2}s^2+R.
	\end{dcases}
\end{equation}

When $\mu = 0$ it is not difficult to find a function in involution with $\cH$.
For one, as in the previous case, $q$ is a cyclic variable and therefore $p$ is a dissipated quantity.

\medskip
In the specific case in which $\mu = R = 0$, the Hamiltonian vector field $X_s = -p \partial_p -s\partial_s$ generated by $(q,p,s) \mapsto s$ is a scaling symmetry with $\lambda=-1$, since
\begin{equation}
	\{\mathcal{H},s\}_\eta=\mathcal{H}.
\end{equation}
Moreover, the horizontal vector field
\begin{equation}
	Y=sp\frac{\partial}{\partial p}
	+\Bigl(\frac{-p^2+s^2}{p}\Bigr)\Bigl(p\frac{\partial}{\partial s}+\frac{\partial}{\partial q}\Bigr)
\end{equation}
commutes with $X_{\mathcal{H}}$. Consequently, for any $\kappa\in\mathbb{R}$ the vector field
\begin{equation}
	X_s+\kappa\Bigl(sp\frac{\partial}{\partial p}
	+\Bigl(\frac{-p^2+s^2}{p}\Bigr)\Bigl(p\frac{\partial}{\partial s}+\frac{\partial}{\partial q}\Bigr)\Bigr)
\end{equation}
is a scaling symmetry of $\mathcal{H}(q,p,s)=\tfrac12 p^2+\tfrac{\gamma}{2}s^2$.

\medskip
This example is particularly interesting since it shows that, even if one can find a function in involution with $\cH$, and consequently the system presents an integrable dynamics (in a sense that will be made precise in Definition~\ref{def:ccis}), the Hamiltonian vector field needs not be complete, as observed in~\cite{Liu2021OrbitalDO}. In fact, if $\mu = 0$, all trajectories that start from a point with coordinates $(q,0,s)$ will blow up in finite time. This changes drastically if $\mu \neq 0$ and $\gamma, R > 0$, in which case the dynamics remains ``stiff'' but all trajectories are complete, see~\cite{Bravetti_2021} for more details.

\section{Integrability of contact Hamiltonian systems and their decomposition}\label{sec:integra}
In the classical Hamiltonian setting, Noether's theorem gives a profound link between symmetries and conserved quantity, ultimately linking integrability to the presence of a sufficient number of independent constants of motion.
In the contact case this is less clear, and in the past year attempts have already been made to give a notion of contact integrability and relate it with symmetries or dissipated quantities, see e.g.~\cite{de_Le_n_2020, Jovanovi__2015, Grillo_2020} and the references therein.
In this section, we want to revisit this discussion in view of the decomposition presented above.

\begin{defn}
	\label{def:ccis}
	A \emph{complete contact integrable system} is a triple $\left(M,\eta,\{f_0, \ldots,f_n\} \right)$, where $M$ is a $2n+1$-dimensional contact manifold endowed by a contact form $\eta$, and $\{f_0, \ldots, f_n\}$ is a set of functions such that
	\begin{itemize}
		\item $\{f_i,f_j\}_\eta = 0$ for all $i,j \in \{0,\ldots,n\}$
		\item and $X_{f_0}, \ldots, X_{f_n}$ are linear independent almost everywhere.
	\end{itemize}
\end{defn}
The symmetry results of Section~\ref{sec:decsym} naturally yield functions in involution: dynamical symmetries provide dissipated quantities (Proposition~\ref{prop:deleonsymm}), and scaling/dynamical similarities can generate new ones from existing conserved quantities (Theorem~\ref{thm:scalingconstant} and Proposition~\ref{prop:const1dynsim}). The question that remains to explore is how such families can lead to complete contact integrability.

\medskip
In the symplectic version of Definition~\ref{def:ccis}, the requirement of independence of the functions $\{h_i\}_{i=1,\ldots,n}$ is written as
\begin{equation}
	\bigwedge_{i=1}^{n} d h_i \neq 0 \quad\mbox{almost everywhere~\cite{Knauf2018}.}
\end{equation}
This can be reformulated by considering the natural volume form induced by the symplectic form $\omega$ as
\begin{equation}
	\iota_{V_{h_1}, \ldots, V_{h_n}} \omega^{\wedge n} := \omega^{\wedge n}(V_{h_1}, \ldots, V_{h_n}, \cdot, \cdots, \cdot) \neq 0,
\end{equation}
where we denoted $V_h$ the (symplectic) Hamiltonian vector field associated to $h$, to distinguish it from their contact counterparts. See Appendix~\ref{app:sympl} for a small review of symplectic mechanics.

\begin{example}
	On a $4$-dimensional symplectic manifold $(P, \omega)$, a complete integrable system is defined by two functions in involution:
	\begin{align}
		\iota_{V_{f_1},V_{f_2}} \omega^{\wedge 2}
		 & = 2\ \omega(V_{f_1},V_{f_2}) \omega + 2\ \iota_{V_{f_1}} \omega \wedge \iota_{V_{f_2}} \omega \\
		 & = 2\{f_1,f_2\}_{PB} \omega + 2\ df_1 \wedge df_2                                              \\
		 & = 2\ df_1 \wedge df_2,
	\end{align}
	where by $\{\cdot, \cdot\}_{PB}$ we refer to the Poisson bracket induced by the symplectic form $\omega$.
\end{example}

The construction presented above extends to the contact case, were we can express the linear independence condition by via the natural volume form~\eqref{eqns:contactvol} induced by the contact form $\eta$, i.e.,
\begin{equation}
	\label{eqn:indeqn}
	\iota_{X_{f_0},X_{f_1}, \ldots, X_{f_n}} (\eta \wedge (d\eta)^{\wedge n}) \neq 0.
\end{equation}
The kernel of the $n$-form~\eqref{eqn:indeqn} corresponds to the distribution generated by the Hamiltonian vector field identified by the integrable system.

Moreover, the linear independence of a set of non-commuting vector fields $\{Y_j\}_{j\leq J}$ can be checked by the analogous condition
\begin{equation}
	\iota_{Y_0, Y_1, \ldots, Y_{J}}(\eta \wedge (d\eta)^{\wedge n}) \neq 0.
\end{equation}

\begin{lemma}
	\label{lemma:li}
	Let $(M,\eta)$ be a $(2n+1)$-dimensional exact contact manifold, and assume there is $m\leq n$ such that $\{f_0, \ldots, f_m \}$ is a set of functions in involution whose contact Hamiltonian vector fields are linearly independent. Then
	\begin{align}
		\label{eqn:thesisind}
		 & \iota_{X_{f_0},X_{f_1}, \ldots, X_{f_n}}\left(\eta \wedge (d\eta)^{\wedge n} \right) =                                                                                          \\
		 & \quad =
		\frac{n!}{(n-m)!} \left(-1 \right)^{\frac12(m+1)(m+2)} \bigg[\sum_{j=0}^{m} \left(-1 \right)^{j m} f_j \bigwedge_{r=1}^{m} df_{j+r \pmod{m+1}} \bigg] \wedge d \eta^{\wedge (n-m)} \\
		 & \qquad + \frac{n!}{(n-m-1)!} \left(-1 \right)^{\frac12 m(m+1)} \bigwedge_{j=0}^{m} df_j \wedge \eta \wedge d\eta^{\wedge (n-m-1)}.
	\end{align}
\end{lemma}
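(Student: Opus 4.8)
The plan is to compute the contraction $\iota_{X_{f_0},\ldots,X_{f_n}}(\eta \wedge (d\eta)^{\wedge n})$ directly by exploiting the defining relations of the contact Hamiltonian vector fields, equation~\eqref{eqn:Hamvec}, together with the involution hypothesis $\{f_i,f_j\}_\eta=0$. The key observation is that the volume form $\eta \wedge (d\eta)^{\wedge n}$ splits into summands according to how many of the contracted vector fields hit the $\eta$ factor versus the $(d\eta)^{\wedge n}$ factor. Since we are contracting with $n+1$ vector fields into a $(2n+1)$-form, the result is an $n$-form, and organizing the Leibniz expansion of the interior product is the combinatorial heart of the argument.

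First I would record the two elementary contraction rules coming from~\eqref{eqn:Hamvec}: namely $\iota_{X_{f_j}}\eta = \eta(X_{f_j}) = -f_j$, and $\iota_{X_{f_j}} d\eta = d\eta(X_{f_j},\cdot) = df_j - \Reeb(f_j)\eta$. I would then translate the involution condition into a usable form: by equation~\eqref{eqn:commnot0}, $\{f_i,f_j\}_\eta = X_{f_i}(f_j) + f_j\Reeb(f_i) = 0$, so that $d\eta(X_{f_i},X_{f_j}) = X_{f_i}(f_j) - \Reeb(f_i)\,\eta(X_{f_j}) = X_{f_i}(f_j) + f_j\Reeb(f_i) = \{f_i,f_j\}_\eta = 0$. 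This vanishing is crucial: it means that whenever two of the Hamiltonian vector fields are fed into the \emph{same} copy of $d\eta$, the contribution dies, which dramatically prunes the expansion and is the mechanism responsible for the relatively clean closed form in the statement.

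Next I would carry out the interior-product expansion of $\iota_{X_{f_0},\ldots,X_{f_n}}$ against $\eta \wedge (d\eta)^{\wedge n}$ by the graded Leibniz rule, tracking two scenarios. In the first, the $\eta$ factor is consumed by some $X_{f_j}$ (yielding $-f_j$) while the remaining $n$ vector fields are distributed among the $n$ factors of $(d\eta)^{\wedge n}$; by the vanishing of $d\eta(X_{f_i},X_{f_j})$, at most one vector field may enter each $d\eta$, forcing a bijective assignment and converting each surviving $d\eta$ into a $df_i - \Reeb(f_i)\eta$ (the $\Reeb(f_i)\eta$ pieces will recombine or cancel against lower-order terms). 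This produces the first line of~\eqref{eqn:thesisind}, with the factor $f_j$, the wedge of $m$ successive differentials, the residual $d\eta^{\wedge(n-m)}$, and the cyclic index shift $j+r \pmod{m+1}$ arising from the antisymmetrization. In the second scenario, $\eta$ survives uncontracted and all $n+1$ vector fields must be absorbed by the $n$ copies of $d\eta$; again the pairwise vanishing allows at most one per factor, so exactly one factor is left unconsumed, contributing the $\eta \wedge d\eta^{\wedge(n-m-1)}$ term with the full wedge $\bigwedge df_j$. The combinatorial multiplicities $\tfrac{n!}{(n-m)!}$ and $\tfrac{n!}{(n-m-1)!}$ count the ways of choosing which copies of $d\eta$ receive a differential, and the sign prefactors $(-1)^{\frac12(m+1)(m+2)}$ and $(-1)^{\frac12 m(m+1)}$ track the accumulated transpositions from reordering the interior products past the wedge factors.

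The main obstacle I anticipate is purely bookkeeping: getting the signs and the cyclic permutation indices exactly right. The graded Leibniz rule for $\iota_X$ on a wedge introduces a sign at each step, and reassembling the $n$ interior products into the antisymmetrized sum $\sum_j (-1)^{jm} f_j \bigwedge_r df_{j+r}$ requires carefully counting how many transpositions are needed to move each contracted covector into canonical position; this is where the triangular-number exponents $\tfrac12(m+1)(m+2)$ and $\tfrac12 m(m+1)$ come from. I would handle this by first establishing the identity for small $m$ (say $m=0,1$) to fix the sign conventions, and then proceeding by an induction on $m$, peeling off one function $f_m$ at a time and verifying that the sign prefactors transform consistently. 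A secondary technical point is confirming that all the spurious $\Reeb(f_i)\eta$ terms generated when rewriting $\iota_{X_{f_i}}d\eta$ either cancel pairwise by antisymmetry or are reabsorbed; I expect these to vanish precisely because any term retaining two such $\eta$ factors dies in the wedge, and the single-$\eta$ survivors are already accounted for in the second summand.
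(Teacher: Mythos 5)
Your proposal rests on a false key lemma. You claim that involution forces $d\eta(X_{f_i},X_{f_j})=0$, but your computation contains an index swap: contracting $d\eta(X_{f_i},\cdot)=df_i-\Reeb(f_i)\eta$ with $X_{f_j}$ yields $df_i(X_{f_j})=X_{f_j}(f_i)$, \emph{not} $X_{f_i}(f_j)$. The correct evaluation is
\begin{equation}
	d\eta(X_{f_i},X_{f_j}) = X_{f_j}(f_i)+f_j\,\Reeb(f_i) = f_j\,\Reeb(f_i)-f_i\,\Reeb(f_j),
\end{equation}
where the last equality uses $\{f_j,f_i\}_\eta=0$, and this is generically nonzero. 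Concretely, on $(\mathbb{R}^3,\eta=ds-p\,dq)$ the functions $f_0=p$ and $f_1=s$ are in involution (here $X_p=\partial_q$, so $\{p,s\}_\eta=X_p(s)=0$), yet $d\eta(X_p,X_s)=-p\neq 0$. Hence the central pruning mechanism of your expansion --- ``at most one vector field may enter each copy of $d\eta$'' --- is unavailable, and the bijective assignment of vector fields to factors that you rely on is not forced.

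The failure is not cosmetic, because the cross terms you discard are exactly what cancels the ``spurious'' $\Reeb(f_i)\,\eta$ pieces that you relegate to a secondary point and expect to vanish on their own. Already for $n=m=1$: the expansion of $\iota_{X_{f_0},X_{f_1}}(\eta\wedge d\eta)$ consists of the two single-assignment terms $-f_0(df_1-\Reeb(f_1)\eta)+f_1(df_0-\Reeb(f_0)\eta)$ plus the cross term $d\eta(X_{f_0},X_{f_1})\,\eta$, and only their \emph{sum} collapses to the correct answer $-f_0\,df_1+f_1\,df_0$; neither family vanishes separately, they cancel against each other. For larger $m$ the situation is worse, since terms with two or more pairs sharing copies of $d\eta$ appear, and the cancellation becomes a multi-layered inclusion--exclusion that your scheme has no mechanism to control --- establishing precisely this cancellation is the actual content of the lemma. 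Your fallback plan (check $m=0,1$, then induct on $m$, peeling off one function at a time) is indeed the route the paper takes, but there the induction step works differently: the involution is used in the form $X_{f_k}(f_i)=-f_i\,\Reeb(f_k)$, the contraction of the cyclic sum $\sum_j(-1)^{j(k-1)}f_j\bigwedge_r df_{j+r}$ is shown to vanish by a pairwise symmetry/antisymmetry cancellation of the products $f_jf_{j+l}$, and the surviving $\Reeb(f_k)$-terms coming from the first summand of the inductive formula cancel against those coming from the second. As written, your argument would already fail at the first inductive step; to repair it you must retain the same-copy cross terms and prove their cancellation against the $\Reeb$-corrections rather than assume both vanish.
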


Before entering into the details of the proof, it is worth exploring the consequences of this technical lemma.

\begin{thm}\label{thm:independence}
	Let $(M,\eta,\{f_0,f_1,\cdots,f_n\})$ be a complete integrable contact Hamiltonian system on a $(2n+1)$-dimensional contact manifold. Then
	\begin{align}
		\iota_{X_{f_0},X_{f_1}, \ldots, X_{f_n}} \left(\eta \wedge (d\eta)^{\wedge n} \right) = n! (-1)^{\frac12 (n+1)(n+2)} \bigg[\sum_{j=0}^{n} \left(-1 \right)^{j n} f_j \bigwedge_{r=1}^{n} df_{j+r \pmod{n+1}} \bigg]
	\end{align}
\end{thm}

\begin{proof}
	Fix $m = n$, apply Lemma~\ref{lemma:li} and observe that the second expression in equation~\eqref{eqn:thesisind} vanishes.
\end{proof}

In a three-dimensional contact manifold, a complete integrable contact Hamiltonian system is specified by $(M,\eta, \{f_0, f_1\})$. It has a constant of motion, when it can be defined, $\kappa = f_0 / f_1$. The $1$-form defined by equation~\eqref{eqn:indeqn} in this case reads
\begin{equation}
	\label{eqn:independence}
	\iota_{X_{f_0},X_{f_1}}\left(\eta \wedge d\eta\right) = - f_0 d f_{1} + f_1 d f_{0} + \{f_0, f_1\}_\eta \ \eta = - f_0 d f_{1} + f_1 d f_{0}.
\end{equation}
This can be rewritten as
\begin{equation}
	\iota_{X_{f_0},X_{f_1}}\left(\eta \wedge d\eta\right) = - \frac{1}{f_0^2}\,d\kappa.
\end{equation}
Note that we use the involutivity only in the last identity in~\eqref{eqn:independence}. The first identity holds for any pairs of Hamiltonian vector fields.

Integrability requires $f_0$ and $f_1$ to be linearly independent. However, note that in the submanifold $f_0 = f_1 = 0$, if it exists, the two Hamiltonian vector fields $X_{f_0}$ and $X_{f_1}$ are always linearly dependent.
This means that \emph{a complete integrable contact Hamiltonian system cannot admit $0$-levelsets}. However, outside the $0$-levelset, a system can be locally completely integrable. In this case, $0$-levelsets identify equilibrium regions that could still be reached asymptotically. This can be described more generally in the sense of Sussmann integrability~\cite{Sussmann_1973} but we will not pursue this direction here.

\begin{example}
	To see the result in practice, let's consider the contact manifold $(M, \eta)$ given by
	\begin{equation}
		M = \mathbb{R}^{5}, \qquad \eta = ds - p_1 dq_1 - p_2 dq_2,
	\end{equation}
	and the Hamiltonian function
	\begin{equation}
		\cH := q_1 p_1 + q_2 p_2 + \gamma s,
	\end{equation}
	where $0\neq \gamma \in \mathbb{R}$ is a fixed parameter.
	Along with the functions
	\begin{equation}
		f_k := q_k p_k \qquad k = 1,2,
	\end{equation}
	we obtain a contact complete integrable system.

	First, we verify involutivity by computing the Jacobi brackets.
	For $k=1,2$, we have
	\begin{equation}
		\left\{\cH , f_k \right\} = - X_{f_k}(\cH) = - p_k \frac{\partial \cH}{\partial p_k} + q_k \frac{\partial \cH}{\partial q_k} = - p_k q_k + q_k p_k = 0,
	\end{equation}
	and obviously $\left\{f_1,f_2\right\} = 0$ since they depend on disjoint sets of variables.

	Next, using Theorem~\ref{thm:independence}, we can show linear independence almost everywhere by checking that
	\begin{equation}
		\Upsilon := \sum_{j=0}^{2} f_j \, df_{j+1 \pmod{3}} \wedge df_{j+2 \pmod{3}} \neq 0,
	\end{equation}
	almost everywhere, where we set $f_0 = \cH$.

	Computing the differentials
	\begin{equation}
		\begin{dcases}
			df_0 = p_1 dq_1 + p_2 dq_2 + q_1 dp_1 + q_2 dp_2 +\gamma ds, \\
			df_1 = p_1 dq_1 + q_1 dp_1,                                  \\
			df_2 = p_2 dq_2 + q_2 dp_2,
		\end{dcases}
	\end{equation}
	and substituting them into the expression for $\Upsilon$, we obtain a non-vanishing 2-form.
	In coordinates $(q_1, q_2, p_1, p_2, s)$, this 2-form can be represented by a skew-symmetric matrix, namely
	\begin{equation}
		\bigg[\sum_{j=0}^{2} f_j \bigwedge_{r=1}^{2} df_{j+r \pmod{3}} \bigg] = \begin{pmatrix}
			0                  & p_1 p_2 \gamma s   & p_1 q_2 \gamma s     & -q_2 p_2 \gamma p_1  & 0                   \\
			- p_1 p_2 \gamma s & 0                  & q_1 p_2 \gamma s     & q_1 q_2 \gamma s     & -q_2 p_2 \gamma q_1 \\
			- p_1 q_2 \gamma s & - q_1 p_2 \gamma s & 0                    & 0                    & q_1 p_1 p_2 \gamma  \\
			q_2 p_2 \gamma p_1 & - q_1 q_2 \gamma s & 0                    & 0                    & q_1 p_1 q_2 \gamma  \\
			0                  & q_2 p_2 \gamma q_1 & - q_1 p_1 p_2 \gamma & - q_1 p_1 q_2 \gamma & 0
		\end{pmatrix}.
	\end{equation}
	A direct check shows that this form vanishes on a subset of measure zero (specifically at the origin, since $\gamma\neq0$), thus satisfying the requirements of the theorem.
\end{example}

\begin{thm}\label{thm:scalingindependence}
	Let $Y$ be a scaling symmetry of the Hamiltonian system $(M,\eta,\cH)$ and assume that there exists a contact Hamiltonian function $f$ in involution with $\cH$, i.e., $\{f,\cH\}_\eta = 0$.
	Then
	\begin{equation}
		\label{eqn:definitionpsiscaling}
		\psi=\left\{\phi_Y, f \right\}_\eta
	\end{equation}
	is a function in involution with $\cH$.

	Moreover, the scaling symmetry $Y$ is linearly independent from $X_\cH$ and $X_f$ if
	\begin{align}
		 & \cH \left(\psi + f \lambda \right) d \eta\wedge d\eta^{\wedge (n-1)}                               \\
		 & \quad + (n-1) \bigg[
			(\cH df -f d\cH) \wedge d\phi_Y +  (\psi - \lambda \cH) df \wedge \eta + \phi_Y d \cH \wedge df
		\bigg] \wedge d\eta^{\wedge(n-2)}                                                                     \\
		 & \quad + (n-1) (n-2)\; d \phi_Y\wedge d\cH \wedge df \wedge \eta \wedge d\eta^{\wedge (n-3)} \neq 0
	\end{align}
\end{thm}

\begin{proof}
	The first identity, \eqref{eqn:definitionpsiscaling}, can be proved via the Jacobi identity,
	\begin{align}
		[Y,[X_f,X_\cH]] + [X_\cH,[X_f,Y]] + [X_f,[Y,X_\cH]] = 0.
	\end{align}
	Applying the definition of scaling symmetry, this reduces to
	\begin{equation}
		[X_\cH,[X_f,Y]]                                       = 0.
	\end{equation}
	Then, it follows from Corollary~\ref{cor:scalinghamiltonian} that $X_{\phi_Y}$ is itself a scaling symmetry and $\psi = \{\phi_Y,f\}_\eta$ is in involution with $\cH$.

	Theorem~\ref{thm:scaling} shows that $\psi = \{\phi_Y,f\}_\eta$ is in involution with $\cH$. Indeed,
	\begin{align}
		0 & = \eta( [X_\cH,[X_f,Y]] )                                      \\
		  & = L_{X_\cH} (\eta([X_f,Y])) - \iota_{[X_f,Y]} L_{X_\cH} \eta = \\
		  & = L_{X_\cH} \{f,\phi_Y\} + \Reeb(\cH) \iota_{[X_f,Y]} \eta =   \\
		  & = \{\cH, \{f,\phi_Y\}\}.
	\end{align}

	We are left to prove the second part of the statement.
	Lemma~\ref{lemma:li} where $m=2$ and for the set of functions $\{\cH, f\}$ implies
	\begin{equation}
		\label{eqn:liscaling}
		\iota_{X_\cH, X_f}\left(\eta \wedge d\eta^{\wedge n}\right) =
		- n \left[
			\left(\cH df - f d\cH\right) \wedge d\eta
			+ (n-1) d\cH \wedge df \wedge \eta
			\right] \wedge d \eta^{\wedge (n-2)}.
	\end{equation}

	Contracting equation~\eqref{eqn:liscaling} with $X_{\phi_Y}$, and using equations~\eqref{eqn:definitionpsiscaling} and~\eqref{eqn:scalingHev} we obtain:
	\begin{align}
		\iota_{X_\cH, X_f}\left(\eta \wedge d\eta^{\wedge n}\right)
		 & = n \cH \left(\psi + f \lambda \right) d \eta^{\wedge (n-1)}                                                 \\
		 & \quad + (n-1) \bigg[ (\cH df -f d\cH) \wedge d\phi_Y                                                         \\
		 & \quad\quad + (\psi - \lambda \cH) df \wedge \eta + \phi_Y d \cH \wedge df \bigg] \wedge d\eta^{\wedge (n-2)} \\
		 & \quad + (n-1) (n-2) d \phi_Y\wedge d\cH \wedge df \wedge \eta \wedge d\eta^{\wedge (n-3)};
	\end{align}
	since $n \neq 0$ (otherwise the contact manifold would be $1$-dimensional) and $d\eta\neq 0$, this concludes the proof.
\end{proof}

A similar result holds in the more general case of dynamical similarities.
In this case, however, it does not directly hold for functions in involution but for the constants of motion.

\begin{prop}
	\label{prop:const1dynsim}
	Let $Y\in\Gamma(TM)$ be a dynamical similarity of the contact Hamiltonian system $(M,\eta,\cH)$. If there exists a function $f \in C^{\infty}(M)$ that is a constant of motion of the Hamiltonian, i.e., $X_{\cH} f = 0$, then $\psi = Y\left({f}\right)$ is also a constant of motion for $\cH$.
\end{prop}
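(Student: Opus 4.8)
The plan is to exploit the elementary identity relating the Lie bracket of two vector fields to the commutator of the associated first-order differential operators. Concretely, for any $h \in C^\infty(M)$ one has $[Y, X_\cH](h) = Y(X_\cH(h)) - X_\cH(Y(h))$, so the whole statement should fall out of evaluating the dynamical-similarity condition on the constant of motion $f$, with no need to unwind the contact or Jacobi structure at all.

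First I would invoke the defining relation of a dynamical similarity from Definition~\ref{def:symmetries}, namely $[Y, X_\cH] = \lambda X_\cH$ for some $\lambda \in C^\infty(M)$, and apply both sides to $f$ as derivations. The right-hand side gives $\lambda\, X_\cH(f)$, which vanishes identically because $f$ is assumed to be a constant of motion, $X_\cH(f) = 0$. The left-hand side expands, by the commutator identity above, as $Y(X_\cH(f)) - X_\cH(Y(f))$.

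Next I would simplify the left-hand side using $X_\cH(f) = 0$ a second time: the first term $Y(X_\cH(f)) = Y(0) = 0$ vanishes, leaving $-X_\cH(Y(f))$. Equating the two sides then yields $X_\cH(Y(f)) = 0$, which is precisely the assertion that $\psi = Y(f)$ is a constant of motion for $\cH$.

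There is no serious obstacle here: the result is a direct consequence of the commutator structure of vector fields acting on functions, and the only point requiring (minimal) care is that the hypothesis $X_\cH(f) = 0$ is used twice — once to kill the right-hand side $\lambda\, X_\cH(f)$ and once to kill the term $Y(X_\cH(f))$ on the left. It is worth emphasizing the contrast with the involution setting of the preceding proposition: here $f$ is genuinely conserved (a constant of motion) rather than merely in involution, and it is exactly this stronger hypothesis that makes the term $Y(X_\cH f)$ drop out cleanly, so that the argument goes through for an arbitrary function $\lambda$ and does not require $\lambda$ to be constant.
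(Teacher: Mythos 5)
Your proof is correct and takes essentially the same route as the paper's own one-line argument: both apply the dynamical-similarity relation $[Y, X_\cH] = \lambda X_\cH$ to $f$ as a derivation, use $X_\cH(f) = 0$ to annihilate the right-hand side and the term $Y(X_\cH(f))$, and conclude $X_\cH(Y(f)) = 0$. The paper merely compresses these steps into a single sentence, so your write-up differs only in level of detail, not in substance.
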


\begin{proof}
	By definition $[Y,X_{\cH}] (f) = g X_{\cH}(f) =0$, that is, $X_{\cH} \left( Y(f) \right) = 0$.
\end{proof}

\medskip
We conclude this section with the proof of Lemma~\ref{lemma:li}.
\begin{proof}[Proof of Lemma~\ref{lemma:li}]
	We will prove the statement by induction. Let us consider $m=0$, so the set of functions consists only of $f_0$. The contraction of $X_{f_0}$ on the volume form is:
	\begin{align}
		\iota_{X_{f_0}} (\eta \wedge d\eta^{\wedge n})
		 & =  - f_0 d\eta^{\wedge n} +n\ \left( df_0 - \Reeb(f_0)\eta \right)\wedge \eta \wedge d\eta^{\wedge (n-1)} \\
		 & = - f_0 d\eta^{\wedge n} + n\ df_0 \wedge \eta \wedge d\eta^{\wedge (n-1)}                                \\
		 & = \left(-f_0 d\eta + n\ df_0 \wedge \eta\right) d\eta^{\wedge (n-1)}.
	\end{align}

	Consider now $m=k-1$, the term would be
	\begin{align}
		\label{eqn:stepk-1}
		 & \iota_{X_{f_0},X_{f_1}, \ldots, X_{f_{k-1}}}\left(\eta \wedge (d\eta)^{\wedge n} \right) =                                                                                                                        \\
		 & \quad = \left[\binom{n}{k-1} (k-1)! \right] \left(-1 \right)^{\frac12 k(k+1)} \bigg[\sum_{j=0}^{k-1} \left(-1 \right)^{j (k-1)} f_j \bigwedge_{r=1}^{k-1} df_{j+r \pmod{k}} \bigg] \wedge d \eta^{\wedge (n-k+1)} \\
		 & \qquad + \left[\binom{n}{k} k! \right] \left(-1 \right)^{\frac12 k\left(k-1\right)} \bigwedge_{j=0}^{k-1} df_j \wedge \eta \wedge d\eta^{\wedge (n-k)}.
	\end{align}

	First observe that
	\begin{align}
		 & \bigg[\sum_{j=0}^{k-1} \left(-1 \right)^{j (k-1)} f_j \bigwedge_{r=1}^{k-1} df_{j+r \pmod{k}} \bigg] \left(X_{f_k}\right)                        \\
		 & \quad =\bigg[\sum_{j=0}^{k-1} \left(-1 \right)^{j (k-1)} \sum_{l=1}^{k-1} (-1)^l f_j X_{f_k}(f_{j+l \pmod{k+1}})  \bigwedge_{\substack{r=1       \\ r\neq l}}^{k-1} df_{j+r \pmod{k}} \bigg] \\
		 & \quad = \Reeb(f_k) \bigg[\sum_{j=0}^{k-1} \left(-1 \right)^{j (k-1)} \sum_{l=1}^{k-1} (-1)^{l+1} f_j f_{j+l \pmod{k+1}} \bigwedge_{\substack{r=1 \\ r\neq l}}^{k-1} df_{j+r \pmod{k}} \bigg] = 0,
	\end{align}
	where between the second and the third line we exploited the vanishing of the Jacobi bracket between each pair of functions $\{f_i\}$ and equation~\eqref{eqn:commnot0}.
	Thus, the contraction of the first expression in equation~\eqref{eqn:stepk-1} with $X_{f_k}$ is
	\begin{align}
		\label{eqn:firstindlineres}
		\left[\binom{n}{k} k! \right] \left(-1 \right)^{\frac12 k\left(k+1\right)} \left(df_k - \Reeb(f_k) \eta \right) \wedge \bigg[\sum_{j=0}^{k-1} \left(-1 \right)^{j (k-1)} f_j \bigwedge_{r=1}^{k-1} df_{j+r \pmod{k}} \bigg] \wedge d \eta^{\wedge (n-k)}.
	\end{align}

	By contracting the second expression in equation~\eqref{eqn:stepk-1} with the Hamiltonian vector field $X_{f_k}$, we obtain
	\begin{align}
		 & \left[\binom{n}{k} k! \right] \left(-1 \right)^{\frac12 k\left(k-1\right)} \left(\bigwedge_{j=0}^{k-1} df_j \wedge \eta \wedge d\eta^{\wedge n-k} \right) \left(X_{f_k}\right) \\
		 & \quad= \left[\binom{n}{k} k! \right] \left(-1 \right)^{\frac12 k\left(k-1\right)} \bigg( \sum_{l=0}^{k-1} (-1)^{l+1} f_l \Reeb(f_k) \bigwedge_{\substack{r=0                   \\ r\neq l}}^{k-1} df_{r} \wedge \eta \wedge d\eta^{\wedge (n-k)} \\
		 & \qquad\qquad + (-1)^{k+1} f_k \bigwedge_{j=0}^{k-1} df_j \wedge d\eta^{\wedge (n-k)}                                                                                           \\
		 & \qquad\qquad + (n-k) (df_k) \bigwedge_{j=0}^{k-1} df_j \wedge \eta \wedge d\eta^{\wedge (n-k-1)}\bigg),
	\end{align}
	in which we have used again the involutivity of the functions and equation~\eqref{eqn:commnot0}.
	Combining the two results we obtain
	\begin{align}
		 & \iota_{X_{f_0},X_{f_1}, \ldots, X_{f_k}}\left(\eta \wedge (d\eta)^{\wedge n} \right) =                                                                                                                                                            \\
		 & \quad = \left[\binom{n}{k} k! \right] \left(-1 \right)^{\frac12 k\left( k + 1 \right)} \left(df_k\right) \wedge \bigg[\sum_{j=0}^{k-1} \left(-1 \right)^{j (k-1)} f_j \bigwedge_{r=1}^{k-1} df_{j+r \pmod{k}} \bigg] \wedge d \eta^{\wedge (n-k)} \\
		 & \qquad + \left[\binom{n}{k} k! \right] \left(-1 \right)^{\frac12 k\left(k-1\right)} \bigg((-)^{k+1} f_k \bigwedge_{j=0}^{k-1} df_j \wedge d\eta^{\wedge (n-k)}                                                                                    \\
		 & \qquad\qquad +(n-k) (df_k) \bigwedge_{j=0}^{k-1} df_j \wedge \eta \wedge d\eta^{\wedge (n-k-1)}\bigg),
	\end{align}
	after simplifying the two terms involving $\Reeb(f_k)$.
	Rearranging the first two terms, we obtain
	\begin{align}
		 & \iota_{X_{f_0},X_{f_1}, \ldots, X_{f_k}} \left(\eta \wedge (d\eta)^{\wedge n} \right) =                                                                                                                               \\
		 & \quad = \left[\binom{n}{k} k! \right] \left(-1 \right)^{\frac12 (k+1) \left( k+2 \right)} \bigg[\sum_{j=0}^{k} \left(-1 \right)^{j k} f_j \bigwedge_{r=1}^{k} df_{j+r \pmod{k+1}} \bigg] \wedge d \eta^{\wedge (n-k)} \\
		 & \qquad + \left[\binom{n}{k+1} (k+1)! \right] \left(-1 \right)^{\frac12 k\left(k+1\right)} \bigwedge_{j=0}^{k} df_j \wedge \eta \wedge d\eta^{\wedge (n-k-1)}.
	\end{align}
	Applying the identity $\binom{n}{m} m! = \frac{n!}{(n-m)!}$ then concludes the proof.
\end{proof}

\section*{Acknowledgments}
The authors would like to thank Alessandro Bravetti, Angel Alejandro Garc\'ia-Chung \red{and Tom Mestdag for the useful discussions, and the anonymous referees for their valuable feedback.}

The research of MS and FZ was supported by the NWO project 613.009.10. The research of FZ was partially supported by FWO-EoS project Beyond symplectic geometry with UA Antigoon number 45816 and the grant Francqui Research Professor 2023-2026 of the Francqui Foundation with UA Antigoon number 49741.

\printbibliography

\appendix
\section{Tensor Densities}\label{app:tensor}

In this section, we outline the definition and some relevant properties of tensor densities. While densities used for integration ($\lambda=1$) are well-established, comprehensive overviews of general $\lambda$-densities are scarce. For a thorough and accessible recent review, we refer the reader to~\cite{Grandes_Umbert_2024}. To remain consistent with the projective geometry framework central to our work, we follow the terminology and approach established in~\cite{Fuks1986} and~\cite[Chapter 6.3]{projective}.

Let $M$ be a smooth manifold of dimension $m$.
The space of \emph{tensor densities of degree $\lambda \in\mathbb{R}$}, denoted by $F_\lambda$, is the space of sections of the bundle $\left(\bigwedge^m T^*M\right)^\lambda$, i.e. $\mathcal{F}_\lambda = \Gamma(\left(\bigwedge^m T^*M\right)^\lambda)$.
Locally in coordinates, any such generalized tensor $\phi$ can be expressed at a point $x \in M$ as
\begin{equation}
	\phi(x) = f(x)\ \lvert dx_1\wedge \cdots \wedge dx_{m}\rvert^\lambda,
\end{equation}
for some $\lambda\in\mathbb{R}$ and $f\in C^{\infty}(M)$.
A diffeomorphism $g\in \mathrm{Diff}(M)$ acts on a tensor densities $\phi \in \mathcal{F}_\lambda$ by
\begin{equation}
	\label{eqn:transf}
	g^{-1} \phi = \left(Dg\right)^\lambda\ \left(\phi \circ g\right),
\end{equation}
where $Dg$ is the Jacobian of the transformation $g$.

If $\zeta \in \Omega^m(M)$ is a volume form over $M$, a tensor density of degree $\lambda$ can be written in local coordinates as
\begin{equation}
	\phi(x) = f(x)\ \zeta^\lambda.
\end{equation}

If we consider a contact manifold $(M,\eta)$ of dimension {$m = 2n + 1$}, it is interesting to restrict the attention to its group of contactomorphisms~\cite[Chapter 7.5]{projective}.
Since a contactomorphism transforms the natural volume form $\eta \wedge(d\eta)^{\wedge n}$ as~\eqref{eqn:transfvolume} and the contact form $\eta$ as~\eqref{eqn:transfcontactform}, we can view the contact form as a tensor density of degree $\lambda=\frac{1}{n+1}$:
\begin{equation}
	\eta = \left(\eta \wedge (d\eta)^{\wedge n} \right)^{\frac{1}{n+1}}.
\end{equation}
We can then express any tensor density $\phi$ of degree $\lambda \in \mathbb{R}$ on a contact manifold as
\begin{equation}\label{eqn:app:tensorcontactexplaine}
	\phi(x) = f(x) \lvert\eta \wedge (d\eta)^{\wedge n}\rvert^{\lambda} = f(x)\, \eta^{\lambda(n+1)}.
\end{equation}

\section{Symplectic Hamitonian systems}\label{app:sympl}

Let $N$ be an even dimensional manifold and $\omega$ a non degenerate closed two-form, we call the couple $(N,\omega)$ a \emph{symplectic manifold}~\cite{Arnold2009,sympgeo} and $\omega$ a \emph{symplectic form}.

A real function $h\in C^1(N)$ is called an \emph{Hamiltonian function}, usually this function encodes the energy of a mechanical physical system.
Given an Hamiltonian function $h$, the symplectic forms allows to induce a vector field $V_h$ by
\begin{equation}
	\omega(V_h, \cdot) = d h,
\end{equation}
called (symplectic) Hamiltonian vector field.
We call the triple $(N,\omega, h)$ a symplectic Hamiltonian system to distinguish it from the contact Hamiltonian systems studied in the paper.
Since $\omega$ is skew-symmetric, one has
\begin{equation}\label{eq:hconsen}
	\omega(V_h,V_h) = V_h(h) = 0.
\end{equation}
Observing that $\dot H = \omega(V_h, V_h)$, intended as the time derivative along the flow of $V_h$, one can rephrase the identity above as conservation of energy.

\begin{defn}\label{def:liouville-vf}
	A vector field $\Xi$ on a symplectic manifold $(N, \omega)$ is a \emph{Liouville vector field} if it satisfies
	\begin{equation}
		L_{\Xi} \omega = \omega.
	\end{equation}
\end{defn}

\subsection*{Symplectization of a contact manifold}

Given a contact manifold $(M,\Delta)$ of dimension $2n+1$ (see Definition~\ref{def:contactmanifold}), one can naturally associate to it a symplectic manifold of dimension $2n+2$. This construction is called the \emph{symplectization} of $M$.

Recall that $\Delta\subset TM$ is a maximally non-integrable distribution. Consider the submanifold of the cotangent bundle
\begin{equation}
	SM := \bigl\{ (p,\alpha) \in T^*M \setminus \{0\} \;\big|\; \alpha|_{\Delta_p} = 0 \bigr\}.
\end{equation}
That is, $SM$ consists of all nonzero covectors at each point that annihilate the contact distribution. This is a smooth manifold of dimension $2n+2$, and it inherits from $T^*M$ the canonical Liouville 1-form $\theta$. We denote by
\[
	\lambda := \theta|_{SM}
\]
its restriction to $SM$. Then the 2-form $\omega := d\lambda$ is symplectic, making $(SM,\omega)$ the symplectization of $(M,\Delta)$.

If the contact structure is exact, with global 1-form $\eta$, then the symplectization admits a particularly simple global description:
\begin{equation}
	SM \;\cong\; M \times \mathbb{R}, \qquad
	\lambda = e^r \,\eta, \qquad
	\omega = d(e^r\eta) = e^r \left(dr \wedge \eta +  d\eta\right),
\end{equation}
where $r\in\mathbb{R}$ is the fiber coordinate. In this form, it is evident that $(SM,\omega)$ is an exact symplectic manifold with Liouville form $\lambda$.

Additionally, $SM$ carries a natural $\mathbb{R}$-action (scaling of covectors, or equivalently $r \mapsto e^t r$ in the trivialized model) generated by the Liouville vector field
\[
	W = \partial_r,
\]
which is, by construction, free and proper. This scaling action underlies the relation between the contact structure on $M$ and the symplectic geometry of its symplectization: The quotient of the symplectization $SM$ by the scaling action recovers the original contact manifold $(M,\Delta)$, where the contact distribution arises from the restriction of the symplectic form to a hypersurface transverse to $W$.

To each contact Hamiltonian vector field on $M$ we can associate a unique symplectic Hamiltonian vector field on $SM$. If $f \in C^{\infty}(M)$, the contact Hamiltonian vector field $X_f$, defined by~\eqref{eqn:Hamvec} is lifted to a symplectic Hamiltonian vector field $SX_f$ which satisfies:
\begin{equation}
	\pi_*(SX_f) := X_f,
\end{equation}
where $\pi: SM \to M$ is the natural projection $(m,r) \mapsto m$.

This can be used to lift some of the concepts developed in this manuscript and revisit them in the frame of symplectic geometry. See for instance Remark~\ref{rem:sympl}.

\end{document}